		\newcounter{theorem_c} 
		\theoremstyle{plain} 
		\newtheorem{theorem}[theorem_c]{Theorem}
		\newtheorem{lemma}[theorem_c]{Lemma}
		\newtheoremstyle{exampstyle}
		  {2mm} 
		  {2mm} 
		  {\itshape} 
		  {} 
		  {\bfseries} 
		  {.} 
		  {.5em} 
		  {} 
		\theoremstyle{exampstyle}
		\newtheorem{definition}[theorem_c]{Definition}
		\newtheorem*{remark*}{Remark}
\newcommand{\ket}[1]{\vert #1 \rangle} 
\newcommand{\bra}[1]{\langle #1 \vert} 
\newcommand{\braket}[2]{\langle #1 \vert #2 \rangle} 
\newcommand{\id}[1]{{id}_{#1}}
\newcommand{\obj}[1]{\operatorname{obj}\left(#1\right)}
\newcommand{\mor}[1]{\operatorname{mor}\left(#1\right)}
\newcommand{\inputs}[1]{\operatorname{in}\!\left(#1\right)}
\newcommand{\outputs}[1]{\operatorname{out}\!\left(#1\right)}
\newcommand{\nodes}[1]{\operatorname{nodes}\!\left(#1\right)}
\newcommand{\edges}[1]{\operatorname{edges}\!\left(#1\right)}
\newcommand{\suchthat}[2]{\left\{ #1 \middle| #2\right\}}
\newcommand{\Mix}{\textbf{Mix}\xspace}
\newcommand{\DMix}{\textbf{DMix}\xspace}
\newcommand{\dotonly}[1]{%
\,\begin{tikzpicture}[dotpic]
\node [#1] (a) at (0,0) {};
\end{tikzpicture}\,}
\newcommand{\smalldotonly}[1]{%
\,\begin{tikzpicture}[dotpic,yshift=-0.15mm]
\node [#1] (a) at (0,0) {};
\end{tikzpicture}\,}
\newcommand{\smallblackdot}{\smalldotonly{smalldot}\xspace}
\newcommand{\smallwhitedot}{\smalldotonly{small white dot}\xspace}
\newcommand{\graydot}{\dotonly{gray dot}\xspace}
\newcommand{\smallgraydot}{\smalldotonly{small gray dot}\xspace}
\tikzstyle{env}=[copoint,regular polygon rotate=0,minimum width=0.2cm, fill=black]
\tikzstyle{probs}=[shape=semicircle,fill=white,draw=black,shape border rotate=180,minimum width=1.2cm]
\newcommand{\edgearrow}{{\arrow[black]{>}}}
\tikzstyle{every picture}=[baseline=-0.25em,scale=0.6]
\tikzstyle{dotpic}=[] 
\tikzstyle{math matrix}=[matrix of math nodes,left delimiter=(,right delimiter=),inner sep=2pt,column sep=1em,row sep=0.5em,nodes={inner sep=0pt},text height=1.5ex, text depth=0.25ex]
\tikzstyle{normall} = [every edge/.style={normal}]
\tikzstyle{smalln}=[scale=0.6]
\tikzstyle{small}=[scale=0.6, every node/.style={scale=0.6}]
\tikzstyle{smalllabels} = [every node/.style={scale=0.6}]
\tikzstyle{inline text}=[text height=1.5ex, text depth=0.25ex,yshift=0.5mm]
\tikzstyle{label}=[font=\footnotesize,text height=1.5ex, text depth=0.25ex,yshift=0.5mm]
\tikzstyle{left label}=[label,anchor=east,xshift=1.5mm]
\tikzstyle{right label}=[label,anchor=west,xshift=-1.5mm]
\tikzstyle{diredges}=[every to/.style={diredge}]
\tikzstyle{braceedge}=[decorate,decoration={brace,amplitude=2mm,raise=-1mm}]
\tikzstyle{small braceedge}=[decorate,decoration={brace,amplitude=1mm,raise=-1mm}]
\tikzstyle{normal}=[line width=0.9pt]
\tikzstyle{doubled}=[line width=1pt] 
\tikzstyle{boldedge}=[doubled,shorten <=-0.17mm,shorten >=-0.17mm]
\tikzstyle{boldedgegray}=[doubled,gray,shorten <=-0.17mm,shorten >=-0.17mm]
\tikzstyle{singleedgegray}=[gray]
\tikzstyle{semidoubled}=[line width=1.4pt] 
\tikzstyle{semiboldedgegray}=[semidoubled,gray,shorten <=-0.17mm,shorten >=-0.17mm]
\tikzstyle{boxedge}=[semiboldedgegray]
\tikzstyle{boldedgedashed}=[very thick,dashed,shorten <=-0.17mm,shorten >=-0.17mm]
\tikzstyle{vboldedgedashed}=[doubled,dashed,shorten <=-0.17mm,shorten >=-0.17mm]
\tikzstyle{left hook arrow}=[left hook-latex]
\tikzstyle{right hook arrow}=[right hook-latex]
\tikzstyle{sembracket}=[line width=0.5pt,shorten <=-0.07mm,shorten >=-0.07mm]
\tikzstyle{causal edge}=[->,thick,gray]
\tikzstyle{causal nondir}=[thick,gray]
\tikzstyle{timeline}=[thick,gray, dashed]
\tikzstyle{cedge}=[<->,thick,gray!70!white]
\tikzstyle{empty diagram}=[draw=gray!40!white,dashed,shape=rectangle,minimum width=1cm,minimum height=1cm]
\tikzstyle{empty diagram small}=[draw=gray!50!white,dashed,shape=rectangle,minimum width=0.6cm,minimum height=0.5cm]
\tikzstyle{dot}=[inner sep=0mm,minimum width=2mm,minimum height=2mm,draw,shape=circle]  
\tikzstyle{Wsquare}=[white dot, shape=regular polygon, rounded corners=0.8 mm, minimum size=3.3 mm, regular polygon sides=3, outer sep=-0.2mm]
\tikzstyle{Wsquareadj}=[white dot, shape=regular polygon, rounded corners=0.8 mm, minimum size=3.3 mm, regular polygon sides=3, outer sep=-0.2mm, regular polygon rotate=180]
\tikzstyle{ddot}=[inner sep=0mm, doubled, minimum width=2.5mm,minimum height=2.5mm,draw,shape=circle]
\tikzstyle{black dot}=[dot,fill=black]
\tikzstyle{white dot}=[dot,fill=white,,text depth=-0.2mm]
\tikzstyle{white Wsquare}=[Wsquare,fill=gray,,text depth=-0.2mm]
\tikzstyle{white Wsquareadj}=[Wsquareadj,fill=white,,text depth=-0.2mm]
\tikzstyle{green dot}=[white dot] 
\tikzstyle{gray dot}=[dot,fill=gray!40!white,,text depth=-0.2mm]
\tikzstyle{red dot}=[gray dot] 
\tikzstyle{black ddot}=[ddot,fill=black]
\tikzstyle{white ddot}=[ddot,fill=white]
\tikzstyle{gray ddot}=[ddot,fill=gray!40!white]
\tikzstyle{gray edge}=[gray!60!white]
\tikzstyle{small dot}=[inner sep=0.5mm,minimum width=0pt,minimum height=0pt,draw,shape=circle]
\tikzstyle{small black dot}=[small dot,fill=black]
\tikzstyle{small white dot}=[small dot,fill=white]
\tikzstyle{small gray dot}=[small dot,fill=gray!40!white]
\tikzstyle{causal dot}=[inner sep=0.4mm,minimum width=0pt,minimum height=0pt,draw=white,shape=circle,fill=gray!40!white]
\tikzstyle{phase dimensions}=[minimum size=5mm,font=\footnotesize,rectangle,rounded corners=2.5mm,inner sep=0.2mm,outer sep=-2mm]
\tikzstyle{dphase dimensions}=[minimum size=5mm,font=\footnotesize,rectangle,rounded corners=2.5mm,inner sep=0.2mm,outer sep=-2mm]
\tikzstyle{white phase dot}=[dot,fill=white,phase dimensions]
\tikzstyle{white phase ddot}=[ddot,fill=white,dphase dimensions]
\tikzstyle{white rect ddot}=[draw=black,fill=white,doubled,minimum size=5mm,font=\footnotesize,rectangle,rounded corners=2.5mm,inner sep=0.2mm]
\tikzstyle{gray rect ddot}=[draw=black,fill=gray!40!white,doubled,minimum size=6mm,font=\footnotesize,rectangle,rounded corners=3mm]
\tikzstyle{gray phase dot}=[dot,fill=gray!40!white,phase dimensions]
\tikzstyle{gray phase ddot}=[ddot,fill=gray!40!white,dphase dimensions]
\tikzstyle{grey phase dot}=[gray phase dot]
\tikzstyle{grey phase ddot}=[gray phase ddot]
\tikzstyle{small phase dimensions}=[minimum size=4mm,font=\tiny,rectangle,rounded corners=2mm,inner sep=0.2mm,outer sep=-2mm]
\tikzstyle{small dphase dimensions}=[minimum size=4mm,font=\tiny,rectangle,rounded corners=2mm,inner sep=0.2mm,outer sep=-2mm]
\tikzstyle{small gray phase dot}=[dot,fill=gray!40!white,small phase dimensions]
\tikzstyle{small gray phase ddot}=[ddot,fill=gray!40!white,small dphase dimensions]
\tikzstyle{small map}=[draw,shape=rectangle,minimum height=4mm,minimum width=4mm,fill=white]
\tikzstyle{cnot}=[fill=white,shape=circle,inner sep=-1.4pt]
\tikzstyle{asym hadamard}=[fill=white,draw,shape=NEbox,inner sep=0.6mm,font=\footnotesize,minimum height=4mm]
\tikzstyle{asym hadamard conj}=[fill=white,draw,shape=NWbox,inner sep=0.6mm,font=\footnotesize,minimum height=4mm]
\tikzstyle{asym hadamard dag}=[fill=white,draw,shape=SEbox,inner sep=0.6mm,font=\footnotesize,minimum height=4mm]
\tikzstyle{hadamard}=[fill=white,draw,inner sep=0.6mm,font=\footnotesize,minimum height=4mm,minimum width=4mm]
\tikzstyle{small hadamard}=[fill=white,draw,inner sep=0.6mm,minimum height=1.5mm,minimum width=1.5mm]
\tikzstyle{small hadamard rotate}=[small hadamard,rotate=45]
\tikzstyle{dhadamard}=[hadamard,doubled]
\tikzstyle{small dhadamard}=[small hadamard,doubled]
\tikzstyle{small dhadamard rotate}=[small hadamard rotate,doubled]
\tikzstyle{antipode}=[white dot,inner sep=0.3mm,font=\footnotesize]
\tikzstyle{scalar}=[diamond,draw,inner sep=0.5pt,font=\small]
\tikzstyle{dscalar}=[diamond,doubled, draw,inner sep=0.5pt,font=\small]
\tikzstyle{small box}=[rectangle,inline text,fill=white,draw,minimum height=5mm,yshift=-0.5mm,minimum width=5mm,font=\small]
\tikzstyle{small gray box}=[small box,fill=gray!30]
\tikzstyle{medium box}=[rectangle,inline text,fill=white,draw,minimum height=5mm,yshift=-0.5mm,minimum width=10mm,font=\small]
\tikzstyle{square box}=[small box] 
\tikzstyle{medium gray box}=[small box,fill=gray!30]
\tikzstyle{semilarge box}=[rectangle,inline text,fill=white,draw,minimum height=5mm,yshift=-0.5mm,minimum width=12.5mm,font=\small]
\tikzstyle{large box}=[rectangle,inline text,fill=white,draw,minimum height=5mm,yshift=-0.5mm,minimum width=15mm,font=\small]
\tikzstyle{large gray box}=[small box,fill=gray!30]
\tikzstyle{Bayes box}=[rectangle,fill=black,draw, minimum height=3mm, minimum width=3mm]
\tikzstyle{gray square point}=[small box,fill=gray!50]
\tikzstyle{dphase box white}=[dhadamard]
\tikzstyle{dphase box gray}=[dhadamard,fill=gray!50!white]
\tikzstyle{phase box white}=[hadamard]
\tikzstyle{phase box gray}=[hadamard,fill=gray!50!white]
\tikzstyle{point}=[regular polygon,regular polygon sides=3,draw,scale=0.75,inner sep=-0.5pt,minimum width=9mm,fill=white,regular polygon rotate=180]
\tikzstyle{point nosep}=[regular polygon,regular polygon sides=3,draw,scale=0.75,inner sep=-2pt,minimum width=9mm,fill=white,regular polygon rotate=180]
\tikzstyle{copoint}=[regular polygon,regular polygon sides=3,draw,scale=0.75,inner sep=-0.5pt,minimum width=9mm,fill=white]
\tikzstyle{dpoint}=[point,doubled]
\tikzstyle{dcopoint}=[copoint,doubled]
\tikzstyle{pointgrow}=[shape=cornerpoint,kpoint common,scale=0.75,inner sep=3pt]
\tikzstyle{pointgrow dag}=[shape=cornercopoint,kpoint common,scale=0.75,inner sep=3pt]
\tikzstyle{wide copoint}=[fill=white,draw,shape=isosceles triangle,shape border rotate=90,isosceles triangle stretches=true,inner sep=0pt,minimum width=1.5cm,minimum height=6.12mm]
\tikzstyle{wide point}=[fill=white,draw,shape=isosceles triangle,shape border rotate=-90,isosceles triangle stretches=true,inner sep=0pt,minimum width=1.5cm,minimum height=6.12mm,yshift=-0.0mm]
\tikzstyle{wide point plus}=[fill=white,draw,shape=isosceles triangle,shape border rotate=-90,isosceles triangle stretches=true,inner sep=0pt,minimum width=1.74cm,minimum height=7mm,yshift=-0.0mm]
\tikzstyle{wide dpoint}=[fill=white,doubled,draw,shape=isosceles triangle,shape border rotate=-90,isosceles triangle stretches=true,inner sep=0pt,minimum width=1.5cm,minimum height=6.12mm,yshift=-0.0mm]
\tikzstyle{tinypoint}=[regular polygon,regular polygon sides=3,draw,scale=0.55,inner sep=-0.15pt,minimum width=6mm,fill=white,regular polygon rotate=180] 
\tikzstyle{white point}=[point]
\tikzstyle{white dpoint}=[dpoint]
\tikzstyle{green point}=[white point] 
\tikzstyle{white copoint}=[copoint]
\tikzstyle{gray point}=[point,fill=gray!40!white]
\tikzstyle{gray dpoint}=[gray point,doubled]
\tikzstyle{red point}=[gray point] 
\tikzstyle{gray copoint}=[copoint,fill=gray!40!white]
\tikzstyle{gray dcopoint}=[gray copoint,doubled]
\tikzstyle{white point guide}=[regular polygon,regular polygon sides=3,font=\scriptsize,draw,scale=0.65,inner sep=-0.5pt,minimum width=9mm,fill=white,regular polygon rotate=180]
\tikzstyle{black point}=[point,fill=black,font=\color{white}]
\tikzstyle{black copoint}=[copoint,fill=black,font=\color{white}]
\tikzstyle{tiny gray point}=[tinypoint,fill=gray!40!white]
\tikzstyle{diredge}=[decoration={markings, mark=at position 0.55 with \edgearrow},postaction=decorate]
\tikzstyle{ddiredge}=[<->]
\tikzstyle{rdiredge}=[<-]
\tikzstyle{thickdiredge}=[->, very thick]
\tikzstyle{pointer edge}=[->,very thick,gray]
\tikzstyle{pointer edge part}=[very thick,gray]
\tikzstyle{dashed edge}=[dashed]
\tikzstyle{thick dashed edge}=[very thick,dashed]
\tikzstyle{thick gray dashed edge}=[thick dashed edge,gray!40]
\tikzstyle{thick map edge}=[very thick,|->]
\newcommand{\boxshape}[3]{%
\pgfdeclareshape{#1}{
\inheritsavedanchors[from=rectangle] 
\inheritanchorborder[from=rectangle]
\inheritanchor[from=rectangle]{center}
\inheritanchor[from=rectangle]{north}
\inheritanchor[from=rectangle]{south}
\inheritanchor[from=rectangle]{west}
\inheritanchor[from=rectangle]{east}
\backgroundpath{
\southwest \pgf@xa=\pgf@x \pgf@ya=\pgf@y
\northeast \pgf@xb=\pgf@x \pgf@yb=\pgf@y

\@tempdima=#2
\@tempdimb=#3

\pgfpathmoveto{\pgfpoint{\pgf@xa - 5pt + \@tempdima}{\pgf@ya}}
\pgfpathlineto{\pgfpoint{\pgf@xa - 5pt - \@tempdima}{\pgf@yb}}
\pgfpathlineto{\pgfpoint{\pgf@xb + 5pt + \@tempdimb}{\pgf@yb}}
\pgfpathlineto{\pgfpoint{\pgf@xb + 5pt - \@tempdimb}{\pgf@ya}}
\pgfpathlineto{\pgfpoint{\pgf@xa - 5pt + \@tempdima}{\pgf@ya}}
\pgfpathclose
}
}}
\tikzstyle{cloud}=[shape=cloud,draw,minimum width=1.5cm,minimum height=1.5cm]
\tikzstyle{map}=[draw,shape=NEbox,inner sep=2pt,minimum height=6mm,fill=white]
\tikzstyle{dashedmap}=[draw,dashed,shape=NEbox,inner sep=2pt,minimum height=6mm,fill=white]
\tikzstyle{mapdag}=[draw,shape=SEbox,inner sep=2pt,minimum height=6mm,fill=white]
\tikzstyle{mapadj}=[draw,shape=SEbox,inner sep=2pt,minimum height=6mm,fill=white]
\tikzstyle{maptrans}=[draw,shape=SWbox,inner sep=2pt,minimum height=6mm,fill=white]
\tikzstyle{mapconj}=[draw,shape=NWbox,inner sep=2pt,minimum height=6mm,fill=white]
\tikzstyle{medium map}=[draw,shape=NEbox,inner sep=2pt,minimum height=6mm,fill=white,minimum width=7mm]
\tikzstyle{medium map dag}=[draw,shape=SEbox,inner sep=2pt,minimum height=6mm,fill=white,minimum width=7mm]
\tikzstyle{medium map adj}=[draw,shape=SEbox,inner sep=2pt,minimum height=6mm,fill=white,minimum width=7mm]
\tikzstyle{medium map trans}=[draw,shape=SWbox,inner sep=2pt,minimum height=6mm,fill=white,minimum width=7mm]
\tikzstyle{medium map conj}=[draw,shape=NWbox,inner sep=2pt,minimum height=6mm,fill=white,minimum width=7mm]
\tikzstyle{semilarge map}=[draw,shape=NEbox,inner sep=2pt,minimum height=6mm,fill=white,minimum width=9.5mm]
\tikzstyle{semilarge map trans}=[draw,shape=SWbox,inner sep=2pt,minimum height=6mm,fill=white,minimum width=9.5mm]
\tikzstyle{semilarge map adj}=[draw,shape=SEbox,inner sep=2pt,minimum height=6mm,fill=white,minimum width=9.5mm]
\tikzstyle{semilarge map dag}=[draw,shape=SEbox,inner sep=2pt,minimum height=6mm,fill=white,minimum width=9.5mm]
\tikzstyle{semilarge map conj}=[draw,shape=NWbox,inner sep=2pt,minimum height=6mm,fill=white,minimum width=9.5mm]
\tikzstyle{large map}=[draw,shape=NEbox,inner sep=2pt,minimum height=6mm,fill=white,minimum width=12mm]
\tikzstyle{large map conj}=[draw,shape=NWbox,inner sep=2pt,minimum height=6mm,fill=white,minimum width=12mm]
\tikzstyle{very large map}=[draw,shape=NEbox,inner sep=2pt,minimum height=6mm,fill=white,minimum width=17mm]
\tikzstyle{medium dmap}=[draw,doubled,shape=NEbox,inner sep=2pt,minimum height=6mm,fill=white,minimum width=7mm]
\tikzstyle{medium dmap dag}=[draw,doubled,shape=SEbox,inner sep=2pt,minimum height=6mm,fill=white,minimum width=7mm]
\tikzstyle{medium dmap adj}=[draw,doubled,shape=SEbox,inner sep=2pt,minimum height=6mm,fill=white,minimum width=7mm]
\tikzstyle{medium dmap trans}=[draw,doubled,shape=SWbox,inner sep=2pt,minimum height=6mm,fill=white,minimum width=7mm]
\tikzstyle{medium dmap conj}=[draw,doubled,shape=NWbox,inner sep=2pt,minimum height=6mm,fill=white,minimum width=7mm]
\tikzstyle{semilarge dmap}=[draw,doubled,shape=NEbox,inner sep=2pt,minimum height=6mm,fill=white,minimum width=9.5mm]
\tikzstyle{semilarge dmap trans}=[draw,doubled,shape=SWbox,inner sep=2pt,minimum height=6mm,fill=white,minimum width=9.5mm]
\tikzstyle{semilarge dmap adj}=[draw,doubled,shape=SEbox,inner sep=2pt,minimum height=6mm,fill=white,minimum width=9.5mm]
\tikzstyle{semilarge dmap dag}=[draw,doubled,shape=SEbox,inner sep=2pt,minimum height=6mm,fill=white,minimum width=9.5mm]
\tikzstyle{semilarge dmap conj}=[draw,doubled,shape=NWbox,inner sep=2pt,minimum height=6mm,fill=white,minimum width=9.5mm]
\tikzstyle{large dmap}=[draw,doubled,shape=NEbox,inner sep=2pt,minimum height=6mm,fill=white,minimum width=12mm]
\tikzstyle{large dmap conj}=[draw,doubled,shape=NWbox,inner sep=2pt,minimum height=6mm,fill=white,minimum width=12mm]
\tikzstyle{large dmap trans}=[draw,doubled,shape=SWbox,inner sep=2pt,minimum height=6mm,fill=white,minimum width=12mm]
\tikzstyle{large dmap adj}=[draw,doubled,shape=SEbox,inner sep=2pt,minimum height=6mm,fill=white,minimum width=12mm]
\tikzstyle{large dmap dag}=[draw,doubled,shape=SEbox,inner sep=2pt,minimum height=6mm,fill=white,minimum width=12mm]
\tikzstyle{very large dmap}=[draw,doubled,shape=NEbox,inner sep=2pt,minimum height=6mm,fill=white,minimum width=19.5mm]
\tikzstyle{muxbox}=[draw,shape=rectangle,minimum height=3mm,minimum width=3mm,fill=white]
\tikzstyle{dmuxbox}=[muxbox,doubled]
\tikzstyle{box}=[draw,shape=rectangle,inner sep=2pt,minimum height=6mm,minimum width=6mm,fill=white]
\tikzstyle{dbox}=[draw,doubled,shape=rectangle,inner sep=2pt,minimum height=6mm,minimum width=6mm,fill=white]
\tikzstyle{dmap}=[draw,doubled,shape=NEbox,inner sep=2pt,minimum height=6mm,fill=white]
\tikzstyle{dmapdag}=[draw,doubled,shape=SEbox,inner sep=2pt,minimum height=6mm,fill=white]
\tikzstyle{dmapadj}=[draw,doubled,shape=SEbox,inner sep=2pt,minimum height=6mm,fill=white]
\tikzstyle{dmaptrans}=[draw,doubled,shape=SWbox,inner sep=2pt,minimum height=6mm,fill=white]
\tikzstyle{dmapconj}=[draw,doubled,shape=NWbox,inner sep=2pt,minimum height=6mm,fill=white]
\tikzstyle{ddmap}=[draw,doubled,dashed,shape=NEbox,inner sep=2pt,minimum height=6mm,fill=white]
\tikzstyle{ddmapdag}=[draw,doubled,dashed,shape=SEbox,inner sep=2pt,minimum height=6mm,fill=white]
\tikzstyle{ddmapadj}=[draw,doubled,dashed,shape=SEbox,inner sep=2pt,minimum height=6mm,fill=white]
\tikzstyle{ddmaptrans}=[draw,doubled,dashed,shape=SWbox,inner sep=2pt,minimum height=6mm,fill=white]
\tikzstyle{ddmapconj}=[draw,doubled,dashed,shape=NWbox,inner sep=2pt,minimum height=6mm,fill=white]
\tikzstyle{smap}=[draw,shape=sNEbox,fill=white]
\tikzstyle{smapdag}=[draw,shape=sSEbox,fill=white]
\tikzstyle{smapadj}=[draw,shape=sSEbox,fill=white]
\tikzstyle{smaptrans}=[draw,shape=sSWbox,fill=white]
\tikzstyle{smapconj}=[draw,shape=sNWbox,fill=white]
\tikzstyle{dsmap}=[draw,dashed,shape=sNEbox,fill=white]
\tikzstyle{dsmapdag}=[draw,dashed,shape=sSEbox,fill=white]
\tikzstyle{dsmaptrans}=[draw,dashed,shape=sSWbox,fill=white]
\tikzstyle{dsmapconj}=[draw,dashed,shape=sNWbox,fill=white]
\tikzstyle{mmap}=[draw,shape=mNEbox]
\tikzstyle{mmapdag}=[draw,shape=mSEbox]
\tikzstyle{mmaptrans}=[draw,shape=mSWbox]
\tikzstyle{mmapconj}=[draw,shape=mNWbox]
\tikzstyle{mmapgray}=[draw,fill=gray!40!white,shape=mNEbox]
\tikzstyle{smapgray}=[draw,fill=gray!40!white,shape=sNEbox]
\pgfmathsetmacro{\pgf@shorten@left}{\pgfkeysvalueof{/tikz/shorten left}}
\pgfmathsetmacro{\pgf@shorten@right}{\pgfkeysvalueof{/tikz/shorten right}}
\pgfmathsetmacro{\pgf@shorten@left}{\pgfkeysvalueof{/tikz/shorten left}}
\pgfmathsetmacro{\pgf@shorten@right}{\pgfkeysvalueof{/tikz/shorten right}}
\tikzstyle{kpoint common}=[draw,fill=white,inner sep=1pt,minimum height=4mm]
\tikzstyle{kpoint sc}=[shape=cornerpoint,kpoint common]
\tikzstyle{kpoint adjoint sc}=[shape=cornercopoint,kpoint common]
\tikzstyle{kpoint}=[shape=cornerpoint,shorten left=5pt,kpoint common]
\tikzstyle{kpoint adjoint}=[shape=cornercopoint,shorten left=5pt,kpoint common]
\tikzstyle{kpoint conjugate}=[shape=cornerpoint,shorten right=5pt,kpoint common]
\tikzstyle{kpoint transpose}=[shape=cornercopoint,shorten right=5pt,kpoint common]
\tikzstyle{kpoint symm}=[shape=cornerpoint,shorten left=5pt,shorten right=5pt,kpoint common]
\tikzstyle{wide kpoint sc}=[shape=cornerpoint,kpoint common, minimum width=1 cm]
\tikzstyle{wide kpointdag sc}=[shape=cornercopoint,kpoint common, minimum width=1 cm]
\tikzstyle{black kpoint}=[shape=cornerpoint,shorten left=5pt,kpoint common,fill=black,font=\color{white}]
\tikzstyle{black kpoint sm}=[shape=cornerpoint,shorten left=5pt,kpoint common,fill=black,font=\color{white},scale=0.75]
\tikzstyle{black kpoint adjoint}=[shape=cornercopoint,shorten left=5pt,kpoint common,fill=black,font=\color{white}]
\tikzstyle{black kpointadj}=[shape=cornercopoint,shorten left=5pt,kpoint common,fill=black,font=\color{white}]
\tikzstyle{black kpointadj sm}=[shape=cornercopoint,shorten left=5pt,kpoint common,fill=black,font=\color{white},scale=0.75]
\tikzstyle{black dkpoint}=[shape=cornerpoint,shorten left=5pt,kpoint common,fill=black, doubled,font=\color{white}]
\tikzstyle{black dkpoint adjoint}=[shape=cornercopoint,shorten left=5pt,kpoint common,fill=black, doubled,font=\color{white}]
\tikzstyle{black dkpointadj}=[shape=cornercopoint,shorten left=5pt,kpoint common,fill=black, doubled,font=\color{white}]
\tikzstyle{black dkpoint sm}=[shape=cornerpoint,shorten left=5pt,kpoint common,fill=black, doubled,font=\color{white},scale=0.75]
\tikzstyle{black dkpointadj sm}=[shape=cornercopoint,shorten left=5pt,kpoint common,fill=black, doubled,font=\color{white},scale=0.75] 
\tikzstyle{kpointdag}=[kpoint adjoint]
\tikzstyle{kpointadj}=[kpoint adjoint]
\tikzstyle{kpointconj}=[kpoint conjugate]
\tikzstyle{kpointtrans}=[kpoint transpose]
\tikzstyle{big kpoint}=[kpoint, minimum width=1.2 cm, minimum height=8mm, inner sep=4pt, text depth=3mm]
\tikzstyle{wide kpoint}=[kpoint, minimum width=1 cm, inner sep=2pt]
\tikzstyle{wide kpointdag}=[kpointdag, minimum width=1 cm, inner sep=2pt]
\tikzstyle{wide kpointconj}=[kpointconj, minimum width=1 cm, inner sep=2pt]
\tikzstyle{wide kpointtrans}=[kpointtrans, minimum width=1 cm, inner sep=2pt]
\tikzstyle{wider kpoint}=[kpoint, minimum width=1.25 cm, inner sep=2pt]
\tikzstyle{wider kpointdag}=[kpointdag, minimum width=1.25 cm, inner sep=2pt]
\tikzstyle{wider kpointconj}=[kpointconj, minimum width=1.25 cm, inner sep=2pt]
\tikzstyle{wider kpointtrans}=[kpointtrans, minimum width=1.25 cm, inner sep=2pt]
\tikzstyle{gray kpoint}=[kpoint,fill=gray!50!white]
\tikzstyle{gray kpointdag}=[kpointdag,fill=gray!50!white]
\tikzstyle{gray kpointadj}=[kpointadj,fill=gray!50!white]
\tikzstyle{gray kpointconj}=[kpointconj,fill=gray!50!white]
\tikzstyle{gray kpointtrans}=[kpointtrans,fill=gray!50!white]
\tikzstyle{gray dkpoint}=[kpoint,fill=gray!50!white,doubled]
\tikzstyle{gray dkpointdag}=[kpointdag,fill=gray!50!white,doubled]
\tikzstyle{gray dkpointadj}=[kpointadj,fill=gray!50!white,doubled]
\tikzstyle{gray dkpointconj}=[kpointconj,fill=gray!50!white,doubled]
\tikzstyle{gray dkpointtrans}=[kpointtrans,fill=gray!50!white,doubled]
\tikzstyle{white label}=[draw,fill=white,rectangle,inner sep=0.7 mm]
\tikzstyle{gray label}=[draw,fill=gray!50!white,rectangle,inner sep=0.7 mm]
\tikzstyle{black label}=[draw,fill=black,rectangle,inner sep=0.7 mm]
\tikzstyle{dkpoint}=[kpoint,doubled]
\tikzstyle{wide dkpoint}=[wide kpoint,doubled]
\tikzstyle{dkpointdag}=[kpoint adjoint,doubled]
\tikzstyle{wide dkpointdag}=[wide kpointdag,doubled]
\tikzstyle{dkcopoint}=[kpoint adjoint,doubled]
\tikzstyle{dkpointadj}=[kpoint adjoint,doubled]
\tikzstyle{dkpointconj}=[kpoint conjugate,doubled]
\tikzstyle{dkpointtrans}=[kpoint transpose,doubled]
\tikzstyle{kscalar}=[kpoint common, shape=EBox, inner xsep=-1pt, inner ysep=3pt,font=\small]
\tikzstyle{kscalarconj}=[kpoint common, shape=WBox, inner xsep=-1pt, inner ysep=3pt,font=\small]
\tikzstyle{spekpoint}=[kpoint sc,minimum height=5mm,inner sep=3pt]
\tikzstyle{spekcopoint}=[kpoint adjoint sc,minimum height=5mm,inner sep=3pt]
\tikzstyle{dspekpoint}=[spekpoint,doubled]
\tikzstyle{dspekcopoint}=[spekcopoint,doubled]
\tikzset{->-/.style={decoration={
  markings,
  mark=at position .5 with {\arrow{>}}},postaction={decorate}}}
 \tikzstyle{upground}=[circuit ee IEC,thick,ground,rotate=90,scale=1.5]
 \tikzstyle{downground}=[circuit ee IEC,thick,ground,rotate=-90,scale=1.5]
 \tikzstyle{bigground}=[regular polygon,regular polygon sides=3,draw=gray,scale=0.50,inner sep=-0.5pt,minimum width=10mm,fill=gray]
\tikzstyle{arrs}=[-latex,font=\small,auto]
\tikzstyle{arrow plain}=[arrs]
\tikzstyle{arrow dashed}=[dashed,arrs]
\tikzstyle{arrow bold}=[very thick,arrs]
\tikzstyle{arrow hide}=[draw=white!0,-]
\tikzstyle{arrow reverse}=[latex-]
\tikzstyle{cdnode}=[]
\tikzstyle{diredges}=[every to/.style={diredge}]
\tikzstyle{discarding}=[fill=white, draw=black, shape=circle, style=upground]
\tikzstyle{state}=[fill=white, draw=black, style=point, tikzit shape=rectangle]
\tikzstyle{effect}=[fill=white, draw=black, shape=circle]
\tikzstyle{dstate}=[fill=white, draw=black, style=dpoint, tikzit shape=rectangle]
\tikzstyle{deffect}=[fill=white, draw=black, shape=circle, style=dcopoint]
\tikzstyle{dottededge}=[-, dotted]
\tikzstyle{dottededge }=[-, draw=red, tikzit draw=red]
\tikzstyle{double edge}=[-, style=boldedge, draw=black, tikzit draw={rgb,255: red,191; green,0; blue,64}]
\tikzstyle{new edge style 0}=[-]
\tikzstyle{diredge}=[-, style=diredge]
\tikzstyle{new edge style 1}=[-, style=diredge, dashed]
\tikzstyle{discarding}=[fill=white, draw=black, shape=circle, style=upground]
\tikzstyle{state}=[fill=white, draw=black, style=point, tikzit shape=rectangle]
\tikzstyle{effect}=[fill=white, draw=black, shape=circle]
\tikzstyle{dstate}=[fill=white, draw=black, style=dpoint, tikzit shape=rectangle]
\tikzstyle{deffect}=[fill=white, draw=black, shape=circle, style=dcopoint]
\tikzstyle{dottededge}=[-, dotted]
\tikzstyle{dottededge }=[-, draw=red, tikzit draw=red]
\tikzstyle{double edge}=[-, style=boldedge, draw=black, tikzit draw={rgb,255: red,191; green,0; blue,64}]
\tikzstyle{new edge style 0}=[-]
\tikzstyle{diredge}=[-, style=diredge]
\tikzstyle{new edge style 1}=[-, style=diredge, dashed]
\tikzstyle{doubledge}=[-, tikzit draw={rgb,255: red,15; green,0; blue,228}]
\tikzstyle{arrow}=[-, style=->-]
\tikzstyle{arrow dashed}=[-, style=->-, dashed]
\tikzstyle{dashed2}=[-, dashed]
\tikzstyle{dashedbold}=[-, style=boldedge, dashed]
\def\bR{\begin{color}{red}}
\def\bB{\begin{color}{blue}}
\def\bM{\begin{color}{magenta}} 
\def\bC{\begin{color}{cyan}}
\def\bW{\begin{color}{white}}
\def\bBl{\begin{color}{black}}
\def\bG{\begin{color}{green}}
\def\bY{\begin{color}{yellow}}
\def\e{\end{color}\xspace}
\newcommand{\tikzfigscale}[2]{\scalebox{#1}{\tikzfig{#2}}}
\begin{document}

\title{Categorical Semantics for Time Travel}

\author{
	\IEEEauthorblockN{Nicola Pinzani}
	\IEEEauthorblockA{Quantum Group\\
	University of Oxford\\
	\texttt{nicola.pinzani@cs.ox.ac.uk}
	}
	\and
	\IEEEauthorblockN{Stefano Gogioso}
	\IEEEauthorblockA{Quantum Group\\
	University of Oxford\\
	\texttt{stefano.gogioso@cs.ox.ac.uk}
	}
	\and
	\IEEEauthorblockN{Bob Coecke}
	\IEEEauthorblockA{Quantum Group\\
	University of Oxford\\
	\texttt{bob.coecke@cs.ox.ac.uk}
	}
}

\maketitle

\begin{abstract}
	We introduce a general categorical framework to reason about quantum theory and other process theories living in spacetimes where Closed Timelike Curves (CTCs) are available, allowing resources to travel back in time and provide computational speedups. Our framework is based on a weakening of the definition of traced symmetric monoidal categories, obtained by dropping the yanking axiom and the requirement that the trace be defined on all morphisms.
	We show that the two leading models for quantum theory with closed timelike curves---namely the P-CTC model of Lloyd et al. and the D-CTC model of Deutsch---are captured by our framework, and in doing so we provide the first compositional description of the D-CTC model. Our description of the D-CTC model results in a process theory which respects the constraints of relativistic causality: this is in direct contrast to the P-CTC model, where CTCs are implemented by a trace and allow post-selection to be performed deterministically.
\end{abstract}

\section{Introduction}
\label{section:introduction}

\noindent The possibility of traveling back in time, and the many paradoxes associated with its more practical formulations, have fascinated humans for centuries, and the development of Relativity provided a solid mathematical foundation to the concept in the form of Closed Timelike Curves (CTCs) \cite{godel1949,kerr1963,misner1965,bonnor1980}. In the context of quantum computer science, the possibility of time travel acquires a special significance, because models of quantum processes in the presence of CTCs display large computational speedups.

For a detailed discussion of various quantum time-travel models, we refer the interested reader to \cite{allen2014} and references therein. Here, we focus our attention to two specific models: the \emph{D-CTC model} of Deutsch \cite{deutsch1991} and the \emph{P-CTC model} of Lloyd et al. \cite{lloyd2011a,lloyd2011b}.
In the D-CTC model, interaction with quantum systems living on a time loop is captured by a fixed-point operation, while in the P-CTC model the same phenomenon is captured by projection. It was shown \cite{aaronson2005,aaronson2009} that the presence of CTCs brings significant speedups in both cases: in the D-CTC model, quantum computers having access to CTCs can solve all problems of the PSPACE class\footnote{The class of problems which can be decided in polynomial space by deterministic Turing machines.} in polynomial time, while in the P-CTC model the same computers can solve all problems of the PP class\footnote{The class of problems which can be decided in polynomial time by probabilistic Turing machines with an error probability less than 1/2.} in polynomial time (essentially because it can be shown that P-CTCs are equivalent to post-selection \cite{brun2012}). Though no proof of separation between BQP\footnote{The class of problems which can be decided in polynomial time by quantum computers with a single-run error probability of at most 1/3.}, PP and PSPACE exists, the inclusions between the three classes are believed to be strict, so that the use of CTCs in quantum computation would bring significant advantage.

Our approach differs from that of other works on time travel in that we won't debate the possibility of \emph{actors} traveling back in time, but rather we will limit ourselves to \emph{computational resources} traveling back in time. Our actors live in a \emph{chronology-respecting} (CR) region---i.e. in a region of spacetime where the usual Relativistic laws of causality and no-signaling apply---and they interact with time-traveling resources in a local fashion, across a \emph{Cauchy horizon} which separates the CR region from the \emph{chronology-violating} (CV) region, which contains the CTCs. There are several reasons behind this particular take on time travel, but one factor playing a heavy role in our decision was the existence of a large body of research suggesting that Cauchy horizons would not realistically be crossable by anything as heavy as a practicing computer scientist \cite{simpson1973,chandrasekhar1982,morris1988,hawking1992,rauch1993,hintz2017}.

This work fits within the broader framework of \emph{process theories} (aka symmetric monoidal categories) and \emph{categorical quantum mechanics} \cite{abramsky2009,selinger2011,coecke_kissinger_book}; see Appendix \ref{appendix_graphical} for more details about the graphical notation. More specifically, it is part of recent efforts to understand the complex interplay between quantum theory and Relativistic causal structure, initiated by \cite{coecke2013,coecke2016} and recently brought into the spotlight by the work of \cite{kissinger2017} on higher causal structure. Here, we push the envelope and give a rigorous process-theoretic treatment of chronology-violating causal scenarios: as previously mentioned, these are of great interest for their complexity implications on quantum computing, but they are also easy to misunderstand and riddled with paradoxes. The development of sound categorical semantics brings general reasoning about such scenarios back on firm ground.

In Section \ref{section:lloyd}, we briefly introduce a process-theoretic treatment of the P-CTC model based on previous work by Oreshkov and Cerf \cite{oreshkov2015,oreshkov2016} on time-symmetric quantum theory.
In Section \ref{section:deutsch}, we introduce the first process-theoretic treatment of the D-CTC model to date.
In Section \ref{section:semantics}, we introduce categorical semantics for process theories involving time travel, defining certain super-operators---with properties akin to those defining traced monoidal categories---which completely encapsulate the effects of interaction between the chronology-respecting region and the CTCs.

The super-operators defined in Section \ref{section:semantics} are different from traces in two key aspects. Firstly, they do are not required to satisfy the \textit{yanking axiom}, which we show to be violated by the D-CTC model. Secondly, we don't require traces to be defined on all morphisms, but rather only a monoidal sub-category of them: this is a consequence of our requirement that the interaction with CTCs be fully localized in spacetime.

\section{The P-CTC model}
\label{section:lloyd}

\subsection{Single-system Definition}

\noindent In their P-CTC model \cite{lloyd2011a,lloyd2011b}, Lloyd et al.~propose to use post-selection to simulate CTCs, a development inspired by the graphical treatments of quantum teleportation \cite{LE1,svetlichny2011}.
The P-CTC construction was originally designed with unitary interactions in mind, the proposal being to describe the evolution of a chronology-respecting system in a state $\rho$ using the following transformation:
\begin{equation}
	\rho' := \dfrac{E \rho E^\dagger}{Tr[E \rho E^\dagger]}
\end{equation}
where the operator $E$ is defined to be $E \colon = Tr_{CV}[U]$ and $U$ is the unitary interaction between the chronology-respecting (CR) and chronology-violating (CV) regions.
The definition can be straightforwardly extended to CPTP maps, and the evolution is represented in the graphical calculus as follows:
\begin{equation}
	\tikzfigscale{0.7}{staterho}
	\hspace{2mm}
	\mapsto
	\hspace{2mm}
	\frac{1}{Tr\big[Tr_{CV}[f] \circ \rho\big]}
	\tikzfigscale{0.7}{loyd_evolution}
\end{equation}

The transformation $\rho \mapsto Tr_{CV}[f] \circ \rho$ is linear, but it is not generally trace-preserving, so we renormalize the interaction to make sure that the resulting state $\rho'$ is a density matrix. In those cases where renormalization cannot be performed---i.e. when $\rho'$ is the zero state---the model assumes that the interaction simply cannot happen: in this sense, the P-CTC model of CTCs can be thought to post-select those outcomes which do not lead to contradictory time-travel stories. In fact, the P-CTC model is equivalent to post-selection \cite{brun2012}, an observation which we can use to quickly derive a categorical model.

\subsection{A category for the P-CTC model}

\noindent In recent work, Oreshkov and Cerf \cite{oreshkov2015,oreshkov2016} developed a process-theoretic treatment of time-symmetric quantum theory, which we use as a categorical model of quantum theory in the presence of the P-CTCs of the Lloyd model.

\begin{definition}
	The symmetric monoidal category $\Mix_{sym}$ has finite-dimensional Hilbert spaces as its objects. The morphisms $f: \mathcal{A} \rightarrow \mathcal{B}$ in $\Mix_{sym}$ are chosen to be the zero CP map together with all the CP maps satisfying the following condition:
	\begin{equation}
		\label{eqn:condition_mixsym}
		Tr \left[f\left(\dfrac{\mathbb{1}}{d_{\mathcal{A}}}\right)\right] \;=\; 1
	\end{equation}
	that is, written diagrammatically:
	\begin{equation}
		\dfrac{1}{d_{\mathcal{A}}} \ \ \tikzfigscale{0.8}{time_symmetric}
		\hspace{2mm}
		=
		\hspace{2mm}
		\tikzfigscale{0.8}{empty_diagram}
	\end{equation}
	The tensor product in $\Mix_{sym}$ is that of CP maps. Sequential composition in $\Mix_{sym}$ is defined as:
	\begin{equation}
		\label{eqn:rule_composition}
		f \circ_{\Mix_{sym}} g \;:=\; \dfrac{f \circ g}{Tr \left[(f \circ g)\left(\dfrac{\mathbb{1}}{d_{\mathcal{A}}}\right)\right] }
	\end{equation}
	if the normalizability condition $Tr \left[(f \circ g)\left(\dfrac{\mathbb{1}}{d_{\mathcal{A}}}\right)\right] \neq 0$ holds, and defined to be $f \circ_{\Mix_{sym}} g := 0$ otherwise.
\end{definition}

Essentially, $\Mix_{sym}$ provides a categorical model for post-selected quantum theory. Within it, the P-CTC prescription for interaction with CTCs is readily realized as follows:
\begin{equation}
	\tikzfigscale{0.8}{staterho}
	\hspace{3mm}
	\mapsto
	\hspace{3mm}
	\tikzfigscale{0.8}{composition_PCTC}
\end{equation}
The equivalence of P-CTCs with post-selection means that the P-CTC model necessarily violates the Relativistic constraints of causality and no-signaling;  in the parlance of \cite{coecke2013,coecke2016}, the category $\Mix_{sym}$ is not terminal. This violation of the laws of causality may seem natural---even necessary---in the presence of time-travel: in the next Section, we shall see that this is not actually the case.

\section{The D-CTC model}
\label{section:deutsch}

\subsection{Single-system Definition}

\noindent Deutsch's D-CTC model \cite{deutsch1991} describes local interactions between a quantum system in a CR region---which can take part in other operations both in the past and in the future---and a quantum system in a CV region---which is only available as part of the single local interaction at hand. The model as originally formulated is single-system, i.e. it does not directly deal with the issue of composing various interactions in parallel. The generic single-system process in the D-CTC model can be written in the following form, where $\Phi: \mathcal{H} \otimes \mathcal{C} \rightarrow \mathcal{K} \otimes \mathcal{C}$ is a CPTP map: 
\begin{equation}
	\tikzfigscale{0.8}{general}
\end{equation}
The markings on the right denote that the state on quantum system $\mathcal{C}$ emerges from the CTC in the immediate past of the interaction, and re-enters the CTC in its immediate future, never to be accessed again.

Deutsch defines the single-system behavior of an interaction with CTCs as a function\footnote{Not necessarily linear, nor necessarily continuous.} mapping normalized quantum states on $\mathcal{H}$ to normalized quantum states on $\mathcal{K}$. Specifically, given a normalized input state $\rho$ on $\mathcal{H}$, the corresponding output state is defined as follows:
\begin{equation}
	\tikzfigscale{0.8}{generalfixedpoint2}
	\hspace{2mm}
	\mapsto
	\hspace{3mm}
	\tikzfigscale{0.8}{generalfixedpoint3}
\end{equation}
where $\tau$ is the normalized quantum state on $\mathcal{C}$ of maximal entropy amongst those satisfying the following fixed-point equation:
\begin{equation}
	\tikzfigscale{0.8}{generalfixedpoint}
	\hspace{3mm}
	=
	\hspace{3mm}
	\tikzfigscale{0.8}{generalfixedpoint1}
\end{equation}
The existence of such fixed points is a consequence of Brouwer's fixed-point theorem on convex compact subsets of Euclidean space, and the fact that CPTP maps send the convex compact set of normalized states of Hilbert space $\mathcal{C}$ into itself. A geometric characterization of the fixed point of maximal entropy is given in Appendix \ref{appendix:fixedPoint}.

\subsection{The grandfather's paradox}

\noindent In order to show the D-CTC construction at work, we look at the most iconic of all time-travel paradoxes: the \emph{grandfather's paradox}. In the {grandfather's paradox}, a time-traveler goes back in time and kills their own grandfather, thus preventing themselves from being born in the first place: but if they were never born, they never could have gone back in time to kill their own grandfather, and therefore they would have actually been born instead. A basic implementation of the grandfather paradox with qubits and CTCs involves a CNOT gate followed by a swap:
\begin{equation}
\label{equation:timetravelCNOT}
	\tikzfigscale{0.8}{timetravelCNOT}
\end{equation}
The Z basis is be used to encode and measure the state of the ``individuals'' involved in the paradox: the ``time-traveler's grandfather'' is represented by the input qubit in the CR region, the ``time-traveler'' is represented by the output qubit in the CR region. The state $\ket{1}$ is taken to mean ``alive'' and the state $\ket{0}$ is taken to mean ``dead''. the D-CTC model tasks us with solving the following fixed point equation, where the gray $\pi$ dot is the Pauli X gate, i.e.~the unary NOT gate for the Z basis:
\begin{equation}
	\tikzfigscale{0.8}{deutschsolution}
\end{equation}
The only solution to the above fixed-point equation is given by taking $\tau$ to the the maximally mixed state, which we can plug back into the original map:
\begin{equation}
	\tikzfigscale{0.8}{deutschsolutionCR1}
\end{equation}
This yields the following resolution to the grandfather's paradox in the D-CTC model, where the time-traveler is in a totally mixed state of alive and dead at the end of the affair, with 50\% probability of each being true:
\begin{equation}
	\tikzfigscale{0.8}{timetravelCNOT}
	\hspace{5mm}
	=
	\hspace{5mm}
	\scalebox{0.8}{
	\begin{tikzpicture}
		\begin{pgfonlayer}{nodelayer}
			\node [style=downground] (0) at (0, -1) {};
			\node [style=none] (1) at (0, +1) {};
			\node [style=none] (2) at (0.5, -1) {$\frac{1}{2}$};
		\end{pgfonlayer}
		\begin{pgfonlayer}{edgelayer}
			\draw [style=boldedge] (0) to (1);
		\end{pgfonlayer}
	\end{tikzpicture}
	}
\end{equation}

\subsection{Entanglement Breaking}

\noindent The example of the grandfather's paradox involves a single system in the CR region, but what about composite systems? What if we have two qubits in an entangled state living in the CR region? Such an example leads us to discover the first weird property of the D-CTC model: interaction with D-CTCs breaks entanglement. Indeed, consider the following scenario involving two CR systems and one CV system:
\begin{equation}
\label{equation:breakingentanglement}
	\dfrac{1}{2} \ \ \tikzfigscale{0.8}{bell}
	\hspace{3mm}
	\mapsto
	\hspace{3mm}
	\tikzfigscale{0.6}{breakingentanglement}
\end{equation}
The fixed-point equation prescribed by the D-CTC model to evaluate the state above can be expanded as follows:
\begin{equation}
\scalebox{0.7}{$
	\tikzfigscale{1}{tau}
	\hspace{1mm}
	=
	\hspace{1mm}
	\tikzfigscale{1}{breakingentanglementfixedpoint0}
	\hspace{1mm}
	=
	\hspace{1mm}
	\tikzfigscale{1}{breakingentanglementfixedpoint1}
	\hspace{1mm}
	=
	\hspace{1mm}
	\tikzfigscale{1}{breakingentanglementfixedpoint2}
	\hspace{1mm}
	=
	\hspace{1mm}
	\tikzfigscale{1}{breakingentanglementfixedpoint3}
$}
\end{equation}
As a consequence, the state in \ref{equation:breakingentanglement} breaks into the totally mixed state on the two CR systems:
\begin{equation}
	\dfrac{1}{2} \ \ \tikzfigscale{0.8}{bell}
	\hspace{3mm}
	\mapsto
	\hspace{3mm}
	\frac{1}{2}\tikzfigscale{0.8}{noise} \ \ \frac{1}{2}\tikzfigscale{0.8}{noise}
\end{equation}
More generally, it is possible to show that the same map breaks entanglement of all bipartite states:
\begin{equation}
	\tikzfigscale{0.6}{breakingentanglementgeneral1}
	\hspace{1mm}
	=
	\hspace{1mm}
	\tikzfigscale{0.6}{breakingentanglementgeneral2}
	\hspace{-1mm}
	=
	\hspace{1mm}
	\tikzfigscale{0.6}{breakingentanglementgeneral3}
\end{equation}
Coupled with the use of informationally complete measurements, this effect can be used to produce \emph{approximate} clones of an arbitrary quantum state, approaching perfect fidelity as the dimension of the CR system used is increased; for more details on this approach to cloning, see Appendix \ref{appendix:cloning}.

Here, we are more interested in a another, subtler consequence of entanglement breaking, namely the fact that the D-CTC model is not \emph{locally process tomographic}: it is not possible to identify a map by looking at the outputs it gives on single-system inputs. More specifically, there are maps which act identically when they are applied to single-system input states:
\begin{equation} 
	\tikzfigscale{0.8}{localtomography1}
	\hspace{3mm}
	=
	\hspace{3mm}
	\tikzfigscale{0.8}{localtomography2}
	\hspace{5mm}
	\text{for all states $\rho$}
\end{equation}
but act entirely differently when applied to a sub-system of some larger entangled state:
\begin{equation}
\label{equation:failureOfTomography}
	\tikzfigscale{0.6}{localtomography3}
	\hspace{3mm}
	\neq
	\hspace{3mm}
	\tikzfigscale{0.6}{localtomography4}
\end{equation}
This means that, unlike CPTP maps, D-CTC maps are not functions on single-system states, something which we will need to take into account when defining the category \DMix in the next subsection.

\subsection{A category for the D-CTC model}

\noindent In order to define a category \DMix for the D-CTC model, we first need to set some terminology and notation. By an \emph{elementary morphism} $\mathcal{H} \rightarrow \mathcal{K}$ in \DMix we will mean a process in the following form, where $\Phi: \mathcal{H} \otimes \mathcal{C} \rightarrow \mathcal{K} \otimes \mathcal{C}$ is a CPTP map:
\begin{equation}
	\tikzfigscale{0.8}{general}
\end{equation}
There are two pieces of data in the elementary morphism above, the CPTP map $\Phi$ capturing the interaction and the system $\mathcal{C}$ living in the CV region. Because local process tomography fails in the D-CTC model, we cannot rely on functions as a substrate to define our category on: instead, we will define it in terms of sequences of elementary morphisms, quotiented by an appropriate equivalence relation.
\begin{definition}
	The category \DMix has finite-dimensional Hilbert spaces as its objects. The morphisms of \DMix are generated by elementary morphisms, quotiented by the following equivalence relation $\sim$:
	\begin{align}
		\label{equation:equilitymorphisms}
		\tikzfigscale{0.8}{f} \sim \hspace{1mm}\tikzfigscale{0.8}{g}
		\iff
		&\tikzfigscale{0.8}{f=g1} = \tikzfigscale{0.8}{f=g2} \nonumber\\
		&\text{for all $\mathcal{E}$ and all states $\rho$}
	\end{align}
\end{definition}
Concretely, morphisms in \DMix are arbitrary finite sequences of elementary morphisms (with compatible codomains/domains), quotiented by an equivalence relation enforcing the requirement that two morphisms in \DMix be equal if and only if they act the same when applied to arbitrary subsystems of arbitrary entangled states. It should be noted that, while the generic morphisms in \DMix are abstract objects, the states in \DMix are exactly the normalized quantum states, and the effect of applying a morphism to a state can always be computed following Deutsch's prescription. The definition of \DMix as a process theory and its relationship to ordinary quantum theory are established by the following results (see Appendix \ref{appendix:proofs} for their proof).
\newcounter{theorem:DMixSMC}
\setcounter{theorem:DMixSMC}{\value{theorem_c}}
\begin{theorem}
	The category \DMix is a strict symmetric monoidal category equipped with the following tensor product:
	\begin{equation}
		\tikzfigscale{0.8}{monoidal_product}
		\hspace{4mm}
		:=
		\hspace{3mm}
		\tikzfigscale{0.8}{monoidal_product2}
	\end{equation}
	The symmetry isomorphisms are inherited from quantum theory.
\end{theorem}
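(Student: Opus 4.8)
The plan is to verify the symmetric monoidal axioms directly for the merged-loop tensor product, reserving the real work for showing that this operation descends to the quotient by $\sim$. First I would record the \emph{easy structural facts}, all of which hold on the nose at the level of representatives. Since an elementary morphism is specified by a CV system together with a CPTP interaction, and the merged-loop tensor of two elementary morphisms carries the tensor $\mathcal{C}_f \otimes \mathcal{C}_g$ of the two CV systems with interaction $\Phi_f \otimes \Phi_g$ (suitably reordered) routed through a \emph{single} loop, associativity follows from associativity of $\otimes$ on Hilbert spaces: the merged loop does not see how $\mathcal{C}_f \otimes \mathcal{C}_g \otimes \mathcal{C}_h$ is bracketed. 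The monoidal unit is the one-dimensional CV system, for which $f \otimes \mathrm{id}_{\mathbb{C}} = f$ strictly. The symmetry isomorphisms are the quantum-theoretic swaps, viewed as elementary morphisms with trivial (one-dimensional) CV loop; involutivity, the hexagon, and $\mathrm{id}_{\mathcal{A}} \otimes \mathrm{id}_{\mathcal{B}} = \mathrm{id}_{\mathcal{A} \otimes \mathcal{B}}$ are then inherited verbatim from quantum theory.

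For bifunctoriality with respect to composition I would exploit that composition in \DMix is \emph{concatenation of sequences} of elementary morphisms. Presenting $f' \circ f$ and $g' \circ g$ as length-two sequences, the merged-loop tensor acting level-wise produces the sequence $(f \otimes g,\, f' \otimes g')$, which is by definition $(f' \otimes g') \circ (f \otimes g)$; sequences of unequal length are padded with identity elementary morphisms, and since composition with an identity is trivial the placement of the padding is immaterial. Thus the interchange law holds essentially by construction, \emph{once the tensor is known to be well defined on $\sim$-classes}.

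That well-definedness is the crux and is where I expect the main obstacle to lie: I must show that $f \sim f'$, $g \sim g'$ imply $f \otimes g \sim f' \otimes g'$, i.e.\ that the extended action of the merged tensor (its Deutsch output on every subsystem of every entangled state) is a function of the extended actions of the factors alone. The natural route is a \emph{reduction lemma} built on an autonomous-marginal property of the loop channel. For an input $\rho$ on $(\mathcal{H} \otimes \mathcal{C}) \otimes \mathcal{E}$, partial trace over $\mathcal{C}_g$ intertwines the merged loop channel $\tau \mapsto \mathrm{Tr}[(\Phi_f \otimes \Phi_g)(\rho \otimes \tau)]$ with $f$'s single-system loop channel on the reduced input, precisely because $\Phi_g$ is trace-preserving; hence the $\mathcal{C}_f$-marginal of any merged fixed point is a fixed point of $f$'s single-system loop map, and dually for $g$. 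Feeding this back, the $\mathcal{B} \otimes \mathcal{E}$-marginal of the output is exactly $f$'s single-system Deutsch output, and symmetrically for $\mathcal{D} \otimes \mathcal{E}$.

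The delicate step — the one I expect to be genuinely hard — is upgrading this marginal statement to the full joint output, since on correlated inputs the maximal-entropy merged fixed point need not be the product of the constituent fixed points. Two things must be controlled simultaneously: (i) that Deutsch's \emph{maximal-entropy} selection commutes with marginalization, so that the marginals of the maximal-entropy merged fixed point are themselves the maximal-entropy single-system fixed points — here I would invoke the geometric characterization of the maximal-entropy fixed point from Appendix~\ref{appendix:fixedPoint}, together with strict concavity of entropy and subadditivity, to pin the marginals down uniquely; and (ii) that the residual correlations carried by the merged fixed point do not leak into the observable output, which is exactly the \emph{entanglement-breaking} behaviour of D-CTC interactions established earlier. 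Granting these, the extended action of $f \otimes g$ factors through those of $f$ and $g$ — concretely, it amounts to applying one factor's extended action and then the other's to the respective tensor components — giving well-definedness and, with it, the interchange law. I would therefore organise the appendix proof around this single reduction lemma, since every monoidal axiom for \DMix collapses onto it once the structural bookkeeping above is in place.
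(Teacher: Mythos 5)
Your structural bookkeeping (associativity, unit, symmetries, padding of sequences) is fine and matches what the paper leaves implicit, but the step you yourself flag as the crux is not merely delicate---it is false, because the merged-loop tensor you start from is not the operation the theorem is about. Take $\Phi_f = \Phi_g = \mathrm{SWAP}$ on qubits and feed in the Bell state $\Psi$ on $\mathcal{H}_1 \otimes \mathcal{H}_2$. For your merged loop on $\mathcal{C}_f \otimes \mathcal{C}_g$, the loop channel is the constant map $\omega \mapsto \Psi$, so the unique (hence maximal-entropy) fixed point is the Bell state itself, and the merged output on $\mathcal{K}_1 \otimes \mathcal{K}_2$ is again the Bell state. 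Applying the factors one at a time instead, each single-system fixed point is $\frac{1}{2}\mathbb{1}$ and the output is $\frac{1}{2}\mathbb{1} \otimes \frac{1}{2}\mathbb{1}$. The marginals agree---your autonomous-marginal lemma and your point (i) survive---but your point (ii) fails: the correlations carried by the merged fixed point do leak into the joint output. (The entanglement-breaking behaviour you invoke was established in the paper for one particular interaction only; it is not a generic feature of D-CTC maps---e.g.\ the embedding of \Mix into \DMix, a trivial CV loop beside a unitary, preserves entanglement---so it cannot underwrite (ii).) In particular your merged tensor violates interchange, $(\mathrm{id} \otimes g) \circ (f \otimes \mathrm{id}) \neq f \otimes g$, and therefore cannot equip \DMix with a symmetric monoidal structure at all; no strengthening of the reduction lemma can rescue it.

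The paper's proof avoids joint fixed points entirely: the tensor of two elementary morphisms is given by applying them in staggered fashion, each with its own separate CTC and its own single-system fixed-point computation, and the entire content of the theorem reduces to showing that the two possible orders agree on every input $\Psi$, entangled with an arbitrary ancilla $\mathcal{E}$. That exchange identity holds because each elementary D-CTC application is terminal (the interaction is trace-preserving and the fixed point is normalized, so discarding its outputs equals discarding its inputs); consequently the fixed-point equation for each factor's loop collapses to one involving only that factor's local reduced input, which is the same in either order, giving $\tau = \tau'$ and $\sigma = \sigma'$, after which the two sides coincide by simple sliding of CPTP maps. So the repair to your proposal is a change of definition rather than a harder lemma: sequentialize the tensor and prove order-independence, instead of trying to prove merged-versus-sequential equivalence, which the SWAP example above shows to be unprovable.
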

\newcounter{lemma:MixDMix}
\setcounter{lemma:MixDMix}{\value{theorem_c}}
\begin{lemma}
	\label{lemma:MixDMix_label}
	The category \Mix of finite-dimensional Hilbert spaces and CPTP maps is faithfully embedded in \DMix by the following monoidal functor:
	\begin{equation}
		\tikzfigscale{0.8}{embedding1new}
		\hspace{3mm}
		\mapsto
		\hspace{3mm}
		\tikzfigscale{0.8}{embedding2new}
	\end{equation}
\end{lemma}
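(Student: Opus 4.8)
The plan is to show that the assignment $F$ sending a CPTP map $f \colon \mathcal{H} \to \mathcal{K}$ to the elementary morphism whose CV system $\mathcal{C}$ is trivial (one-dimensional) and whose interaction is $\Phi = f$ is a well-defined, strict monoidal, faithful functor. The crucial preliminary observation is that when $\mathcal{C}$ is one-dimensional the fixed-point equation defining Deutsch's prescription has a unique solution, namely the unique state $\tau$ on the trivial system; consequently $F(f)$ acts on every input exactly as the ordinary CPTP map $f$ does. This reduces every computation involving $F(f)$ to an ordinary quantum-theoretic one, and in particular guarantees that $F(f)$ is a legitimate elementary morphism $\mathcal{H} \to \mathcal{K}$ of \DMix.

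First I would verify functoriality. The identity CPTP map is sent to the elementary morphism with trivial $\mathcal{C}$ and identity interaction; by the preliminary observation it acts as the identity on every state on $\mathcal{H} \otimes \mathcal{E}$, and is therefore $\sim$-equivalent to the identity of \DMix. For composition, recall that composing in \DMix amounts to concatenating sequences of elementary morphisms, so $F(g) \circ F(f)$ is the length-two sequence consisting of $F(g)$ after $F(f)$. Applied to an arbitrary state $\rho$ on $\mathcal{H} \otimes \mathcal{E}$ this yields $g(f(\rho))$, which is exactly what the single elementary morphism $F(g \circ f)$ produces; hence the two are identified by $\sim$ and $F(g \circ f) = F(g) \circ F(f)$.

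Next, monoidality follows from the form of the tensor product of \DMix given in the theorem above, which tensors both the interactions and the CV systems. Since the trivial system tensored with itself is again trivial, $F(f) \otimes F(g)$ is the elementary morphism with trivial CV system and interaction $f \otimes g$, i.e.\ precisely $F(f \otimes g)$; the monoidal unit (the one-dimensional Hilbert space) is preserved on the nose, and the symmetries, being inherited from quantum theory on both sides, are matched automatically.

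The main obstacle is faithfulness, and the key point is that, in sharp contrast with the D-CTC model itself, ordinary quantum theory \emph{is} tomographic with respect to entangled inputs---this is exactly the content of the Choi--Jamio\l{}kowski isomorphism. Suppose $F(f) \sim F(g)$. Unfolding the definition of $\sim$, and using the preliminary observation that $F(f)$ and $F(g)$ act as $f$ and $g$, we obtain $(f \otimes \id{\mathcal{E}})(\rho) = (g \otimes \id{\mathcal{E}})(\rho)$ for every ancilla $\mathcal{E}$ and every state $\rho$ on $\mathcal{H} \otimes \mathcal{E}$. Specializing to $\mathcal{E} = \mathcal{H}$ with $\rho$ the normalized maximally entangled state shows that $f$ and $g$ have identical Choi matrices, whence $f = g$ as CPTP maps. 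Thus $F$ is injective on each hom-set, which is the desired faithfulness. It is worth emphasizing that this argument relies precisely on the feature that \emph{fails} for generic \DMix morphisms: the embedded image of \Mix consists exactly of those morphisms for which the failure of local tomography discussed above does not occur.
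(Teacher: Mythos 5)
Your proof is correct, and its overall structure matches the paper's: the key observation that a trivial (one-dimensional) CV system makes Deutsch's fixed-point equation vacuous, so that $F(f)$ acts on every (bipartite) input exactly as $f \otimes \id{}$ does, is precisely the engine of the paper's argument for well-definedness, functoriality and monoidality. The one place you genuinely diverge is faithfulness. The paper argues contrapositively and more elementarily: if $f \neq g$ as CPTP maps on a finite-dimensional space, then (since density matrices span the whole operator space) there is already a \emph{single-system} state $\rho$ with $f(\rho) \neq g(\rho)$, and feeding $\rho$ with trivial ancilla into the defining relation $\sim$ immediately gives $F(f) \not\sim F(g)$ --- no entanglement and no Choi--Jamio\l{}kowski isomorphism are needed. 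You instead unfold $F(f) \sim F(g)$ into agreement of $f \otimes \id{\mathcal{E}}$ and $g \otimes \id{\mathcal{E}}$ on all bipartite states and specialize to the maximally entangled state to compare Choi matrices. Both arguments are valid; the paper's buys economy of means (it only uses that \Mix morphisms are determined by their action on states of their domain), while yours makes explicit the conceptual contrast with the failure of local process tomography in \DMix at large, which is a nice observation. One caveat: your closing claim that the image of \Mix consists \emph{exactly} of the morphisms for which local tomography does not fail is an unproven aside --- it plays no role in the proof, but as stated it is a characterization you have not established, so it should either be justified or dropped.
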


Given the presence of CTCs, and having previously looked at the P-CTC model, one might expect some kind of causal breakdown in the passage from ordinary quantum theory to the D-CTC model. This turns out not to be the case: as the next result shows, the category \DMix respects Relativistic causality constraints (see Appendix \ref{appendix:proofs} for the proof).
\newcounter{theorem:DMixTerminal}
\setcounter{theorem:DMixTerminal}{\value{theorem_c}}
\begin{lemma}
	The category \DMix is terminal (in the parlance of \cite{coecke2013,coecke2016}), i.e. on any given system $\mathcal{H}$ the only effect is the discarding map inherited from ordinary quantum theory:
	\begin{equation}
		\tikzfigscale{0.8}{discarding}
	\end{equation}
	As a consequence, \DMix respects both no-signaling and causality (aka no-signaling from the future).
\end{lemma}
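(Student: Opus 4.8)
The plan is to show that every effect $\mathcal{H} \to I$ in \DMix coincides with the discarding map $\operatorname{Tr}_{\mathcal H}$ inherited from ordinary quantum theory (via the embedding of Lemma \ref{lemma:MixDMix_label}), from which terminality, and hence no-signalling and causality, follow by the standard correspondence between a terminal unit and relativistic causal constraints in \cite{coecke2013,coecke2016}. The essential subtlety is that, because \DMix fails to be locally process tomographic (Eq.~\ref{equation:failureOfTomography}), it does \emph{not} suffice to check that an effect acts as discarding on single-system states: by the very definition of the equivalence $\sim$, I must verify agreement after tensoring with an arbitrary environment $\mathcal{E}$ and feeding in an arbitrary, possibly entangled, state. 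Since every morphism of \DMix is a finite sequence of elementary morphisms and $\sim$ is a congruence, the whole statement reduces to a single claim about elementary morphisms, which I would formulate as: for every elementary morphism $f \colon \mathcal{A} \to \mathcal{B}$ one has $\operatorname{Tr}_{\mathcal B} \circ f \sim \operatorname{Tr}_{\mathcal A}$.

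First I would unfold Deutsch's prescription for $f \otimes \id{\mathcal E}$ applied to a state $\rho$ on $\mathcal{A} \otimes \mathcal{E}$, where $f$ is built from a CPTP interaction $\Phi \colon \mathcal{A} \otimes \mathcal{C} \to \mathcal{B} \otimes \mathcal{C}$. The model selects a normalized fixed point $\tau$ on $\mathcal{C}$ with $\tau = \operatorname{Tr}_{\mathcal{B} \otimes \mathcal{E}}\big[(\Phi \otimes \id{\mathcal E})(\rho \otimes \tau)\big]$ and returns the output $\sigma = \operatorname{Tr}_{\mathcal C}\big[(\Phi \otimes \id{\mathcal E})(\rho \otimes \tau)\big]$ on $\mathcal{B} \otimes \mathcal{E}$. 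Post-composing with the discarding map amounts to computing $\operatorname{Tr}_{\mathcal B}[\sigma] = \operatorname{Tr}_{\mathcal{B} \otimes \mathcal{C}}\big[(\Phi \otimes \id{\mathcal E})(\rho \otimes \tau)\big]$. Here I would use the single key fact that $\Phi$ is trace-preserving, so that $\operatorname{Tr}_{\mathcal{B} \otimes \mathcal{C}} \circ \Phi = \operatorname{Tr}_{\mathcal{A} \otimes \mathcal{C}}$ and hence $\operatorname{Tr}_{\mathcal{B} \otimes \mathcal{C}} \circ (\Phi \otimes \id{\mathcal E}) = \operatorname{Tr}_{\mathcal{A} \otimes \mathcal{C}} \otimes \id{\mathcal E}$; applying this to $\rho \otimes \tau$ and using the normalization $\operatorname{Tr}_{\mathcal C}[\tau] = 1$ collapses the whole expression to $\operatorname{Tr}_{\mathcal A}[\rho]$. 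Thus $(\operatorname{Tr}_{\mathcal B} \circ f) \otimes \id{\mathcal E}$ sends $\rho$ to its reduced state $\operatorname{Tr}_{\mathcal A}[\rho]$ on $\mathcal{E}$, which is exactly the action of $\operatorname{Tr}_{\mathcal A} \otimes \id{\mathcal E}$; since $\mathcal{E}$ and $\rho$ were arbitrary this proves the key claim. Note that the precise value of the fixed point $\tau$ is irrelevant here --- only its normalization enters --- which is the structural reason causality survives in the D-CTC model, in contrast with the renormalizing post-selection of the P-CTC model.

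Finally I would lift the elementary claim to all of \DMix. Writing a general effect as a composite $f_n \circ \cdots \circ f_1$ of elementary morphisms ending at the unit $I$, and using $\operatorname{Tr}_I = \id{I}$, repeated application of the key claim together with the congruence property of $\sim$ peels the morphisms off one at a time, $f_n \circ \cdots \circ f_1 \sim \operatorname{Tr}_{\mathcal{H}_{n-1}} \circ f_{n-1} \circ \cdots \circ f_1 \sim \cdots \sim \operatorname{Tr}_{\mathcal H}$, an induction on the length of the sequence. This establishes that the discarding map is the unique effect on each system, i.e.~that $I$ is terminal, and no-signalling and causality (no-signalling from the future) then follow immediately in the framework of \cite{coecke2013,coecke2016}. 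I expect the only genuine obstacle to be conceptual rather than computational: one must resist inferring terminality from the single-system action alone, which would be fallacious given the failure of local tomography, and instead route everything through the environment-indexed equivalence $\sim$, verifying carefully both that the elementary-level identity is really a statement about $\sim$ and that $\sim$ is a congruence so the inductive peeling is legitimate.
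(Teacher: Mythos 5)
Your proof is correct and follows essentially the same route as the paper's: reduce terminality to the claim that discarding after an elementary morphism is $\sim$-equivalent to discarding, then verify this against an arbitrary bipartite state using only trace-preservation of the interaction $\Phi$ and normalization of the fixed point $\tau$. The paper leaves the reduction to elementary morphisms and the congruence/induction step implicit (``by definition of \DMix, it suffices\dots''), whereas you spell them out, but the mathematical content is identical.
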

\noindent The fact that Relativistic constraints such as no-signaling and causality are respected in the presence of CTCs is certainly surprising, and is at odds with both folklore and existing literature on the D-CTC model: in the next subsection, we explain this fundamental misunderstanding in terms of a phenomenon known as the ``linearity trap''.

\subsection{The Linearity Trap}

\noindent In the presence of D-CTCs, maps are no longer necessarily linear, nor necessarily continuous: for concrete examples displaying such behaviors, see Appendix \ref{appendix:deutschFeatures}. One of the important consequences of such change in behavior is what \cite{bennett2009} calls the \emph{linearity trap}: this is the derivation of incorrect conclusions from the assumption, implicit in many operational arguments about quantum measurement and preparation, that quantum processes behave linearly, i.e. that they respect classical probabilistic mixtures. The linearity trap is the reason behind some claims in the literature that the D-CTC model allows superluminal signaling---hence violating Relativity---or even worse that it is mathematically inconsistent.

In \cite{brun2009}, the authors describe the following quantum circuit using D-CTCs which allows to perfectly distinguish between the four non-orthogonal states $\ket{\psi} = \ket{0}, \ket{1}, \ket{+}, \ket{-}$:
\begin{equation}
	\label{equation:DCTCdistinguishNonOrthogonal}
	\tikzfigscale{0.8}{circuit_bub}
\end{equation}
In the circuit above, the unitary gates $U_{ij}$ are activated if the two qubits are in the state $\ket{i} \ket{j}$: $U_{00} \equiv SWAP, U_{01} \equiv X \otimes X, U_{10} \equiv (X \otimes I) \circ (H \otimes I), U_{11} \equiv (X \otimes H) \circ (SWAP)$. The circuit yields the following mapping, where each of the two measurements is done in the computational basis and results in a bit of output (so that output is a 2-bit string $ab$):
\begin{align}
	\ket{0} \ket{0} &\mapsto 00,  \ \ \ket{+}\ket{0} \mapsto 10 \nonumber\\
	\ket{1} \ket{0} &\mapsto 01,  \ \ \ket{-}\ket{0} \mapsto 11
\end{align}
The possibility of distinguishing non-orthogonal states is certainly odd, coming from a traditional quantum theoretical perspective, but no more so in our opinion than the absence of linearity or continuity.

In \cite{bub2014}, however, the authors propose that the possibility of distinguishing non-orthogonal states could be used to obtain superluminal signaling. They even venture that the same construction could be used to distinguish between totally mixed states which have been formed by mixture of the Z basis states and by mixture of the X basis states, i.e. between $0.5\ket{0}\bra{0}+0.5\ket{1}\bra{1}$ and $0.5\ket{+}\bra{+}+0.5\ket{-}\bra{-}$: because these are the same exact state, the authors venture that D-CTCs might actually be mathematically inconsistent. In their argument, one considers the following bipartite scenario, where Alice and Bob share an entangled state and $\chi$ is the map described in \ref{equation:DCTCdistinguishNonOrthogonal} (CTC part not shown):
\begin{equation}
	\tikzfigscale{0.8}{discriminate_entanglement}
\end{equation}
By measuring Alice's qubit in the Z basis, the authors claim to obtain the following two scenarios with equal probability:
\begin{equation}
\tikzfigscale{0.8}{scenario_Z1} \ \ \hspace{2mm}\text{ or }\hspace{2mm} \ \ \tikzfigscale{0.8}{scenario_Z2}
\end{equation}
Similarly, by measuring Alice's qubit in the X basis, the authors claim to obtain the following two scenarios with equal probability:
\begin{equation}
\tikzfigscale{0.8}{scenario_X1} \ \ \hspace{2mm}\text{ or }\hspace{2mm} \ \ \tikzfigscale{0.8}{scenario_X2}
\end{equation}
By definition of the $\chi$ map, this would allow Bob to determine whether Alice measured her qubit in the Z basis (output strings $00$ and $01$) or in the X basis (output strings $10$ and $11$)---as well as to learn which outcome she obtained---without any need for communication between them to occur. This would indeed be a violation of no-signaling, but the worrying consequences wouldn't stop at that. In both cases, the state on Bob's left qubit would be the 1-qubit totally mixed state, and the outputs of Bob's measurements would allow him to determine whether the totally mixed state was prepared as $0.5\ket{0}\bra{0}+0.5\ket{1}\bra{1}$ (if Alice measured in the Z basis) or as  $0.5\ket{+}\bra{+}+0.5\ket{-}\bra{-}$ (if Alice measured in the X basis). What is going on here? 

What actually happens in the bipartite scenario above is correctly described by the following diagrams, where the $\Phi$ map is the controlled measurement performed by Alice (measurement in the Z basis for $i=0$, Hadamard followed by measurement in the Z basis for $i=1$):
\begin{equation}
	\tikzfigscale{0.8}{linearitytrap}
	\hspace{3mm}
	=
	\hspace{3mm}
	\tikzfigscale{0.8}{linearitytrap1}
\end{equation}
While it is indeed true that the discarding map on the left can be decomposed as a sum of the two effects $\bra{0} \square \ket{0}$ and $\bra{1} \square \ket{1}$ for the Z basis states, the failure of linearity means that the sum cannot in general be taken out of the dotted box:
\begin{align}
	& \tikzfigscale{0.8}{linearitytrap2} \nonumber \\
	\neq
	\hspace{4mm}
	& \tikzfigscale{0.8}{linearitytrap3}
	\hspace{2mm}
	+
	\hspace{2mm}
	\tikzfigscale{0.8}{linearitytrap4}
\end{align}
This means that it is neither possible to signal, compatibly with the fact that the category \DMix is terminal, nor is it possible to distinguish between the different ways in which the totally mixed state could have been prepared (a fact which would indeed make the model mathematically ill-defined). The lesson we take from the linearity trap is this: when working in \DMix, one should take care to avoid reasoning about quantum measurements by case analysis of individual measurement outcomes.

\section{Categorical semantics for time travel}
\label{section:semantics}

\noindent When causal scenarios in process theories are depicted diagrammatically, it is easy to conflate boxes with processes happening locally at events (i.e. points in spacetime), and wires with the information flow establishing the causal relationships between them. While this practice follows a natural and notationally pleasant convention, it is not mathematically well-founded: except in some specific situations (e.g. the causal categories of \cite{coecke2013}), there need not be a canonical way to decide how a process should be decomposed into sub-processes compatibly with a given causal structure.

When talking about such causal scenarios, one is really combining two distinct ingredients:
\begin{enumerate}
	\item[(i)] a causal graph, representing the events in the scenario and the causal relationship between them;
	\item[(ii)] a map assigning each event in the scenario to the process happening there.
\end{enumerate}
In the usual chronology-respecting (CR) scenarios, where causal graphs are (non-transitive) directed acyclic graphs, the combination of these two ingredients does not pose much of a challenge: a process is associated to each event and outputs/inputs of the process are connected by wires following the edges of the graph (see later for a precise characterization of this procedure). In scenarios involving chronology-violating (CV) regions, however, there is a problem: no prescription exists in a generic symmetric monoidal category for what it means to connect processes in a cycle.

One solution to this problem would be to work in a \emph{compact-closed} symmetric monoidal category: this is exactly what is done in the P-CTC model. A more elegant solution would be to work in a \emph{traced} symmetric monoidal category, with the trace used to close output/input loops: unfortunately, we will soon see that the axioms for a trace are too strong, and that the D-CTC model fails a crucial few of them. To solve this issue, we adopt a more general framework in which the process of applying a CTC is encapsulated by a super-operator satisfying certain trace-like axioms, and we show that this provides sound semantics to the idea of associating processes to events on causal graphs with cycles.

\subsection{Causal graphs}

\noindent The idea of representing discrete scenarios in spacetime by specifying the causal relationship between events is based on a celebrated result by Malament \cite{malament1977}, stating that knowledge of the causal order between all events is enough to reconstruct a past- and future-distinguishing spacetime up to conformal equivalence. This forms the basis of discrete and order-theoretic approaches to Relativity, such as \emph{causal sets} \cite{bombelli1987,rideout1999} and domains \cite{martin2006}.

The events in our scenarios will be discrete, and will correspond to processes happening locally at those events. Even though we will make no such specification, one could take an operational perspective and think of a set of observers well separated in space and/or time, each quickly performing an operation in their local laboratory. Our definition of causal graphs takes inspiration from the definition of causal sets.

\begin{definition}
	A \emph{causal set} $(C, \preceq)$ is a set $C$ endowed with a partial order $\preceq$ which is \emph{locally finite}, i.e. such that for every $x,z \in C$ there are finitely many $y \in C$ such that $x \preceq y \preceq z$.
\end{definition}
\noindent The points of a causal sets represent discrete events in spacetime, with $x \preceq y$ if and only if $x$ causally precedes $y$ in spacetime (i.e. there is a lightlike curve starting at $x$ and ending at $y$). This means that the set $\suchthat{y \in C}{x \preceq y \preceq z}$ is the causal diamond between two events $x$ and $z$, so that the local finiteness condition of causal sets means that every causal diamond in spacetime contains finitely many events in the set.

Our generalization from causal sets to what we will call ``causal graphs'' is due to the need to capture CV scenarios, where the existence of cycles makes the definition as partial order no longer tenable. We will thus take the following equivalent characterization of causal sets as our starting point (see Appendix \ref{appendix:proofs} for the proof).
\newcounter{lemma:causalSetsNontransAcyclicDiraphs}
\setcounter{lemma:causalSetsNontransAcyclicDiraphs}{\value{theorem_c}}
\begin{lemma}
	A causal set is the same as a non-transitive\footnote{i.e. if we $x \rightarrow y$ and $y \rightarrow z$ then $x \not{\hspace{-2.5pt}\rightarrow} z$.} acyclic digraph (directed graph).
\end{lemma}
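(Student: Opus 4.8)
The plan is to establish the equivalence by exhibiting two mutually inverse constructions and checking that each preserves the relevant structure. In one direction I would send a causal set $(C,\preceq)$ to its \emph{covering digraph} (Hasse diagram) $H(C)$: the digraph on vertex set $C$ with an edge $x \to y$ precisely when $x \prec y$ and no $z$ satisfies $x \prec z \prec y$. In the other direction I would send a non-transitive acyclic digraph $G$ to the reflexive--transitive closure $\preceq_G$ of its edge relation. The body of the proof then splits into (a) showing each construction lands in the intended class, and (b) showing the two are mutually inverse.

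For (a), forward: $H(C)$ is acyclic because its edges refine the strict order $\prec$, which is irreflexive and antisymmetric, so a directed cycle would force $x \prec x$; and it is non-transitive because a composite of covers $x \to y \to z$ yields $x \prec z$ \emph{witnessed} by the intermediate $y$, so $x \to z$ cannot itself be a cover. Backward: $\preceq_G$ is reflexive and transitive by construction, while antisymmetry is exactly the acyclicity of $G$, since $x \preceq_G y \preceq_G x$ with $x \neq y$ would close a directed cycle. These verifications are routine once the definitions are unwound.

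The heart of the argument — and the step I expect to be the main obstacle — is mutual inverseness, which is where \emph{local finiteness} becomes indispensable. Recovering $\preceq$ from $H(C)$ amounts to proving that $x \prec y$ if and only if there is a finite directed path of covers from $x$ to $y$. The ``if'' direction is immediate; the ``only if'' direction fails without local finiteness, as a dense order such as $\mathbb{Q}$ has no covers at all and hence an edgeless, information-destroying Hasse diagram. Here I would use local finiteness to deduce that the causal diamond $\suchthat{z}{x \preceq z \preceq y}$ is finite, and then build a maximal chain of covers from $x$ to $y$ by induction on the size of this diamond: finiteness guarantees both the existence of a cover of $x$ below $y$ and the termination of the induction, producing the required covering path.

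Finally I would treat the reverse composite, recovering $G$ as the Hasse diagram of $(C,\preceq_G)$. One inclusion (covers of $\preceq_G$ are edges of $G$) is automatic from acyclicity, and the other (edges of $G$ are covers) is where non-transitivity is invoked — though I would flag that the footnote's length-two condition must here be read in its full transitively-reduced form, since a redundant edge $x \to y$ could in principle be bypassed by a directed path of length greater than two. The second delicate point to flag is that local finiteness of $\preceq_G$ is a genuine constraint on $G$: an acyclic, transitively reduced digraph with an infinite fan of intermediate vertices between two fixed points has a non-locally-finite closure. The bijection is therefore correctly stated only after the local-finiteness condition is transported to the digraph side, after which the two constructions are confirmed to be mutually inverse and the equivalence follows.
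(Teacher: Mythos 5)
Your two constructions (covering digraph one way, reflexive--transitive closure the other) are exactly the paper's, so the overall route is the same; the interesting difference is that the two caveats you flag at the end are not excess caution --- they are genuine gaps in the paper's own proof, which your version repairs. The paper asserts (i) that the maps are ``evidently inverse of each other'' and (ii) that non-transitivity of the digraph implies local finiteness of the closure, and both assertions fail under the footnote's length-two reading of non-transitivity. For (i), take the digraph on $\{x,a,b,y\}$ with edges $x \to a$, $a \to b$, $b \to y$ and $x \to y$: it is acyclic and satisfies the length-two condition (no edge is shortcut by a two-step path), yet $x \to y$ is not a cover of the closure, so the round trip digraph $\rightsquigarrow$ poset $\rightsquigarrow$ covering digraph deletes that edge; equivalently, this digraph and the one without $x \to y$ have the same closure, so the digraph-to-poset map is not even injective on the stated class. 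For (ii), the ``infinite fan'' with vertices $x, y, z_i$ ($i \in \mathbb{N}$) and edges $x \to z_i$, $z_i \to y$ is acyclic, satisfies the length-two condition, and is even fully transitively reduced, yet the diamond $\suchthat{w}{x \preceq w \preceq y}$ in its closure is infinite, so the closure is not a causal set. Hence the lemma is literally correct only after strengthening ``non-transitive'' to ``transitively reduced'' (no edge bypassed by any longer directed path) and transporting local finiteness to the digraph side, exactly as you propose. Your induction on the size of the causal diamond to produce a covering path from $x$ to $y$ is also the right way to make rigorous the paper's one-line claim that local finiteness lets $\prec$ be recovered as the transitive closure of the covering relation; the paper omits this argument entirely.
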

Generalizing the description above to CV scenarios is simply a matter of dropping the ``acyclic'' requirement. However, we also need to add a framing to our graphs, to deal with the fact that tensor product of objects in symmetric monoidal categories is intrinsically ordered.
\begin{definition}
A \emph{framed causal graph} $\Gamma$ is a non-transitive digraph equipped with the following data:
\begin{itemize}
	\item a sub-set $\inputs{\Gamma}$ of the nodes of $\Gamma$---the \emph{input nodes}---such that each $i \in \inputs{\Gamma}$ has zero incoming edges and a single outgoing edge;
	\item a sub-set $\outputs{\Gamma}$ of the nodes of $\Gamma$---the \emph{output nodes}---such that each $o \in \outputs{\Gamma}$ has zero outgoing edges and a single incoming edge;
	\item a \emph{framing} for $\Gamma$, which consists of the following:
	\begin{itemize}
		\item a total order on $\inputs{\Gamma}$;
		\item a total order on $\outputs{\Gamma}$;
		\item for each node $x \in \Gamma$, a total order on the edges outgoing from $x$, compatible with the total order on $\outputs{\Gamma}$, where relevant;
		\item for each node $x \in \Gamma$, a total order on the edges incoming to $x$, compatible with the total order on $\inputs{\Gamma}$, where relevant.
	\end{itemize}
\end{itemize}
We say that a framed causal graph is \emph{chronology-respecting (CR)} if it is acyclic, and \emph{chronology-violating (CV)} otherwise. Nodes which are neither input nor output are referred to as \emph{internal}, and denoted by $\nodes{\Gamma}$.
\end{definition}

Below are two examples of (finite) framed causal graphs: the left one CR, the right one CV.
\[
	\tikzfig{framed_causal_graph}
	\hspace{1.2cm}
	\tikzfig{framed_causal_graph_cyclic}
\]
Framed causal graphs form a symmetric monoidal category, with natural numbers as objects and framed causal graphs $G$ with $\#\inputs{G} = n$ and $\#\outputs{G} = m$ as morphisms $n \rightarrow m$. Sequential composition $H \circ G$ is done by gluing $\outputs{G}$ with $\inputs{H}$, while parallel composition $H \oplus G$ is done by stacking the two graphs side by side, with the sum total orders of inputs and outputs. See Appendix \ref{appendix:framedCausalGraphsSMC} for the full details of the construction.
\begin{definition}
	A \emph{diagram over a causal graphs $\Gamma$} in a process theory $\textbf{C}$ is a pair of maps $(\alpha,\beta)$ as follow:
	\begin{itemize}
		\item a map $\alpha: \edges{\Gamma} \rightarrow \obj{\textbf{C}}$ assigning an object of $\textbf{C}$ to each edge of the graph;
		\item a map $\beta: \nodes{\Gamma} \rightarrow \mor{\textbf{C}}$ assigning a morphism of $\textbf{C}$ to each internal node of the graph, such that if $(e_1, \ldots, e_n)$ are the ordered incoming edges and $(f_1, \ldots, f_m)$ are the ordered outgoing edges of a node $v$, then the morphism $\beta(v)$ is of type:
		\[
			\beta(v): \bigotimes_i \alpha(e_i) \rightarrow \bigotimes_j \alpha(f_j)
		\]
	\end{itemize}
\end{definition}

\subsection{Diagrams over CR causal graphs}

\noindent It is really straightforward to interpret diagrams over CR causal graphs as morphisms. From here on, we assume without loss of generality that the process theory $\textbf{C}$ under consideration is a strict symmetric monoidal category.
\begin{definition}
\label{definition:morphismsOverCR}
	Let $(\alpha,\beta)$ be a diagram over a CR causal graph $\Gamma$ in a process theory $\textbf{C}$. The \emph{morphism defined by the diagram $(\alpha,\beta)$} is the morphism
	\begin{equation}
		\Phi_{(\alpha,\beta)}: \bigotimes_{i \in \inputs{\Gamma}} \alpha(!_i) \rightarrow \bigotimes_{o \in \outputs{\Gamma}} \alpha(!_o),
	\end{equation}
	(where $!_i$ is the unique edge outgoing from an input node $i$ and $!_o$ is the unique edge incoming into an output node $o$) defined in the obvious way by joining outputs and inputs of the processes specified by $\beta$ along the edges of the graph $\Gamma$.
\end{definition}

Below is an example associating four processes $f$, $g$, $h$ and $k$ to the four internal nodes of a CR causal graph:
\[
	\tikzfig{CPgraphsnomaps}
	\hspace{3mm}
	\mapsto
	\hspace{3mm}
	\tikzfig{CRgraphFHilb}
\]
In this example, the process associated to the diagram over the CR causal graph is as follows:
\[
	k \circ h \circ (f \otimes g): \mathcal{A} \otimes \mathcal{B} \rightarrow \mathcal{F}
\]
The diagram above also exemplifies how merely looking at the boxes themselves may be deceitful: box $f$ factorizes, but this has nothing to do with causal relations between events  (which are captured by the causal graph instead).

\subsection{Diagrams over CV-local causal graphs}

\noindent We begin by considering a special class of CV framed causal graphs, which we refer to as \emph{CV-local}: CTCs are allowed to appear, but we want the interaction between the CV part and the CR part to remain completely localized to a discrete set of events (in a sense made precise below). This requirement captures our original stance about time-travel and CTCs, namely that interaction between the CR and CV regions of spacetime must happen in a fully localized fashion.
\begin{definition}
	A framed causal graph is said to be \emph{CV-local} if the following condition holds: every simple cycle, by which we mean a cycle which is both edge-disjoint and vertex-disjoint, has at most one node of degree higher than $2$. Simple cycles are then referred to as the \emph{CTCs} and the unique node of degree higher than 2 on a simple cycle is referred to as the \emph{interaction node} for the CTC.
\end{definition}
\noindent CV-local framed causal graphs are closed under composition and tensor product, and hence form a monoidal sub-category of the SMC of framed causal graphs. Below is an example of CV-local graph (interaction node highlighted in red):
\[
	\tikzfigscale{0.8}{CVlocal}
\]
which decomposes into the following CR and CV regions (overlapping at the interaction node, again highlighted in red in both figures):
\[
	\tikzfigscale{0.8}{CVlocalDecomposition}
\]
Below is instead an example of a framed causal graph which is not CV-local, with the extended overlap between the CR and CV regions highlighted by the red dashed box:
\[
	\tikzfigscale{0.8}{not_CVlocal}
\]
The graph above clearly exemplifies why CV-locality is needed: across the dotted box, systems coming from the CR region must all enter the CTC before returning to the CR region, something which our interpretation of time-travel states not to be meaningful in the first place.

Our first task is to define what it means to have diagrams over CV-local framed causal graphs. In the P-CTC model, this is extremely simple:
\[
	\tikzfigscale{0.8}{CVgraph}
	\hspace{2mm}
	\mapsto
	\hspace{2mm}
	\tikzfigscale{0.8}{CVgraphMorphism}
\]
By adjusting the boxes a little, Lloyd-like semantics can be readily extended to traced monoidal categories. However, we will soon see that the axioms for traces are too restrictive to capture the model we most care about, i.e. the D-CTC model. Instead, we observe that CV-local framed causal graphs can be transformed into CR ones by ``cutting the cycles open'':
\[
	\tikzfigscale{0.8}{CVlocalS} \rightsquigarrow \tikzfigscale{0.8}{CVlocal_cut}
\]
In case of multiple cycles meeting with the CR region at the same interaction node, each cycle is cut open independently:
\[
	\tikzfigscale{1.2}{two_cycles_cut} \rightsquigarrow \tikzfigscale{1.2}{two_cycles}
\]
This means that we can decompose the task of drawing diagrams over CV-local graphs into two distinct parts: drawing diagrams over CR graphs, which we already known how to do, and finding a way to glue the cut ends of the cycles back together at the end. The latter part will be carried out by a ``time-travel super-operator'', taking a pair of input/output wires and ``applying a CTC'' to them. Because there is no one canonical place to cut cycles open at, such a ``time-travel super-operator'' will have to satisfy a certain ``sliding property'', to ensure that all ways of cutting a cycle open will lead to the same diagram in the end:
\begin{equation}
	\hspace{-1.35cm}\tikzfigscale{1.2}{big_graphs}
\end{equation}

\begin{definition}
	\label{def:time-travel}
	A \emph{time-travel super-operator} on a symmetric monoidal category $\textbf{C}$ is a pair of a symmetric monoidal category $\textbf{D}$, into which $\textbf{C}$ is monoidally faithfully embedded, together with a family of super-operators $\Xi_{\mathcal{H};\mathcal{C}}^{\mathcal{K};\mathcal{C}}$:
	\begin{equation}
		\Xi_{\mathcal{H};\mathcal{C}}^{\mathcal{K};\mathcal{C}}
		\hspace{2mm}
		:=
		\hspace{2mm}\tikzfigscale{0.8}{timetravelsuperoperator}
	\end{equation}
	taking morphisms $\Phi: \mathcal{H} \otimes \mathcal{C} \rightarrow \mathcal{K} \otimes \mathcal{C}$ of $\textbf{C}$ to morphisms $\Xi_{\mathcal{H};\mathcal{C}}^{\mathcal{K};\mathcal{C}}\left(\Phi\right): \mathcal{H} \rightarrow \mathcal{K}$ of $\textbf{D}$ and satisfying the following:
	\begin{enumerate}
		\item Naturality in the CR region:
		\begin{equation}
			\tikzfigscale{0.8}{naturalityCR} \hspace{2mm}=\hspace{2mm} \tikzfigscale{0.8}{naturalityCR2}
		\end{equation}
		\item Strength:
		\begin{equation}
			\tikzfigscale{0.8}{superposing} \hspace{2mm}=\hspace{2mm}\tikzfigscale{0.8}{superposing2}
		\end{equation}
		\item Sliding:
		\begin{equation}
		\label{equation:slidingproperty}
			\tikzfigscale{0.8}{sliding1} \hspace{2mm}=\hspace{2mm} \tikzfigscale{0.8}{sliding2}
		\end{equation}
		\item Vanishing (dashed line is the tensor unit):
		\begin{equation}
		\label{equation:vanishingproperty}
			\tikzfigscale{0.8}{vanishing1} \hspace{2mm}=\hspace{2mm} \tikzfigscale{0.8}{vanishing2}
		\end{equation}
		\end{enumerate}
		We refer to a process theory equipped with a time-travel super-operator as a \emph{process theory with time travel}.
\end{definition}

We now present our two main results about time-travel super-operators: the first one shows that the definition captures the D-CTC and the P-CTC models, while the second one shows that the definition is necessary and sufficient to provide sound semantics to diagrams over CV-local framed causal graphs in our ``cut-and-paste'' approach (see Appendix \ref{appendix:proofs} for the proofs).

\newcounter{theorem:DMixProperties}
\setcounter{theorem:DMixProperties}{\value{theorem_c}}
\begin{theorem}
	\label{theorem:DMixProperties_label}
	D-CTCs and the category $\DMix$ define a time-travel super-operator for quantum theory, i.e. for the symmetric monoidal category $\Mix$ of finite-dimensional Hilbert spaces and CPTP maps. The same is true of P-CTCs and the category $\Mix_{sym}$.
\end{theorem}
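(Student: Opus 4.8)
\section*{Proof proposal}

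The plan is to treat the two models separately, in each case exhibiting the embedding $\textbf{C}\hookrightarrow\textbf{D}$ and then verifying the four axioms of Definition~\ref{def:time-travel}. The P-CTC case is essentially a repackaging of the standard trace axioms, whereas the D-CTC case carries the real content; I therefore expect the \emph{sliding} axiom for $\DMix$ to be the main obstacle.

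For the P-CTC model I would take $\Xi^{\mathcal{K};\mathcal{C}}_{\mathcal{H};\mathcal{C}}(\Phi)$ to be the partial trace $\mathrm{Tr}_{\mathcal{C}}[\Phi]$, renormalised inside $\Mix_{sym}$. First one checks that $\Mix\hookrightarrow\Mix_{sym}$ is a faithful monoidal embedding: every CPTP map $f$ satisfies $\mathrm{Tr}[f(\mathbb{1}/d_{\mathcal{A}})]=1$, so it is a legal morphism of $\Mix_{sym}$; composites of CPTP maps are again trace-preserving, so the renormalising factor in the $\Mix_{sym}$ composition rule is $1$ and composition is preserved; injectivity on morphisms and compatibility with the tensor product of CP maps are immediate. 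The four axioms then reduce to standard properties of the categorical trace on the compact-closed category of CP maps: naturality in the CR region and strength are the usual naturality and superposing properties of the partial trace, \emph{sliding} is exactly the dinaturality axiom $\mathrm{Tr}_{\mathcal{C}}[(\mathrm{id}\otimes g)\circ\Phi]=\mathrm{Tr}_{\mathcal{C}'}[\Phi\circ(\mathrm{id}\otimes g)]$, and \emph{vanishing} is the axiom that tracing out the tensor unit does nothing. The only additional bookkeeping is that each identity must be checked after the $\Mix_{sym}$ renormalisation; since all inputs are CPTP the normalising scalars are controlled and the equalities survive.

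For the D-CTC model I would set $\Xi^{\mathcal{K};\mathcal{C}}_{\mathcal{H};\mathcal{C}}(\Phi)$ to be the elementary morphism built from $\Phi$ and $\mathcal{C}$, which is by construction a generator of $\DMix$; the embedding $\Mix\hookrightarrow\DMix$ and the strict symmetric monoidal structure on $\DMix$ are supplied by Lemma~\ref{lemma:MixDMix_label} and the theorem established above it. By the defining equivalence relation of $\DMix$, each axiom reduces to checking that the two sides agree under Deutsch's prescription on every test state $\omega$ on $\mathcal{H}\otimes\mathcal{E}$. The key simplification is entanglement breaking: tracing the consistency equation down to $\mathcal{C}$ eliminates both $\mathcal{K}$ and the spectator $\mathcal{E}$, so the fixed point depends only on the reduced state $\omega_{\mathcal{H}}$ while the spectator merely rides along under the identity. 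With this in hand, \emph{naturality in the CR region} follows because absorbing CR-region maps $a,b$ into $\Phi$ does not alter the fixed-point equation ($b$ is trace-preserving and is traced away, while $a$ only reparametrises the input $\rho\mapsto a(\rho)$); \emph{strength} follows from the marginal-dependence of the fixed point together with the definition of the tensor product on $\DMix$; and \emph{vanishing} is immediate since a trivial CV system leaves no loop to close and the elementary morphism collapses to the image of $\Phi$ under the embedding.

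The substantive step is \emph{sliding} for $\DMix$. Writing $\Phi:\mathcal{H}\otimes\mathcal{C}\to\mathcal{K}\otimes\mathcal{C}'$ and $g:\mathcal{C}'\to\mathcal{C}$, the two ways of cutting the loop produce the Deutsch maps of $\Psi=(\mathrm{id}_{\mathcal{K}}\otimes g)\circ\Phi$ (fixed point $\tau$ on $\mathcal{C}$) and of $\Psi'=\Phi\circ(\mathrm{id}_{\mathcal{H}}\otimes g)$ (fixed point $\sigma$ on $\mathcal{C}'$). The assignments $\tau\mapsto\mathrm{Tr}_{\mathcal{K}}[\Phi(\rho\otimes\tau)]$ and $\sigma\mapsto g(\sigma)$ are mutually inverse affine bijections between the two convex compact fixed-point sets, and one checks directly that whenever $g(\sigma)=\tau$ the two output CR states $\mathrm{Tr}_{\mathcal{C}'}[\Phi(\rho\otimes\tau)]$ coincide. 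Thus everything hinges on showing that these bijections carry the maximal-entropy fixed point of one side to that of the other; this is exactly where I expect the difficulty, since a general CPTP map need not preserve von Neumann entropy. I would resolve it using the geometric characterisation of the maximal-entropy fixed point from Appendix~\ref{appendix:fixedPoint}: as the two maps are mutually inverse, if each is entropy non-decreasing on the fixed-point sets then entropy is in fact preserved, and strict concavity of entropy on the convex fixed-point set forces the unique maximiser to map to the unique maximiser. Establishing that non-decrease—presumably by exhibiting the restricted maps as unital channels on the relevant fixed-point algebra—is the crux of the argument.
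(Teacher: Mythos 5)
Your overall architecture matches the paper's: the P-CTC case is dispatched via the standard trace axioms exactly as you say, and for $\DMix$ the paper proves naturality by comparing the two fixed-point equations (your ``$b$ is traced away, $a$ reparametrises the input'' argument), derives strength from naturality plus the $\DMix$ tensor product, and obtains vanishing from the embedding lemma. Your reduction of sliding is also the paper's reduction: with $F(\tau):=\operatorname{Tr}_{\mathcal K}[\Phi(\rho\otimes\tau)]$, the restrictions $F\colon P\to Q$ and $g\colon Q\to P$ are mutually inverse affine bijections of the two fixed-point sets, the CR outputs agree whenever the fixed points correspond, and everything hinges on these bijections exchanging the maximal-entropy fixed points. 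Where you diverge is in how to close that last step: the paper argues that, being mutually inverse trace-distance contractions, $F|_P$ and $g|_Q$ are isometries, and then invokes the barycentric characterisation $\tau_P=\int_P\rho\,d\rho$ of the maximal-entropy fixed point (Appendix \ref{appendix:fixedPoint}) to conclude $F(\tau_P)=\tau_Q$; you instead propose to show each restriction is entropy non-decreasing and conclude via strict concavity.

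The gap is that your crux step is not merely unproven but false, so no argument can fill it. By the fixed-point structure theorem \cite{blumekohout2008} cited in Appendix \ref{appendix:fixedPoint}, $P$ consists of states $\bigoplus_k p_k\,\tau_k\otimes\omega_k$ with fixed junk states $\omega_k$, and the ambient entropy is $H(p)+\sum_k p_k S(\tau_k)+\sum_k p_k S(\omega_k)$, so the maximiser has Gibbs weights $p_k\propto d_k e^{S(\omega_k)}$; the bijection $F|_P$ matches the blocks of $P$ with those of $Q$ but swaps the junk states $\omega_k$ for different ones $\omega_k'$, and when $S(\omega_k)\neq S(\omega_k')$ the maximisers cannot correspond. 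Concretely, take $\mathcal H$ trivial, $\mathcal K=\mathbb{C}^2$, $\mathcal C=\mathbb{C}^2\oplus(\mathbb{C}^2\otimes\mathbb{C}^2)$, $\mathcal C'=(\mathbb{C}^2\otimes\mathbb{C}^2)\oplus\mathbb{C}^2$; let $\Phi$ act blockwise, flagging the block on $\mathcal K$, by $\tau\mapsto\ket{1}\bra{1}\otimes(\tau\otimes\omega')$ on the first block and $x\mapsto\ket{2}\bra{2}\otimes\operatorname{Tr}_2[x]$ on the second, and let $g$ act blockwise by $\operatorname{Tr}_2$ and $(\cdot)\otimes\omega$, with $\omega=\mathbb{1}/2$ maximally mixed and $\omega'$ pure. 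Then $P=\left\{x_1\oplus(\tau\otimes\mathbb{1}/2)\right\}$ has maximal-entropy point $\mathbb{1}/6$, with block weights $(1/3,2/3)$, whereas $Q=\left\{(\tau'\otimes\omega')\oplus y_2\right\}$ has maximal-entropy point with block weights $(1/2,1/2)$; since $F$ and $g$ preserve block weights, the two ways of cutting the loop output $\operatorname{diag}(1/3,2/3)\neq\operatorname{diag}(1/2,1/2)$ on $\mathcal K$. In particular $F|_P$ strictly decreases entropy at $\mathbb{1}/6$, so the unital-channel argument you hope for cannot exist. Note that the same example defeats the paper's own closing step, since the maximal-entropy fixed point is not the barycentre here (the barycentre of $P$ has block weights $(1/2,1/2)$). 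So your instinct about where the difficulty lies is exactly right, but as far as I can tell the sliding axiom for D-CTCs fails as stated, and neither your route nor the paper's closes the gap.
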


\begin{definition}
\label{definition:morphismsOverCVlocal}
	Let $(\alpha,\beta)$ be a diagram over a CV-local causal graph $\Gamma$ in a process theory $\textbf{C}$ with a time-traveling super-operator. The \emph{morphism defined by the diagram $(\alpha,\beta)$} is the morphism obtained with the following procedure:
	\begin{enumerate}
			\item We cut every CTC open at some edge on the cycle, and the two cut ends of the edge are marked to remember that they need to be glued again. Each pair of cut ends forms a pair of an input node and an output node for the resulting CR framed causal graph, with the same system associated to them by $\alpha$.
			\item We consider the morphism defined by diagram $(\alpha,\beta)$ over the resulting CR framed causal graph.
			\item We apply the time-travel super-operator to each pair of input/output nodes that resulted from cutting a cycle open (and hence needs to be ``glued back together''). If an interaction node had more than one cycle through it, all pairs of input/output nodes for those cycles have to be glued together by the same super-operator application, as in the example below:
			\begin{equation}
				\tikzfigscale{0.8}{example}
				\mapsto
				\hspace{2mm}
				\tikzfigscale{0.5}{example2}
			\end{equation}
	\end{enumerate}
\end{definition}

\newcounter{theorem:CVlocalSemantics}
\setcounter{theorem:CVlocalSemantics}{\value{theorem_c}}
\begin{theorem}
	In the presence of a time-travel super-operator, morphisms defined by diagrams over CV-local framed causal graphs are well-defined. Conversely, any super-operator which yields well-defined morphisms for diagrams over CV-local framed causal graphs (and respects the embedding of the original process theory) must satisfy the properties of a time-traveling super-operator.
\end{theorem}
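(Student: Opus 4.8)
The plan is to prove the two implications separately, in both cases using the single organising principle that each of the four conditions of Definition~\ref{def:time-travel} corresponds to exactly one degree of freedom, or one compositionality constraint, in the cut-and-paste procedure of Definition~\ref{definition:morphismsOverCVlocal}. For the forward direction I must check that the procedure is independent of every choice it involves; for the converse I must show that each condition is forced by demanding this independence together with compatibility with the symmetric monoidal structure of framed causal graphs and with the embedding of $\textbf{C}$ (which, by Definition~\ref{definition:morphismsOverCR}, pins down how the CR part is evaluated).

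\emph{Soundness.} The only genuine choice in the procedure is the edge at which each CTC is cut. I would first treat a single CTC in isolation. By CV-locality such a cycle consists of one interaction node $\Phi$ together with a chain of degree-$2$ nodes, each a transport morphism on the CV wire, whose composite around the loop is a single morphism $G$ on $\mathcal{C}$. Cutting immediately after $\Phi$ yields the CR morphism $\Phi \circ (\id{\mathcal{H}} \otimes G)$, whereas cutting immediately before $\Phi$ yields $(\id{\mathcal{K}} \otimes G) \circ \Phi$; the Sliding condition~\eqref{equation:slidingproperty} states exactly that $\Xi$ identifies these. Since moving a cut past one degree-$2$ node at a time is an elementary slide, transitivity shows that all cut positions on one cycle agree. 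For the general case, Strength lets several loops sharing an interaction node be applied together irrespective of how the parallel strands are grouped, Naturality lets the applications of $\Xi$ at distinct interaction nodes commute with the CR morphisms lying between them so that their order is immaterial, and the Vanishing condition~\eqref{equation:vanishingproperty} removes degenerate loops carrying the tensor unit. Any two evaluations of a fixed diagram are then linked by a finite sequence of these axiom-justified moves.

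\emph{Necessity.} Conversely, suppose a super-operator yields well-defined morphisms and respects the embedding. I would isolate each condition by a minimal CV-local configuration in which well-definedness has bite. The two-edge single-loop graph above, whose two admissible cuts must coincide, forces Sliding. Placing a CR node $a$ on the CR input leg of a bare interaction-node-with-loop graph and a CR node $b$ on its CR output leg, and demanding that evaluating the whole graph by the procedure agrees with evaluating it as the sequential composite of these three pieces, forces the identity $\Xi((b \otimes \id{\mathcal{C}}) \circ \Phi \circ (a \otimes \id{\mathcal{C}})) = b \circ \Xi(\Phi) \circ a$, which is Naturality in the CR region. Running an inert CR wire in parallel with a CTC, and demanding agreement with the corresponding tensor decomposition, forces Strength. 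Finally, a loop whose CV system is the tensor unit is semantically no loop at all, so its evaluation must reduce to the plain CR morphism, which is exactly Vanishing. Hence all four conditions hold.

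\emph{Main obstacle.} I expect the crux to be the coherence step in the soundness direction: promoting the elementary single-cycle slides and pairwise-commutations to a genuine confluence statement for arbitrary CV-local graphs, with many interleaved cycles and interaction nodes. The cleanest way to discharge this is to present the symmetric monoidal category of framed causal graphs by generators---CR graphs together with a single basic CTC---and relations, and to verify that the four conditions are precisely the images of those relations; well-definedness then follows from a universal property rather than from open-ended case analysis. A secondary care point is to ensure, in the necessity direction, that each exhibited configuration is genuinely CV-local and that the four forcing arguments are mutually independent, so that no condition is redundant.
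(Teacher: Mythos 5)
Your proposal follows essentially the same strategy as the paper's proof: sliding accounts for cut-independence along a single cycle, strength and naturality assemble multiple cycles and interaction nodes, vanishing handles unit loops, and the converse is obtained by forcing each axiom from a minimal CV-local configuration together with respect for the embedding. The one place where the two genuinely diverge is the coherence step you flag as the main obstacle. The paper neither proves confluence of local moves nor presents the graph category by generators and relations; instead it reduces without loss of generality to a single interaction node and exhibits an explicit \emph{normal form}: strength and naturality gather every CR-region morphism, including the interaction node itself, into one sub-diagram $\chi$, and sliding pushes every CV-region morphism into a single block sitting on the cut wires at the output of $\chi$. Since this normal form only remembers the relative (cyclic) order of the CV boxes, it is manifestly independent of where each cycle was cut, so well-definedness follows directly --- this is exactly the discharge of your obstacle, obtained more cheaply than a presentation-by-generators argument, and it is the route you would want to take if you executed your plan. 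A second, minor difference concerns the necessity of naturality: the paper attributes it to ``the two possible ways one can cut open the cycles'' of a fixed graph, whereas you derive it from agreement with sequential composition of graphs; your version is, if anything, the more careful one, since cut-independence alone only yields sliding with CR morphisms as spectators \emph{inside} the super-operator, and pulling them outside genuinely requires compatibility with composition in the category of framed causal graphs, which is what you invoke.
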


\subsection{Time-travel super-operators vs. traces}

\noindent We have seen that the definition of the time-travel super-operator is general enough to capture both the D-CTC model and the P-CTC model, and at the same time specific enough to provide sound semantics for diagrams over CV-local framed causal graphs. The experienced reader will have already noticed that the defining properties of time-travel super-operator mimic a subset of the defining properties for traces \cite{selinger2011}. In particular, the P-CTC part of Theorem \ref{theorem:DMixProperties_label} is a consequence of the fact that in a traced monoidal category $\textbf{C}$ the trace itself provides the time-travel super-operators, with super-category $\textbf{D}$ equal to $\textbf{C}$ itself.

So why did we not adopt traced monoidal categories as our semantics for time-travel? The answer lies in the following \emph{yanking property}, part of the definition of traces which Inequality \ref{equation:failureOfTomography} tells us not to hold for the D-CTC model:
\begin{equation}
\label{equation:yanking}
	\tikzfigscale{0.8}{unitality}
\end{equation}
The next result shows that failure to satisfy the equation above is an obstruction to providing sound semantics for diagrams over arbitrary framed causal graphs. This in turn implies something interesting about the D-CTC model: that it is really a model about local interaction with CTCs, rather than a model about full-fledged time travel.
\newcounter{theorem:DeutschCannotDoAllCVgraphs}
\setcounter{theorem:DeutschCannotDoAllCVgraphs}{\value{theorem_c}}
\begin{theorem}
\label{theorem:DeutschCannotDoAllCVgraphs_label}
	Consider a process theory with discarding maps and time travel. If the time-travel super-operator yields well-defined morphisms for diagrams over arbitrary framed causal graphs, then it must satisfy the yanking property.
\end{theorem}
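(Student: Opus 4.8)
The plan is to reduce the yanking property of \ref{equation:yanking} to the single equation $\Xi^{\mathcal{C};\mathcal{C}}_{\mathcal{C};\mathcal{C}}(\sigma_{\mathcal{C},\mathcal{C}}) = \mathrm{id}_{\mathcal{C}}$, where $\sigma$ is the symmetry inherited from $\mathbf{C}$, and then to force this equation by exhibiting a single framed causal graph on which the cut-and-paste semantics of Definition \ref{definition:morphismsOverCVlocal} can only be well-defined if yanking holds. Conceptually, the point is that a non-CV-local graph is exactly what lets a chronology-respecting wire be routed into a CTC, carried once around the loop, and brought back out without ever interacting with anything: the identity wire (and relativistic common sense) say this \emph{bare time loop} should do nothing, whereas the super-operator evaluates the act of sending a wire around the loop as the partial operation applied to a crossing. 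Demanding that the two readings agree is precisely yanking, and the failure of this agreement is the same phenomenon already witnessed for the D-CTC model in Inequality \ref{equation:failureOfTomography}.

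First I would construct the graph $\Gamma$ realising the traced symmetry as a ``U-turn'' spread across two interaction events. Concretely, take a single simple cycle passing through two internal nodes $u$ and $v$, each of degree strictly greater than two, so that $\Gamma$ is genuinely \emph{not} CV-local and therefore lies outside the scope of the CV-local semantics established earlier. The nodes are decorated so that the chronology-respecting input is crossed onto the loop at $u$ and taken back off the loop at $v$, the extra legs needed to keep both nodes of degree $>2$ being supplied by a prepared state and a discarding map. This is where the hypothesis that the theory carries discarding maps is used. As an abstract framed causal graph, $\Gamma$ is simply the picture of one wire making a full loop through the chronology-violating region.

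Next I would evaluate the morphism that $\Gamma$ defines in two ways, using the extension of the cut-and-paste procedure to arbitrary graphs. Cutting the cycle on the segment carrying the genuine crossing leaves a chronology-respecting diagram whose morphism, after using Strength and Naturality in the CR region to absorb the structural wires, is exactly $\sigma_{\mathcal{C},\mathcal{C}}$; applying the super-operator to close this cut returns $\Xi(\sigma_{\mathcal{C},\mathcal{C}})$. Cutting the cycle on the complementary segment instead decouples the chronology-respecting wire from the loop: the wire then runs straight from input to output, while the cut loop closes over a system which the discarding/preparation data has rendered trivial, so that Vanishing \ref{equation:vanishingproperty} collapses it and leaves the bare $\mathrm{id}_{\mathcal{C}}$. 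Since the hypothesis guarantees the morphism defined by $\Gamma$ is independent of where each cycle is cut, these two values must coincide, yielding $\Xi(\sigma_{\mathcal{C},\mathcal{C}}) = \mathrm{id}_{\mathcal{C}}$ and hence \ref{equation:yanking}.

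The hard part is the construction itself. I must choose the decorations at $u$ and $v$ so that $\Gamma$ is simultaneously non-CV-local (so that the statement is not already subsumed by the CV-local characterisation, where the four axioms suffice and sliding \ref{equation:slidingproperty} is all that relates distinct cuts) and yet arranged so that one cut genuinely exposes the bare crossing while the other genuinely trivialises. The delicate step is to carry out this second, trivialising cut using only Naturality, Strength and Vanishing together with the discarding and preparation maps, \emph{without} tacitly invoking the very yanking identity being proved: in particular one must verify that the decoupled loop really is a trace over the tensor unit, rather than over a nontrivial system whose super-operator need not vanish, since a symmetric two-crossing construction would merely reproduce $\Xi(\sigma)$ on both cuts and prove nothing. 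Once the two cut-values are honestly pinned down, the well-definedness hypothesis supplies the equality and the theorem follows.
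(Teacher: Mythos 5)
Your proposal is correct and follows essentially the same route as the paper's own proof: a non-CV-local graph whose single cycle passes through two nodes, one carrying the crossing padded by a discarding map and the other the extraction padded by a normalised state $\rho$, with well-definedness across the two possible cuts forcing $\Xi(\sigma_{\mathcal{C},\mathcal{C}}) = \mathrm{id}_{\mathcal{C}}$. The one step you flag as delicate---showing the decoupled loop is genuinely trivial rather than a loop over a nontrivial system---is resolved in the paper exactly as you anticipate: sliding moves the state/discard pair around the cut loop so that it becomes a loop over the tensor unit (using normalisation of $\rho$), and vanishing then collapses it to the identity.
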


We conclude with the following summary of the differences between traces---well understood but limited to Lloyd-like models of time-travel---and the time-travel super-operators defined in this work---capturing more powerful time-travel models such as the D-CTC model of Deutsch.
\begin{itemize}
	\item Traces satisfy the yanking property from equation \ref{equation:yanking}, while time-travel super-operators are not required to (the D-CTC model, in particular, fails it).
	\item Traces are defined on all morphisms of the category, while time-travel super-operators send morphisms of the original category into a larger category, over which they are not necessarily defined.
	\item As a consequence of the point above, the following property of traces---the last remaining one required for their definition---cannot even be formulated for time-travel super-operators, as it needs an application of the super-operator to a morphism living in the super-category \textbf{D}:
	\begin{equation}
		\tikzfigscale{0.8}{undefined} =\tikzfigscale{0.8}{undefined2}
	\end{equation}
\end{itemize}

\section*{Acknowledgments}
We thank Seth Lloyd for discussions on CTCs.
NP gratefully acknowledges funding from EPSRC and the Pirie-Reid Scholarship.
This publication was made possible through the support of a grant on Quantum Causal Structures from the John Templeton Foundation. The opinions expressed in this publication are those of the authors and do not necessarily reflect the views of the John Templeton Foundation. 



\begin{thebibliography}{10}
\providecommand{\url}[1]{#1}
\csname url@samestyle\endcsname
\providecommand{\newblock}{\relax}
\providecommand{\bibinfo}[2]{#2}
\providecommand{\BIBentrySTDinterwordspacing}{\spaceskip=0pt\relax}
\providecommand{\BIBentryALTinterwordstretchfactor}{4}
\providecommand{\BIBentryALTinterwordspacing}{\spaceskip=\fontdimen2\font plus
\BIBentryALTinterwordstretchfactor\fontdimen3\font minus
  \fontdimen4\font\relax}
\providecommand{\BIBforeignlanguage}[2]{{%
\expandafter\ifx\csname l@#1\endcsname\relax
\typeout{** WARNING: IEEEtran.bst: No hyphenation pattern has been}%
\typeout{** loaded for the language `#1'. Using the pattern for}%
\typeout{** the default language instead.}%
\else
\language=\csname l@#1\endcsname
\fi
#2}}
\providecommand{\BIBdecl}{\relax}
\BIBdecl

\bibitem{godel1949}
\BIBentryALTinterwordspacing
K.~G\"{o}del, ``An example of a new type of cosmological solution of
  einstein’s field equations of gravitation,'' \emph{Reviews of Modern
  Physics}, vol.~21, pp. 447--450, 1949. [Online]. Available:
  \url{https://dx.doi.org/10.1023/A:1001959224682}
\BIBentrySTDinterwordspacing

\bibitem{kerr1963}
\BIBentryALTinterwordspacing
R.~P. Kerr, ``Gravitational field of a spinning mass as an example of
  algebraically special metrics,'' \emph{Phys. Rev. Lett.}, vol.~11, pp.
  237--238, Sep 1963. [Online]. Available:
  \url{https://dx.doi.org/10.1103/PhysRevLett.11.237}
\BIBentrySTDinterwordspacing

\bibitem{misner1965}
\BIBentryALTinterwordspacing
C.~W. Misner, ``Taub-nut space as countermeasure to almost anything,'' NASA,
  Tech. Rep. NASA-CR-70033, 1965. [Online]. Available:
  \url{https://ntrs.nasa.gov/search.jsp?R=19660007407}
\BIBentrySTDinterwordspacing

\bibitem{bonnor1980}
\BIBentryALTinterwordspacing
W.~B. Bonnor, ``The rigidly rotating relativistic dust cylinder,''
  \emph{Journal of Physics A}, vol.~13, pp. 2121--2132, 1980. [Online].
  Available: \url{https://dx.doi.org/10.1088/0305-4470/13/6/033}
\BIBentrySTDinterwordspacing

\bibitem{allen2014}
\BIBentryALTinterwordspacing
J.~M. Allen, ``Treating time travel quantum mechanically,'' \emph{Phys. Rev.
  A}, vol.~90, p. 042107, 2014. [Online]. Available:
  \url{https://dx.doi.org/10.1103/PhysRevA.90.042107}
\BIBentrySTDinterwordspacing

\bibitem{deutsch1991}
\BIBentryALTinterwordspacing
D.~Deutsch, ``Quantum mechanics near closed timelike lines,'' \emph{Physical
  review. D}, vol.~44, no.~10, p. 3197—3217, 1991. [Online]. Available:
  \url{https://dx.doi.org/10.1103/physrevd.44.3197}
\BIBentrySTDinterwordspacing

\bibitem{lloyd2011a}
\BIBentryALTinterwordspacing
S.~Lloyd, L.~Maccone, R.~Garcia-Patron, V.~Giovannetti, Y.~Shikano,
  S.~Pirandola, L.~A. Rozema, A.~Darabi, Y.~Soudagar, L.~K. Shalm, and A.~M.
  Steinberg, ``Closed timelike curves via postselection: Theory and
  experimental test of consistency,'' \emph{Phys. Rev. Lett.}, vol. 106, p.
  040403, 2011. [Online]. Available:
  \url{https://dx.doi.org/10.1103/PhysRevLett.106.040403}
\BIBentrySTDinterwordspacing

\bibitem{lloyd2011b}
\BIBentryALTinterwordspacing
S.~Lloyd, L.~Maccone, R.~Garcia-Patron, V.~Giovannetti, and Y.~Shikano,
  ``{Quantum mechanics of time travel through post-selected teleportation},''
  \emph{Physical Review D}, vol.~84, p. 025007, 2011. [Online]. Available:
  \url{https://dx.doi.org/10.1103/PhysRevD.84.025007}
\BIBentrySTDinterwordspacing

\bibitem{aaronson2005}
\BIBentryALTinterwordspacing
S.~Aaronson, ``Quantum computing, postselection, and probabilistic
  polynomial-time,'' \emph{Proceedings of the Royal Society A}, vol. 461, pp.
  3473--3482, 2005. [Online]. Available:
  \url{https://dx.doi.org/10.1098/rspa.2005.1546}
\BIBentrySTDinterwordspacing

\bibitem{aaronson2009}
\BIBentryALTinterwordspacing
S.~Aaronson and J.~Watrous, ``Closed timelike curves make quantum and classical
  computing equivalent,'' \emph{Proceedings of the Royal Society A}, vol. 465,
  p. 631, 2009. [Online]. Available:
  \url{https://dx.doi.org/10.1098/rspa.2008.0350}
\BIBentrySTDinterwordspacing

\bibitem{brun2012}
\BIBentryALTinterwordspacing
T.~A. Brun and M.~M. Wilde, ``Perfect state distinguishability and
  computational speedups with postselected closed timelike curves,''
  \emph{Foundations of Physics}, vol.~42, p. 341, 2012. [Online]. Available:
  \url{https://doi.org/10.1007/s10701-011-9601-0}
\BIBentrySTDinterwordspacing

\bibitem{simpson1973}
\BIBentryALTinterwordspacing
M.~Simpson and R.~Penrose, ``Internal instability in a reissner-nordstr\"{o}m
  black hole,'' \emph{International Journal of Theoretical Physics}, vol.~7,
  pp. 183--197, 1973. [Online]. Available:
  \url{https://dx.doi.org/10.1007/BF00792069}
\BIBentrySTDinterwordspacing

\bibitem{chandrasekhar1982}
\BIBentryALTinterwordspacing
S.~Chandrasekhar and J.~B. Hartle, ``On crossing the cauchy horizon of a
  reissner-nordstroem black-hole,'' \emph{Proceedings of the Royal Society A},
  vol. 384, no. 1787, pp. 301--315, 1982. [Online]. Available:
  \url{https://dx.doi.org/10.1098/rspa.1982.0160}
\BIBentrySTDinterwordspacing

\bibitem{morris1988}
\BIBentryALTinterwordspacing
M.~S. Morris, K.~S. Thorne, and U.~Yurtsever, ``Wormholes, time machines, and
  the weak energy condition,'' \emph{Physical Review Letters}, vol.~61, pp.
  1446--1449, 1988. [Online]. Available:
  \url{https://dx.doi.org/10.1103/PhysRevLett.61.1446}
\BIBentrySTDinterwordspacing

\bibitem{hawking1992}
\BIBentryALTinterwordspacing
S.~W. Hawking, ``Chronology protection conjecture,'' \emph{Physical Review D},
  vol.~46, pp. 603--611, 1992. [Online]. Available:
  \url{https://dx.doi.org/10.1103/PhysRevD.46.603}
\BIBentrySTDinterwordspacing

\bibitem{rauch1993}
\BIBentryALTinterwordspacing
J.~Rauch and A.~D. Rendall, ``Blow-up of test fields near cauchy horizons,''
  \emph{Letters in Mathematical Physics}, vol.~29, pp. 241--248, 1993.
  [Online]. Available: \url{https://dx.doi.org/10.1007/BF00750958}
\BIBentrySTDinterwordspacing

\bibitem{hintz2017}
\BIBentryALTinterwordspacing
P.~Hintz and A.~Vasy, ``Analysis of linear waves near the cauchy horizon of
  cosmological black holes,'' \emph{Journal of Mathematical Physics}, vol.~58,
  p. 081509, 2017. [Online]. Available:
  \url{https://dx.doi.org/10.1063/1.4996575}
\BIBentrySTDinterwordspacing

\bibitem{abramsky2009}
\BIBentryALTinterwordspacing
S.~Abramsky and B.~Coecke, ``{Categorical Quantum Mechanics},'' in
  \emph{{Handbook of Quantum Logic and Quantum Structures}}, K.~Engesser, G.~D.
  M., and L.~D., Eds.\hskip 1em plus 0.5em minus 0.4em\relax Elsevier, 2009,
  pp. 261--323. [Online]. Available:
  \url{https://dx.doi.org/10.1016/B978-0-444-52869-8.50010-4}
\BIBentrySTDinterwordspacing

\bibitem{selinger2011}
\BIBentryALTinterwordspacing
P.~Selinger, ``A survey of graphical languages for monoidal categories,'' in
  \emph{New Structures for Physics}, ser. Lecture Notes in Physics, B.~Coecke,
  Ed.\hskip 1em plus 0.5em minus 0.4em\relax Springer, 2011, vol. 813, pp.
  289--355. [Online]. Available:
  \url{https://dx.doi.org/10.1007/978-3-642-12821-9_4}
\BIBentrySTDinterwordspacing

\bibitem{coecke_kissinger_book}
\BIBentryALTinterwordspacing
B.~Coecke and A.~Kissinger, \emph{Picturing Quantum Processes: {A} First Course
  in Quantum Theory and Diagrammatic Reasoning}.\hskip 1em plus 0.5em minus
  0.4em\relax Cambridge University Press, 2017. [Online]. Available:
  \url{https://dx.doi.org/10.1017/9781316219317}
\BIBentrySTDinterwordspacing

\bibitem{coecke2013}
\BIBentryALTinterwordspacing
B.~Coecke and R.~Lal, ``{Causal categories: relativistically interacting
  processes},'' \emph{Foundations of Physics}, vol.~43, pp. 458--501, 2013.
  [Online]. Available: \url{https://dx.doi.org/10.1007/s10701-012-9646-8}
\BIBentrySTDinterwordspacing

\bibitem{coecke2016}
\BIBentryALTinterwordspacing
B.~Coecke, ``Terminality implies no-signalling ...and much more than that,''
  \emph{New Generation Computing}, vol.~34, no.~1, pp. 69--85, Mar 2016.
  [Online]. Available: \url{https://doi.org/10.1007/s00354-016-0201-6}
\BIBentrySTDinterwordspacing

\bibitem{kissinger2017}
\BIBentryALTinterwordspacing
A.~Kissinger and U.~S., ``A categorical semantics for causal structure,''
  \emph{Logic in Computer Science (LICS), 32nd Annual ACM/IEEE Symposium on},
  2017. [Online]. Available: \url{https://dx.doi.org/10.1109/LICS.2017.8005095}
\BIBentrySTDinterwordspacing

\bibitem{oreshkov2015}
\BIBentryALTinterwordspacing
O.~Oreshkov and N.~J. Cerf, ``Operational formulation of time reversal in
  quantum theory,'' \emph{Nature Physics}, vol.~11, pp. 853--858, 2015.
  [Online]. Available: \url{https://dx.doi.org/10.1038/nphys3414}
\BIBentrySTDinterwordspacing

\bibitem{oreshkov2016}
\BIBentryALTinterwordspacing
------, ``Operational quantum theory without predefined time,'' \emph{New
  Journal of Physics}, vol.~18, no.~7, p. 073037, 2016. [Online]. Available:
  \url{https://dx.doi.org/10.1088/1367-2630/18/7/073037}
\BIBentrySTDinterwordspacing

\bibitem{LE1}
\BIBentryALTinterwordspacing
B.~Coecke, ``The logic of entanglement. {A}n invitation,'' {Department of
  Computer Science, Oxford University}, Tech. Rep. {RR-03-12}, 2003. [Online].
  Available: \url{http://www.cs.ox.ac.uk/files/933/RR-03-12.ps}
\BIBentrySTDinterwordspacing

\bibitem{svetlichny2011}
\BIBentryALTinterwordspacing
G.~Svetlichny, ``Time travel: Deutsch vs. teleportation,'' \emph{International
  Journal of Theoretical Physics}, vol.~50, pp. 3903--3914, dec 2011. [Online].
  Available: \url{https://doi.org/10.1007/s10773-011-0973-x}
\BIBentrySTDinterwordspacing

\bibitem{bennett2009}
\BIBentryALTinterwordspacing
C.~H. Bennett, D.~Leung, G.~Smith, and J.~A. Smolin, ``Can closed timelike
  curves or nonlinear quantum mechanics improve quantum state discrimination or
  help solve hard problems?'' \emph{Phys. Rev. Lett.}, vol. 103, p. 170502,
  2009. [Online]. Available:
  \url{https://dx.doi.org/10.1103/PhysRevLett.103.170502}
\BIBentrySTDinterwordspacing

\bibitem{brun2009}
\BIBentryALTinterwordspacing
T.~A. Brun, J.~Harrington, and M.~M. Wilde, ``Localized closed timelike curves
  can perfectly distinguish quantum states,'' \emph{Phys. Rev. Lett.}, vol.
  102, p. 210402, 2009. [Online]. Available:
  \url{https://dx.doi.org/10.1103/PhysRevLett.102.210402}
\BIBentrySTDinterwordspacing

\bibitem{bub2014}
\BIBentryALTinterwordspacing
J.~Bub and A.~Stairs, ``Quantum interactions with closed timelike curves and
  superluminal signaling,'' \emph{Physical Review A}, vol.~89, 2014. [Online].
  Available: \url{https://dx.doi.org/10.1103/PhysRevA.89.022311}
\BIBentrySTDinterwordspacing

\bibitem{malament1977}
\BIBentryALTinterwordspacing
D.~B. Malament, ``The class of continuous timelike curves determines the
  topology of spacetime,'' \emph{Journal of Mathematical Physics}, vol.~18,
  no.~7, pp. 1399--1404, 1977. [Online]. Available:
  \url{https://dx.doi.org/10.1063/1.523436}
\BIBentrySTDinterwordspacing

\bibitem{bombelli1987}
\BIBentryALTinterwordspacing
L.~Bombelli, J.~Lee, D.~Meyer, and R.~D. Sorkin, ``Space-time as a causal
  set,'' \emph{Physical review letters}, vol.~59, no.~5, p. 521, 1987.
  [Online]. Available: \url{https://dx.doi.org/10.1103/PhysRevLett.59.521}
\BIBentrySTDinterwordspacing

\bibitem{rideout1999}
\BIBentryALTinterwordspacing
D.~P. Rideout and R.~D. Sorkin, ``Classical sequential growth dynamics for
  causal sets,'' \emph{Physical Review D}, vol.~61, no.~2, p. 024002, 1999.
  [Online]. Available: \url{https://dx.doi.org/10.1103/PhysRevD.61.024002}
\BIBentrySTDinterwordspacing

\bibitem{martin2006}
\BIBentryALTinterwordspacing
K.~Martin and P.~Panangaden, ``A domain of spacetime intervals in general
  relativity,'' \emph{Communications in Mathematical Physics}, vol. 267, no.~3,
  pp. 563--586, 2006. [Online]. Available:
  \url{https://dx.doi.org/10.1007/s00220-006-0066-5}
\BIBentrySTDinterwordspacing

\bibitem{selinger2007}
\BIBentryALTinterwordspacing
P.~Selinger, ``Dagger compact closed categories and completely positive maps:
  (extended abstract),'' \emph{Electron. Notes Theor. Comput. Sci.}, vol. 170,
  pp. 139--163, 2007. [Online]. Available:
  \url{https://doi.org/10.1016/j.entcs.2006.12.018}
\BIBentrySTDinterwordspacing

\bibitem{coecke2008}
\BIBentryALTinterwordspacing
B.~Coecke, ``Axiomatic description of mixed states from selinger's
  cpm-construction,'' \emph{Electron. Notes Theor. Comput. Sci.}, vol. 210, pp.
  3--13, 2008. [Online]. Available:
  \url{http://dx.doi.org/10.1016/j.entcs.2008.04.014}
\BIBentrySTDinterwordspacing

\bibitem{coecke2011}
\BIBentryALTinterwordspacing
B.~Coecke and R.~Duncan, ``Interacting quantum observables: categorical algebra
  and diagrammatics,'' \emph{New Journal of Physics}, vol.~13, no.~4, p.
  043016, 2011. [Online]. Available:
  \url{https://dx.doi.org/10.1088/1367-2630/13/4/043016}
\BIBentrySTDinterwordspacing

\bibitem{backens2014}
\BIBentryALTinterwordspacing
M.~Backens, ``The zx-calculus is complete for stabilizer quantum mechanics,''
  \emph{New Journal of Physics}, vol.~16, p. 093021, 2014. [Online]. Available:
  \url{https://dx.doi.org/10.1088/1367-2630/16/9/093021}
\BIBentrySTDinterwordspacing

\bibitem{ng2017}
\BIBentryALTinterwordspacing
K.~F. Ng and Q.~Wang, ``A universal completion of the zx-calculus,''
  \emph{arXiv:1706.09877}, 2014. [Online]. Available:
  \url{https://arxiv.org/abs/1706.09877}
\BIBentrySTDinterwordspacing

\bibitem{blumekohout2008}
\BIBentryALTinterwordspacing
R.~Blume-Kohout, H.~K. Ng, D.~Poulin, and L.~Viola, ``Characterizing the
  structure of preserved information in quantum processes,'' \emph{Phys. Rev.
  Lett.}, vol. 100, p. 030501, 2008. [Online]. Available:
  \url{https://dx.doi.org/10.1103/PhysRevLett.100.030501}
\BIBentrySTDinterwordspacing

\bibitem{brun2013}
\BIBentryALTinterwordspacing
T.~A. Brun, M.~M. Wilde, and A.~Winter, ``Quantum state cloning using
  deutschian closed timelike curves,'' \emph{Phys. Rev. Lett.}, vol. 111, p.
  190401, 2013. [Online]. Available:
  \url{https://doi.org/10.1103/PhysRevLett.111.190401}
\BIBentrySTDinterwordspacing

\bibitem{nielsen_chuang_book}
M.~A. Nielsen and I.~L. Chuang, \emph{Quantum Computation and Quantum
  Information: 10th Anniversary Edition}, 10th~ed.\hskip 1em plus 0.5em minus
  0.4em\relax New York, NY, USA: Cambridge University Press, 2011.

\end{thebibliography}

\newpage
\appendices

\section{Graphical calculus}
\label{appendix_graphical}

\noindent The symmetric monoidal category that we will treat as a framework for quantum theory is the category $\Mix$, whose objects are finite dimensional Hilbert spaces and morphisms are completely positive trace preserving (CPTP) maps (i.e. we work in a subcategory of $\mathbf{CPM}(\mathbf{FdHilb})$ \cite{selinger2007,coecke2008}). The morphisms of this category will be denoted as boxes, with inputs at the bottom and outputs at the top: 
\[
\tikzfigscale{0.8}{doubledboxes}
\]
Working in this category, we make use of the ZX-calculus \cite{coecke2011,backens2014,ng2017,coecke_kissinger_book} and of the graphical notation for symmetric monoidal categories provided by string diagrams \cite{selinger2011}. We will not give an introduction to string diagrams and the ZX-calculus, both of which can be found in \cite{coecke_kissinger_book}, but we clarify here some notational aspects relevant to this work.

In the category $\Mix$, morphisms from the monoidal unit are density matrices, i.e. quantum states. When working with qubits, we denote the computational $Z$ basis states $\{\ket{0}\bra{0},\ket{1}\bra{0}\}$ in white as follows:
\[
	\tikzfigscale{0.8}{whitestates}
\]
It is worth noting that each $Z$ basis state can also be written as a phase state for the $X$ basis, which we denote in gray as follows:
\[
	\tikzfigscale{0.8}{zerowhite} =\tikzfigscale{0.8}{0gray}
	\hspace{1cm}
	\tikzfigscale{0.8}{onewhite} = \tikzfigscale{0.8}{pigray}
\]
Phases are a special case of spiders. The \emph{$Z$ spider} \cite{coecke_kissinger_book} with phase $\alpha$ is the completely positive map associated to the following Hilbert space map:
\[
	\ket{0 \ldots 0}\bra{0\ldots 0} + e^{i \alpha} \ket{1 \ldots 1}\bra{1\ldots 1} 
\]
Similarly, the \emph{$X$ spider} with phase $\alpha$ is the completely positive map associated to the following Hilbert space map:
\[
	\ket{+ \ldots +}\bra{+\ldots +} + e^{i \alpha} \ket{- \ldots -}\bra{-\ldots -} 
\]
In the graphical language, the $Z$ and $X$ spiders are denoted as follows, in white and gray respectively:
\[
	\tikzfigscale{0.8}{spiders}
\]
When the phase is $\alpha = 0$, we often omit it altogether in the graphical notation. Thus, the CNOT gate is denoted as follows:
\begin{equation*}
	CNOT
	\hspace{5mm}:=\hspace{5mm}
	\tikzfigscale{0.8}{cnot}
\end{equation*}
The NOT gate, i.e. the Pauli $X$ gate, is denoted as follows:
\begin{equation*}
	NOT
	\hspace{5mm}:=\hspace{5mm}
	\tikzfigscale{0.8}{not}
\end{equation*}
For reasons of notational convenience, we choose the normalization for \graydot in such a way that the CNOT gate can be written without additional normalization scalars. The density matrix for maximally entangled Bell state is denoted by the \emph{cup}:
\[
	\frac{1}{2}\hspace{2mm}\tikzfigscale{0.8}{cup}
\]
The unique deterministic effect, aka the \emph{discarding map}, is denoted as follows:
\[
	\tikzfigscale{0.8}{causality2}
\]
The requirement that CP maps be trace-preserving is captured by the following equation, satisfied by all morphisms in $\Mix$:
\[
	\tikzfigscale{0.8}{causality1}
	\hspace{2mm} = \hspace{2mm}
	\tikzfigscale{0.8}{causality2}
\]
A process theory in which the equation above is satisfied by all morphisms is known as \emph{terminal}. Terminality of a process theory has been shown to be the same as satisfying the Relativistic constraints of causality and no-signaling \cite{coecke2016}.

\section{Fixed point of maximal entropy}
\label{appendix:fixedPoint}

\noindent Let $\Phi: \mathcal{H} \rightarrow \mathcal{H}$ be a be a CPTP map on a $d$-dimensional Hilbert space $\mathcal{H}$. Let $K := \left\{ \rho \mid \vert \Phi(\rho) = \rho \right\}$ be the convex set of normalized states of $\mathcal{H}$ fixed by $\Phi$. Let $G < U(d)/U(1)$ be the maximal subgroup sending $K$ to $K$ in the conjugation action:
\[
	u(\rho) := u \rho u^\dagger
\]
A complete characterization of the set of fixed points of a CPTP map can be found in \cite{blumekohout2008}. It can be shown that the unique fixed point under the action of $G$, i.e. the unique normalized state $\tau \in K$ such that $u(\tau)=\tau$ for all $u \in G$, is the average obtained by taking the following orbital integral for any $\sigma \in K$:
\[
	\tau := \int_G u(\sigma) \  du
\]
If now we take any $\sigma \in K$, we can use the concavity of the von Neumann entropy functional $S$ to deduce that:
\[
	S(\tau) = S\left(\int_G u(\sigma) \ du \right) \geq \int_G S(u(\sigma)) \ du
\]
And we can further use invariance of $S$ under the unitary conjugation action to deduce that:
\[
	\int_G S(u(\sigma)) \ du = \int_G S(\sigma) \ du = S(\sigma) \int_G du = S(\sigma)
\]
By putting the two together, we conclude that $S(\tau) \geq S(\sigma)$, where $\sigma$ was arbitrary, so that $\tau$ is really a fixed point of maximal entropy. The structure of the set of fixed point proven in \cite{blumekohout2008} can furthermore be used to conclude that $\tau$ is the \emph{unique} fixed point of maximal entropy.

\section{Weird features of the the D-CTC model}
\label{appendix:deutschFeatures}

\subsection{Nonlinearity}

\noindent To show that processes in the D-CTC model can be nonlinear, we consider the map used before in the grandfather's paradox, but swapping Z and X in the CNOT gate:
\begin{equation}
	\label{equation:timetravelTONCrho}
	\tikzfigscale{0.8}{timetravelTONCrho}
\end{equation}
The fixed point for input $\rho = \ket{1}\bra{1}$ is computed to be $\tau = \ket{1}\bra{1}$ as follows:
\begin{equation}
	\scalebox{0.8}{$
		\tikzfig{tau} = \tikzfig{nonlinearity_1}
		=
		\tikzfig{nonlinearity_7}
		=
		\hspace{2mm}
		\tikzfig{nonlinearity_8}
		=
		\hspace{5mm}
		\tikzfig{nonlinearity_2}
	$}
\end{equation}
Similarly, fixed point for input $\rho = \ket{0}\bra{0}$ is computed to be $\tau = \ket{0}\bra{0}$. Using these results, one can show that the D-CTC map given in \ref{equation:timetravelTONCrho} above sends both input states $\ket{0}\bra{0}$ and $\ket{1}\bra{1}$ to the output state $\ket{0}\bra{0}$:
\begin{equation}
\scalebox{0.8}{$
	\tikzfig{state1}
	\hspace{3mm}
	\mapsto
	\tikzfig{nonlinearity_5}
	=
	\hspace{3mm}
	\tikzfig{nonlinearity_10}
	=
	\hspace{3mm}
	\tikzfig{state0}
$}
\end{equation}
\begin{equation}
\scalebox{0.8}{$
	\tikzfig{state0}
	\hspace{3mm}
	\mapsto
	\tikzfig{nonlinearity_6}
	=
	\hspace{3mm}
	\tikzfig{nonlinearity_11}
	=
	\hspace{3mm}
	\tikzfig{state0}
$}
\end{equation}
However, things get somewhat weird if we add some amount $\epsilon$ of noise to the input state $\rho = \ket{0}\bra{0}$:
\begin{equation}
	\rho' = \dfrac{\epsilon}{2} \mathbb{1} + (1-\epsilon)\ket{0}\bra{0} = (1- \dfrac{\epsilon}{2})\ket{0}\bra{0} + \dfrac{\epsilon}{2} \ket{1}\bra{1}
\end{equation}
The fixed-point equation for this new input state becomes:
\begin{equation}
	\tikzfigscale{0.8}{tauprime} = (1-\dfrac{\epsilon}{2}) \tikzfigscale{0.8}{nonlinearity_2}+ \dfrac{\epsilon}{2}\tikzfigscale{0.8}{nonlinearity_20} = (1-\dfrac{\epsilon}{2}) \tikzfigscale{0.8}{state1}+\dfrac{\epsilon}{2}\tikzfigscale{0.8}{state0}
\end{equation}
Plugging the new fixed-point state $\tau'$ in, we see that the D-CTC map sends the perturbed state $\rho'$ to:
\begin{equation}
	\epsilon(1-\dfrac{\epsilon}{2})\tikzfigscale{0.8}{state1} + (1-\epsilon + \dfrac{\epsilon^2}{2})\tikzfigscale{0.8}{state0}
\end{equation}
From this we can conclude that maps in the D-CTC model are not necessarily linear.

\subsection{Discontinuity}

\noindent In order to show that quantum maps involving D-CTCs can be discontinuous as well as non-linear, consider the following:
\begin{equation}
	\tikzfigscale{0.8}{circuit}
\end{equation}
Where the first gate denotes a \emph{controlled SWAP} unitary gate. For input $\ket{0}\bra{0} \otimes \rho$, the fixed-point equation prescribed by the D-CTC model is as follows:
\begin{equation}
	\tikzfigscale{0.8}{tau}
	\hspace{3mm}
	=
	\hspace{3mm}
	\tikzfigscale{0.8}{discon2}
	\hspace{3mm}
	=
	\hspace{3mm}
	\tikzfigscale{0.8}{discon3}
\end{equation}
meaning that every state invariant under decoherence in the Z basis is a fixed point. This means that the fixed point of maximal entropy is simply the maximally mixed state, and we get the following evolution:
\begin{equation}
	\tikzfigscale{0.8}{state0} \ \ \tikzfigscale{0.8}{staterho}
	\hspace{2mm}
	\mapsto
	\hspace{2mm}
	\dfrac{1}{2} \tikzfigscale{0.8}{evolution_discontinuity}
	\hspace{2mm}
	=
	\hspace{2mm}
	\dfrac{1}{2} \tikzfigscale{0.8}{evolution_discontinuity2}
\end{equation}
For input $\ket{1}\bra{1} \otimes \rho$, the fixed-point equation prescribed by the D-CTC model is as follows:
\begin{equation}
	\tikzfigscale{0.8}{tau}
	\hspace{2mm}
	=
	\hspace{2mm}
	\tikzfigscale{0.8}{discon2new}
	\hspace{2mm}
	=
	\hspace{2mm}
	\tikzfigscale{0.8}{discon3new}
	\hspace{2mm}
	=
	\hspace{2mm}
	\tikzfigscale{0.8}{discon4new}
\end{equation}
There is a unique fixed point in this case, namely the decoherence of $\rho$ in the Z basis, hence we get the following evolution:
\begin{equation}
	\tikzfigscale{0.8}{state1} \ \ \tikzfigscale{0.8}{staterho}
	\hspace{1mm}
	\mapsto
	\hspace{1mm}
	\dfrac{1}{2} \tikzfigscale{0.8}{evolution_discontinuity_new}
	\hspace{1mm}
	=
	\hspace{1mm}
	\dfrac{1}{2} \tikzfigscale{0.8}{evolution_discontinuity2_new}
\end{equation}
We can now consider the evolution of the mixture $[(1-\epsilon) \ket{0} \bra{0} + \epsilon \ket{1} \bra{1}]  \otimes \rho$ for all $\epsilon \in (0,1)$, getting the following fixed-point equation:
\begin{equation}
	\tikzfigscale{0.8}{tau}
	\hspace{2mm}
	=
	\hspace{2mm}
	(1-\epsilon)\tikzfigscale{0.8}{discon3} + \epsilon \tikzfigscale{0.8}{discon3new}
\end{equation}
After a few calculations, this again yields the decoherence of $\rho$ in the Z basis as the unique solution:
\begin{equation}
	\tikzfigscale{0.8}{tau}=\tikzfigscale{0.8}{discon3newnew}
\end{equation}
This yields the following evolution for $\epsilon \in (0,1)$:
\begin{equation}
	(1- \epsilon) \ \tikzfigscale{0.8}{finalevol}+ \epsilon \ \  \tikzfigscale{0.8}{evolution_discontinuity2_new}
\end{equation}
If we let $\epsilon$ go to $0$, we see that the limiting value of the evolution is:
\begin{equation}
	\tikzfigscale{0.8}{finalevol} = \tikzfigscale{0.8}{finalevol2}
\end{equation}
This is different from the value of the evolution on initial state $\ket{0} \bra{0} \otimes \rho$, showing a discontinuity at $\epsilon = 0$.

\section{Cloning in the D-CTC model}
\label{appendix:cloning}

\noindent In \cite{brun2013}, the authors argue that the fact itself of sending a state inside the following CTC could be seen as a cloner,
\begin{equation}
	\tikzfigscale{0.8}{circuitcloning}
\end{equation}
but also observe that the $N$ copies inside the CTC are not actually available after the system leaves the ``wormhole''. Indeed, the fixed point for the map above is easily computed to be $\rho^{\otimes N}$, so that $N$ clones of the input state $\rho$ are truly created inside the CTC. However, the copies are never accessible from the outside, and so this is not actually an implementation of a cloner for the purposes of the CR region.

One could try to extract the copies by performing an interaction with the CTC involving a series of CNOTs:
\begin{equation}
	\tikzfigscale{0.8}{circuitcloning2}
\end{equation}
However, this results only in the creation of $N$ copies of the decoherence of $\rho$ in the Z basis (as can be seen in the following example with $N=3$):
\begin{align}
	&\tikzfigscale{0.8}{circuitcloning_stefano} \\
	= &\tikzfigscale{0.8}{circuitcloning_stefano3} \nonumber
\end{align}
This is already significant in itself: it allows cloning of classical probabilistic states (which are encoded in the Z basis, cloned and then measured again in the Z basis), something which is impossible in classical theory alone. It is not, however, a full quantum cloner.

To obtain a quantum cloner---or, more precisely, an approximate quantum cloner reaching full cloning fidelity in the limit $N \rightarrow \infty$---one can replace the decoherence in the Z basis:
\[
	\rho \mapsto \sum_{i = 0}^{d-1} \operatorname{Tr}\left[\ket{i}\bra{i} \rho\right] \ket{i}\bra{i}
\]
with a symmetric informationally complete positive-operator--valued measurement (SIC-POVM):
\[
	\rho \mapsto \sum_{x = 0}^{d^2-1} \operatorname{Tr}\left[M_x \rho\right] \ket{x} \bra{x}
\]
In the limit $N \rightarrow \infty$, measuring $N$ copies of the resulting state in the $\left(\ket{x}\bra{x}\right)_{x=0}^{d^2-1}$ basis results in full tomography of the state $\rho$, which can subsequently be cloned at will \cite{brun2013}.

\newcommand{\FramedCausalGraphsCategory}{\textbf{CausGraphs}}
\newcommand{\CRFramedCausalGraphsCategory}{\textbf{CausGraphs}^{\text{CR}}}
\newcommand{\CVLocalFramedCausalGraphsCategory}{\textbf{CausGraphs}^{\text{CV-Loc}}}

\section{The SMC of framed causal graphs}
\label{appendix:framedCausalGraphsSMC}

\noindent The \emph{category of framed causal graphs} $\FramedCausalGraphsCategory$ has the natural numbers as its objects. The morphisms $n \rightarrow m$ in $\FramedCausalGraphsCategory$ are the framed causal graphs $G$ with $\#\inputs{G} = n$ and $\#\outputs{G} = m$. Using the framing, we can canonically identify $\inputs{G}$ with the total order $\{0,...,n-1\}$, and $\inputs{G}$ with the total order $\{0,...,m-1\}$.

Composition $H \circ G$ of morphisms $G:n\rightarrow m$ and $H: m \rightarrow r$ in the category is given by gluing the intermediate ``open ends'' $\outputs{G}$ and $\inputs{H}$. Specifically, the graph $H\circ G$ has the following nodes:
\begin{itemize}
	\item $\nodes{H \circ G} := \nodes{G} \sqcup \nodes{H}$;
	\item $\inputs{H \circ G} := \inputs{G}$;
	\item $\outputs{H \circ G} := \outputs{H}$.
\end{itemize}
An edge $x\rightarrow y$ is in $H\circ G$ if and only if one of the following conditions holds:
\begin{enumerate}
	\item[(i)] $x\rightarrow y$ in $G$ and  $y \notin \outputs{G}$;
	\item[(ii)] $x\rightarrow y$ in $H$  and  $x \notin \inputs{H}$;
	\item[(iii)] there exists a $b \in \{0,...,m-1\}$ s.t. $x\rightarrow b$ in $G$ and $b\rightarrow y$ in $H$, where we have identified both $\outputs{G}$ and $\inputs{H}$ with the total order $\{0,...,m-1\}$.
\end{enumerate}
The identity $\id{A}: A \rightarrow A$ on a total order $A$ is given by the digraph with $A\times\{0\} \sqcup A\times\{1\}$ as set of nodes and $((a,0),(a,1))$ for all $a \in A$ as edges.
\begin{equation}
	\tikzfig{identity_causalgraph}
\end{equation}
The category $\FramedCausalGraphsCategory$ can be endowed with the following symmetric monoidal structure $(\FramedCausalGraphsCategory,\oplus,\emptyset,\sigma)$:
\begin{itemize}
	\item on objects, $A \oplus B$ is the sum total order $A + B$, where all elements of $A$ are taken to come before all elements of $B$;
	\item on morphisms, $G \oplus H$ is the disjoint union $G \sqcup H$ of digraphs $G$ and $H$, with $\inputs{G \oplus H} := \inputs{G} \sqcup \inputs{H}$ and $\outputs{G \oplus H} := \outputs{G} \sqcup \outputs{H}$;
	\item the tensor unit $\emptyset$ is the empty digraph, with $\inputs{\emptyset} = \emptyset = \outputs{\emptyset}$;
	\item the symmetry isomorphism $\sigma_{A,B}: A \oplus B \rightarrow B \oplus A$ is the digraph with $(A+B)\times\{0\} \sqcup (B+A)\times\{1\}$ as its set of nodes, and edges $((a,0),(a,1))$ and $((b,0),(b,1))$ for all $a \in A$ and all $b \in B$.
	\begin{equation}
		\tikzfig{symmetry_causalgraph}
	\end{equation}
\end{itemize}

Because of the way composition is defined in $\FramedCausalGraphsCategory$, framed causal graphs cannot be used to describe scenarios in which inputs/outputs live in a chronology-violating region: no new cycles can ever be created by sequential or parallel composition. From a physical perspective, this means that framed causal graphs can be used to describe regions of spacetime containing CTCs, but only with boundaries constrained to live in the chronology-respecting sector.

\section{Proofs}
\label{appendix:proofs}

\newcounter{theorem:save}
\setcounter{theorem:save}{\value{theorem_c}}

\setcounter{theorem_c}{\value{theorem:DMixSMC}}
\begin{theorem}
	The category \DMix is a symmetric monoidal category equipped with the following tensor product (and the swap CPTP maps from quantum theory):
	\begin{equation}
		\tikzfigscale{0.8}{monoidal_product}
		\hspace{4mm}
		:=
		\hspace{3mm}
		\tikzfigscale{0.8}{monoidal_product2}
	\end{equation}
\end{theorem}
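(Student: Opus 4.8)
The plan is to treat a morphism of \DMix as a $\sim$-class of a finite sequence of composable elementary morphisms, and to reduce every axiom to a statement about the behaviour of such sequences on arbitrary subsystems of arbitrary entangled states, which is exactly the data that $\sim$ records. Sequential composition is concatenation of sequences; at the level of representatives this is strictly associative and unital, the identities being the images of the identity CPTP maps under the embedding of Lemma~\ref{lemma:MixDMix_label}. The first point to secure is that concatenation descends to $\sim$-classes, i.e.\ that $f \sim f'$ and $g \sim g'$ (composable) imply $g \circ f \sim g' \circ f'$. This is immediate from the behavioural definition of $\sim$: feeding an input $\rho$ with ancilla $\mathcal{E}$ through the first factor yields the same state for $f$ and $f'$ (absorbing the intermediate system, together with $\mathcal{E}$, into a fresh ancilla), and feeding that state through the second factor yields the same state for $g$ and $g'$. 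Once this is in hand the whole category structure is essentially transparent.

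The tensor product is defined on elementary morphisms by the displayed diagram, which merges the two CV systems into $\mathcal{C}_1 \otimes \mathcal{C}_2$ and the two interactions into $\Phi_1 \otimes \Phi_2$, and is extended to arbitrary morphisms so as to satisfy the interchange law below. Everything then rests on a single \emph{independence lemma}: applied to any input, the combined elementary morphism coincides, as a morphism of \DMix (i.e.\ up to $\sim$), with resolving the two loops one at a time, say $(f \otimes \mathrm{id}) \circ (\mathrm{id} \otimes g)$. To prove it I would first observe that the joint Deutsch fixed-point map on $\mathcal{C}_1 \otimes \mathcal{C}_2$ has the two individual fixed-point maps as its marginals, so that every joint fixed point restricts to individual fixed points. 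For product inputs, subadditivity together with additivity of the von Neumann entropy under tensor product (Appendix~\ref{appendix:fixedPoint}) forces the maximal-entropy joint fixed point to be the tensor of the individual maximal-entropy fixed points, and the two resolutions agree exactly. For entangled inputs the joint fixed point need not factorise, and the crux of the argument is that the entanglement-breaking property of D-CTC interactions established earlier forces both the joint and the sequential resolution to destroy the CR correlations in the same way, so that the two morphisms still agree on every entangled test state and the independence lemma survives.

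With the independence lemma available, bifunctoriality is straightforward. Well-definedness of the tensor product with respect to $\sim$ in each argument reduces, via the lemma, to the already-established well-definedness of composition and of tensoring by an identity; the interchange law $(f' \otimes g') \circ (f \otimes g) \sim (f' \circ f) \otimes (g' \circ g)$ becomes the statement that operations carried out on disjoint CR subsystems through independent loops commute, which the lemma supplies; and $\mathrm{id} \otimes \mathrm{id} \sim \mathrm{id}$ is immediate. Strictness of the monoidal structure and the unit laws are inherited from a strict presentation of quantum theory, since on objects the tensor product is the (strict) Hilbert-space tensor product and the embedding of Lemma~\ref{lemma:MixDMix_label} is monoidal, with the trivial Hilbert space as unit.

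Finally, the symmetry isomorphisms are the images of the swap CPTP maps under the embedding. Their naturality against an arbitrary morphism of \DMix reduces, since morphisms are generated by elementary ones, to naturality against a single elementary morphism, which holds because permuting the CR wires commutes with the Deutsch resolution of a loop (the prescription depends equivariantly on the CR input and on $\Phi$); the two hexagon identities and $\sigma^2 = \mathrm{id}$ then follow because they hold in \Mix and are preserved by the monoidal embedding. The main obstacle is the independence lemma: because Deutsch's prescription is nonlinear and singles out a maximal-entropy fixed point, showing that two independent loops may be resolved jointly or sequentially with $\sim$-equal results genuinely requires combining the entropy-additivity characterisation of the fixed point with the entanglement-breaking behaviour of D-CTC interactions, rather than any formal manipulation of the diagrams.
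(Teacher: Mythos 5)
Your reduction of the whole theorem to a single ``independence lemma'' identifies the right pressure point, but the lemma as you formulate it is false, and the proposal collapses with it. You read the tensor product as a \emph{merged} elementary morphism: one CV system $\mathcal{C}_1 \otimes \mathcal{C}_2$, one interaction $\Phi_1 \otimes \Phi_2$, one joint Deutsch resolution. Take $\Phi_1 = \Phi_2 = \mathrm{SWAP}$ and feed in a Bell state $\Psi$ across $\mathcal{H}_1 \otimes \mathcal{H}_2$: the joint fixed-point map sends \emph{every} $\omega$ on $\mathcal{C}_1 \otimes \mathcal{C}_2$ to $\Psi$, so the unique (hence maximal-entropy) joint fixed point is $\Psi$ itself, and the merged loop outputs the Bell state intact. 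Resolving the loops one at a time instead, each SWAP loop breaks the entanglement (exactly as in the paper's entanglement-breaking section), and the output is $\tfrac{1}{4}\mathbb{1}$. So merged $\neq$ sequential on entangled inputs, and your appeal to ``entanglement breaking forcing both resolutions to destroy the CR correlations in the same way'' is precisely the point at which the claim fails: the merged loop does not break entanglement at all. (Your product-input entropy argument also silently uses that products of individual fixed points are joint fixed points; this too fails for entangled inputs --- in the example above, $\tfrac{1}{4}\mathbb{1}$ is not a fixed point of the joint map, whose fixed-point set is $\{\Psi\}$.) Note that if your reading of the definition were the intended one, the theorem itself would be false, since any symmetric monoidal category must satisfy $f \otimes g = (f \otimes \mathrm{id}) \circ (\mathrm{id} \otimes g)$.

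The paper never forms a joint fixed point. The tensor of two elementary morphisms keeps the two CTCs as \emph{separate} loops, each resolved by its own Deutsch prescription (compare Definition \ref{definition:morphismsOverCVlocal}: distinct interaction nodes get distinct super-operator applications), and well-definedness is the statement that the \emph{order} in which the two loops are resolved is irrelevant. This is proved by an observation much more elementary than your entropy argument: the fixed-point equation for a given loop depends only on the marginal of the input state on that loop's own CR system, and this marginal is unchanged by the other loop's resolution, because the other interaction is CPTP and everything else is traced out. Hence the two orders of resolution produce literally the same maximal-entropy fixed points, $\tau = \tau'$ and $\sigma = \sigma'$; once these states are frozen in, both sides are ordinary compositions of CPTP maps in \Mix differing only by sliding boxes past one another, i.e.\ by the interchange law of \Mix. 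If you want to salvage your write-up, replace your independence lemma by this order-independence statement; your treatment of composition, units and the symmetry isomorphisms can then remain essentially as it stands.
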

\begin{proof}
	Proving that the monoidal product above is well-defined essentially reduces to showing that the following equation holds for arbitrary $\Psi$, $f$ and $g$:
	\begin{equation}
		\tikzfigscale{0.8}{equation_monoidal_product1}
		\hspace{3mm} = \hspace{3mm}
		\tikzfigscale{0.8}{equation_monoidal_product_new}
	\end{equation}
	By definition of the D-CTC model, the LHS can be rewritten as follows:
	\begin{equation}
		\tikzfigscale{0.8}{equation_monoidal_product1} \hspace{3mm} = \hspace{3mm} \tikzfigscale{0.8}{equation_monoidal_product4}
	\end{equation}
	where $\tau$ and $\sigma$ satisfy the following fixed-point equations:
	\begin{equation}
		\tikzfigscale{0.6}{equation_monoidal_product_fp} \hspace{3mm} = \hspace{3mm} \tikzfigscale{0.6}{tau}
	\end{equation}
	\begin{equation}
		\tikzfigscale{0.6}{sigma} \hspace{2mm} = \hspace{2mm} \tikzfigscale{0.6}{equation_monoidal_product_fpsigma} \hspace{2mm} = \hspace{2mm} \tikzfigscale{0.6}{equation_monoidal_product_fpsigma2}
	\end{equation}
	Similarly, the RHS can be rewritten as follows:
	\begin{equation}
		\tikzfigscale{0.8}{equation_monoidal_product_new} \hspace{3mm} = \hspace{3mm} \tikzfigscale{0.8}{equation_monoidal_product_newfp}
	\end{equation}
	where $\tau'$ and $\sigma'$ satisfy the same fixed-point equations as $\tau$ and $\sigma$. For example, the equation for $\sigma'$ is:
	\begin{equation}
		 \tikzfigscale{0.6}{sigmaprime} \hspace{3mm} = \hspace{3mm} \tikzfigscale{0.6}{equation_monoidal_product_fpsigma2prime}
	\end{equation}
	We are then left with simple sliding of CPTP maps:
	\begin{equation}
		\tikzfigscale{0.8}{equation_product_new} \hspace{2mm} = \hspace{2mm} \tikzfigscale{0.8}{equation_product_new3} \hspace{2mm} = \hspace{2mm}  \tikzfigscale{0.8}{equation_product_new2}
	\end{equation}
\end{proof}

\setcounter{theorem_c}{\value{lemma:MixDMix}}
\begin{lemma}
	The category \Mix of finite-dimensional Hilbert spaces and CPTP maps is faithfully embedded in \DMix by the following monoidal functor:
	\begin{equation}
		\tikzfigscale{0.8}{embedding1new}
		\hspace{3mm}
		\mapsto
		\hspace{3mm}
		\tikzfigscale{0.8}{embedding2new}
	\end{equation}
\end{lemma}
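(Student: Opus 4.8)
The plan is to take the functor $F\colon \Mix \to \DMix$ to be the identity on objects and to send a CPTP map $f\colon \mathcal{H}\to\mathcal{K}$ to (the $\sim$-class of) the elementary morphism whose interaction map is $f$ itself and whose CV system $\mathcal{C}$ is the tensor unit, i.e.\ the one-dimensional Hilbert space. Since morphisms of \Mix are honest CPTP maps, this assignment is manifestly well-defined; the only content is that it lands in \DMix and is a faithful monoidal functor. The key preliminary observation, on which everything else rests, is that an elementary morphism with trivial CV system carries no non-trivial fixed-point computation: Deutsch's prescription over a zero-dimensional loop returns the interaction map unchanged. Hence $F(f)$ acts on states — and, more importantly, on subsystems of arbitrary entangled states — exactly as the ordinary CPTP map $f$ does, and in particular it acts \emph{linearly}. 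I would record this observation first, as both functoriality and faithfulness reduce to it.

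Next I would check that $F$ is a monoidal functor. Preservation of identities is immediate, since $\id{\mathcal{H}}$ maps to the trivial-CV elementary morphism with interaction $\id{\mathcal{H}}$, which represents the identity of \DMix. For composition, $F(g)\circ F(f)$ is by definition the length-two sequence of the two trivial-CV elementary morphisms; by the preliminary observation this sequence acts as the CPTP map $g\circ f$ on every entangled input, so it is $\sim$-equivalent to the single trivial-CV elementary morphism $F(g\circ f)$. Monoidality reduces, through the tensor product defining \DMix, to the fact that tensoring two trivial-CV elementary morphisms yields the trivial-CV elementary morphism of $f\otimes g$, and compatibility with the symmetry isomorphisms is automatic because these are inherited from quantum theory on both sides. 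These are routine diagrammatic verifications once the preliminary observation is available.

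The crux is faithfulness, i.e.\ injectivity on hom-sets. Suppose $F(f)=F(g)$ in \DMix, which means $F(f)\sim F(g)$. By definition of $\sim$, the two morphisms agree when applied to an arbitrary subsystem of an arbitrary entangled state; since both act linearly, it suffices to test them on a single well-chosen input. Taking the ancilla $\mathcal{E}\cong\mathcal{H}$ and $\rho$ the normalized maximally entangled state, equality of the two outputs $(f\otimes\id{\mathcal{E}})(\rho)=(g\otimes\id{\mathcal{E}})(\rho)$ is exactly equality of the Choi matrices of $f$ and $g$, whence $f=g$ by the Choi--Jamio\l{}kowski isomorphism. The main obstacle is therefore conceptual rather than computational: one must resist recovering $f$ from its action on \emph{single-system} states, which would fail precisely because \DMix is not locally process tomographic, and instead exploit the fact that the equivalence relation $\sim$ already quantifies over entangled inputs — which is exactly the data needed for full process tomography of a genuinely linear map.
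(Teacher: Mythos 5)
Your proposal is correct and follows the same overall structure as the paper's proof: the same functor (elementary morphism with trivial CV system $\mathcal{C}=\mathbb{C}$), the same key observation that the fixed-point equation over the tensor unit is trivial so that $F(f)$ acts on arbitrary (entangled) inputs exactly as the linear CPTP map $f$, and the same reduction of functoriality and monoidality to that observation. The one point where you genuinely diverge is faithfulness. You argue directly: $F(f)\sim F(g)$ gives agreement on the maximally entangled input, i.e.\ equality of Choi matrices, hence $f=g$ by Choi--Jamio\l{}kowski. The paper argues the contrapositive with a strictly weaker witness: if $f\neq g$ then, since CPTP maps are linear and density matrices span the operator space, there is already a \emph{single-system} state $\rho$ with $f(\rho)\neq g(\rho)$, and this witnesses $F(f)\not\sim F(g)$ (take the ancilla $\mathcal{E}$ trivial in the definition of $\sim$). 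Both are valid; yours exhibits one canonical test input, the paper's shows that no entanglement is needed at all. This exposes a slip in your closing conceptual remark: it is \emph{not} true that recovering $f$ from its action on single-system states ``would fail'' here. The failure of local process tomography in \DMix concerns genuinely nonlinear D-CTC morphisms; morphisms in the image of the embedding are linear, and for linear maps single-system tomography suffices---indeed that is exactly what the paper's faithfulness argument uses. The slip is harmless to your proof as written, but the conceptual point is the opposite of what you state.
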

\begin{proof}
	We begin by checking that this mapping, which we shall call $F$, is a functor, i.e. that:
	\[
	F(f) \circ_{\DMix} F(g) = F(f \circ g)
	\hspace{10mm}
	F(\id{}) = \id{}
	\]
	Indeed, the LHS of the composition equation is just the composition of the two elementary boxes:
	\[
	F(f) \circ_{\DMix} F(g) = \tikzfigscale{0.8}{graphical_composition_Ff_Fg}
	\]
	while the RHS is written as follows:
	\[
	F(f \circ g) = \tikzfigscale{0.8}{graphical_composition_Ffg}
	\]
	To show that the two morphisms are equal, we need to show that they are both in the same equivalence class for the relation $\sim$, i.e that for an arbitrary auxiliary system and bipartite state $\rho$ we have:
	\[
	\tikzfigscale{0.8}{graphical_composition_Ff_Fg2}
	\hspace{3mm}
	=
	\hspace{3mm}
	\tikzfigscale{0.8}{graphical_composition_Ffg2}
	\]
	Because the only normalized state on the tensor unit $\mathbb{C}$ is the scalar $1$, the fixed-point equations are all trivial, and the output states for both maps are equal to $\big(\id{} \otimes (f \circ g)\big)(\Phi)$. Similarly one can show that the identity of $\Mix$ is sent by $F$ to the identity of $\DMix$, and that $F(\id{} \otimes f) = \id{} \otimes F(f)$. The latter in turn implies that the functor is monoidal.

	To show that the functor is faithful, suppose $f \neq g$. Then, there is a state $\rho$ such that $f(\rho) \neq g(\rho)$ in $\Mix$. This immediately yields the following inequality, proving that $F(f) \not{\hspace{-2.5mm}\sim} F(g)$:
	\begin{equation}
		F\big(g(\rho)\big) = \tikzfigscale{0.8}{grho} \ \  \neq \ \ \tikzfigscale{0.8}{frho} = F\big(f(\rho)\big)
	\end{equation}

	The triviality of fixed-point equations involving the tensor unit $\mathbb{C}$ means, in particular, that a CTC containing the identity on the tensor unit yields the scalar 1. This justifies the following notational convention:
	\begin{equation}
		\tikzfigscale{0.8}{convention} \ \ = \ \ \tikzfigscale{0.8}{convention2}
	\end{equation}
\end{proof}

\setcounter{theorem_c}{\value{theorem:DMixTerminal}}
\begin{lemma}
	The category \DMix is terminal, i.e. on any given system $\mathcal{H}$ the only effect is the discarding map inherited from ordinary quantum theory:
	\begin{equation}
		\tikzfigscale{0.8}{discarding}
	\end{equation}
	In particular, the D-CTC model respects both no-signaling and causality (no-signaling from the future).
\end{lemma}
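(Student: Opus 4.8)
The plan is to exploit the fact, established in the definition of \DMix, that every morphism is (an equivalence class of) a finite sequence of elementary morphisms, together with the trace-preservation of the underlying CPTP maps. First I would observe that an effect $\mathcal{H} \rightarrow \mathbb{C}$ in \DMix is represented by such a sequence $e_n \circ \cdots \circ e_1$ whose final elementary morphism $e_n$ has codomain equal to the tensor unit. The strategy is then to show that the inherited discarding map interacts well with elementary morphisms, and to peel the sequence off one morphism at a time from the top.

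The central claim I would establish is that each elementary morphism is \emph{causal}: for an elementary morphism $f: \mathcal{H} \rightarrow \mathcal{K}$ built from a CPTP map $\Phi: \mathcal{H} \otimes \mathcal{C} \rightarrow \mathcal{K} \otimes \mathcal{C}$, discarding the output of $f$ equals discarding its input. On a single-system input $\rho$, the D-CTC output is $\mathrm{Tr}_{\mathcal{C}}[\Phi(\rho \otimes \tau)]$ with $\tau$ the maximal-entropy fixed point on $\mathcal{C}$; discarding $\mathcal{K}$ yields $\mathrm{Tr}_{\mathcal{K} \otimes \mathcal{C}}[\Phi(\rho \otimes \tau)] = \mathrm{Tr}[\rho \otimes \tau] = 1$ by trace-preservation of $\Phi$ and normalization of $\tau$. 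The point that must be treated carefully is that, because local process tomography fails (Inequality \ref{equation:failureOfTomography}), the equivalence relation $\sim$ of equation \ref{equation:equilitymorphisms} forces us to check this on an arbitrary subsystem of an arbitrary entangled state $\rho$ on $\mathcal{H} \otimes \mathcal{E}$, not merely on single-system inputs. Since $\Phi$ acts only on $\mathcal{H} \otimes \mathcal{C}$, the fixed-point equation for $\tau$ reduces to $\tau = \mathrm{Tr}_{\mathcal{K}}[\Phi(\mathrm{Tr}_{\mathcal{E}}[\rho] \otimes \tau)]$ and so depends only on the marginal on $\mathcal{H}$; applying $(\mathrm{Tr}_{\mathcal{K}} \circ \Phi) \otimes \mathrm{id}_{\mathcal{E}}$ and using $\mathrm{Tr}_{\mathcal{K} \otimes \mathcal{C}} \circ \Phi = \mathrm{Tr}_{\mathcal{H} \otimes \mathcal{C}}$ gives exactly $\mathrm{Tr}_{\mathcal{H}}[\rho]$ on $\mathcal{E}$, which is precisely what the discarding map leaves behind.

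With causality of elementary morphisms in hand, I would conclude by backward induction on the length of the sequence. The final elementary morphism $e_n: \mathcal{K}_{n-1} \rightarrow \mathbb{C}$ is itself an elementary effect, so by the claim it is $\sim$-equivalent to the discarding map on $\mathcal{K}_{n-1}$. Propagating causality down the chain, discarding the output of $e_{n-1}$ equals discarding $\mathcal{K}_{n-2}$, and so on, until one reaches the discarding map on $\mathcal{H}$. Hence every effect in \DMix is $\sim$-equivalent to the discarding map inherited from quantum theory, which establishes terminality; no-signaling and causality (no-signaling from the future) then follow immediately from the equivalence between terminality and the Relativistic constraints proved in \cite{coecke2016}.

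The main obstacle I anticipate is precisely the handling of the equivalence relation $\sim$ in the elementary case: it is not enough to verify the discarding identity on single-system inputs, and a naive argument that ignores the auxiliary system $\mathcal{E}$ would be incomplete. The resolution is the observation that the CV fixed point $\tau$ is determined solely by the CR marginal and remains normalized, so it factors out under the partial trace. In this way the entanglement-breaking behaviour of the D-CTC model, far from being an obstruction, is exactly what guarantees that the bystander system $\mathcal{E}$ is returned untouched as its own marginal.
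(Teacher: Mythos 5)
Your proposal is correct and follows essentially the same route as the paper: both reduce terminality to the causality of elementary morphisms (discarding after an elementary box equals discarding the input), verified against an arbitrary subsystem of an arbitrary entangled state using trace-preservation of $\Phi$ and normalization of the fixed point $\tau$. The only differences are cosmetic---you spell out the peeling induction that the paper compresses into ``by definition of \DMix, it suffices to show\ldots'', and your observation that $\tau$ depends only on the CR marginal, while true, is not actually needed (normalization of $\tau$ alone suffices).
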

\begin{proof}
	By definition of $\DMix$, it suffices to show that elementary boxes satisfy the following equivalence as CPTP maps:
	\[
		\tikzfigscale{0.8}{terminality0new} \sim \tikzfigscale{0.8}{discarding}
	\]
	This is immediate to check against an arbitrary bipartite state $\rho$, using the fact that $f$ is a CPTP map itself, and that the fixed-point state $\tau$ is normalized by definition:
	\[
		\tikzfigscale{0.8}{terminality1} = \tikzfigscale{0.8}{terminality2} =\tikzfigscale{0.8}{terminality3}=\tikzfigscale{0.8}{terminality4}
	\]
\end{proof}

\setcounter{theorem_c}{\value{lemma:causalSetsNontransAcyclicDiraphs}}
\begin{lemma}
	A causal set is the same as a non-transitive acyclic digraph (directed graph).
\end{lemma}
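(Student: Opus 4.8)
The plan is to exhibit two mutually inverse constructions: one sending a causal set to a digraph, the other sending a non-transitive acyclic digraph to a causal set. First I would send a causal set $(C,\preceq)$ to the digraph on vertex set $C$ whose edges are the \emph{covering pairs}, i.e.\ I put an edge $x \rightarrow y$ exactly when $x \prec y$ and there is no $z$ with $x \prec z \prec y$. In the other direction, I would send a digraph $G$ to the relation $\preceq$ on its vertices defined by \emph{reachability}: $x \preceq y$ iff $x = y$ or there is a directed path from $x$ to $y$.

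Second I would check that each construction lands in the right class. For the digraph-to-relation map, reflexivity and transitivity of $\preceq$ are immediate from concatenation of paths, while antisymmetry follows from acyclicity (a pair $x \prec y \prec x$ would close a directed cycle); thus $\preceq$ is a genuine partial order. For the causal-set-to-digraph map, the covering digraph is acyclic because a cycle would force $x \prec x$, and it is non-transitive essentially by construction: if $x \rightarrow y$ and $y \rightarrow z$ are covering edges then $x \prec y \prec z$ places $y$ strictly between $x$ and $z$, so $x \rightarrow z$ cannot be a covering edge.

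The crux---and the step I expect to be the main obstacle---is local finiteness, which is exactly what makes the covering relation generate the whole order and hence makes the two maps mutually inverse. I would prove the key lemma that in a locally finite poset every strict relation $x \prec z$ is witnessed by a finite chain of covers $x = x_0 \lessdot x_1 \lessdot \cdots \lessdot x_n = z$: this holds because the interval $\{y \mid x \preceq y \preceq z\}$ is finite by hypothesis, so a maximal chain inside it is finite and has consecutive elements related by covers (an element strictly between two consecutive ones would itself lie in the finite interval, contradicting maximality). Consequently the reachability order of the covering digraph coincides with $\preceq$, giving one round-trip identity; the other round-trip identity amounts to showing that a non-transitive digraph equals the covering digraph of its own reachability order, which is precisely the content of non-transitivity together with acyclicity. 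The remaining care is to match the finiteness conditions across the correspondence---local finiteness of the poset translating into the requirement that only finitely many vertices lie on directed paths between any two given vertices---after which the verification that the two constructions are inverse bijections is routine.
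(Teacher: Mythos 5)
Your constructions are exactly the paper's: covering pairs ($x \to y$ iff $x \prec y$ with nothing strictly between) in one direction, and the reflexive--transitive closure (reachability) in the other, followed by the two round-trip checks. On the poset-to-digraph round trip you are in fact more careful than the paper, which merely asserts that local finiteness lets $\prec$ be recovered as the transitive closure of the covering relation; your maximal-chain argument inside the finite interval is the right proof of that assertion. You are also right to insist on matching the finiteness conditions: the paper claims that non-transitivity of the digraph \emph{implies} local finiteness of the reachability order, and this is false for infinite digraphs (take vertices $x$, $z$ and $y_i$ for $i \in \mathbb{N}$, with edges $x \to y_i$ and $y_i \to z$: acyclic and non-transitive, but the interval from $x$ to $z$ is infinite). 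Your extra requirement---only finitely many vertices on directed paths between any two given vertices---is exactly what is needed there, and it is preserved by both constructions.

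There is, however, a genuine gap in your digraph-to-poset round trip, one which the paper shares (it simply declares the two maps ``evidently inverse of each other''). You claim that a non-transitive acyclic digraph equals the covering digraph of its own reachability order, and that this ``is precisely the content of non-transitivity together with acyclicity.'' It is not, under the stated definition of non-transitive (no edge $x \to z$ when $x \to y$ and $y \to z$ are edges). Counterexample: vertices $x, a, b, y$ with edges $x \to y$, $x \to a$, $a \to b$, $b \to y$. This digraph is acyclic and non-transitive in the stated sense, yet the edge $x \to y$ is not a covering pair of the reachability order, because $x \prec a \prec y$; so the digraph is not recovered from its order, and the correspondence fails to be a bijection. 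What the argument actually requires is the stronger condition that the digraph be \emph{transitively reduced}: no edge $x \to y$ may coexist with a directed path from $x$ to $y$ of length at least two (equivalently, the edge relation coincides with the covering relation of its own reachability order). With ``non-transitive'' read in that stronger sense---which is clearly what both you and the paper intend---your proof, and the paper's, goes through.
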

\begin{proof}
	Starting from a causal set $(C, \preceq)$, a non-transitive acyclic digraph is constructed with the elements of $C$ as its nodes, and edges $v_0 \rightarrow v_1$ for all $v_0,v_1 \in C$ such that (i) $v_0 \prec v_1$ and (ii) there is no $z \in C$ such that $v_0 \prec z \prec v_1$. The fact that $\preceq$ is a partial order implies that the resulting graph is acyclic, and the fact that it is locally finite implies that $\prec$ can be recovered as the transitive closure of $\rightarrow$.
	Starting from a non-transitive acyclic digraph, a causal set is constructed with elements $C$ equal to the nodes of the graph, and $\preceq$ the reflexive and transitive closure of directed edge relation $\rightarrow$ in the digraph. The fact that the graph is acyclic implies that $\preceq$ is a partial order, while the fact that the graph is non-transitive implies that $\preceq$ is locally finite.
	The two maps are evidently inverse of each other, completing our proof.
\end{proof}

\setcounter{theorem_c}{\value{theorem:DMixProperties}}
\begin{theorem}
	D-CTCs and the category $\DMix$ define a time-travel super-operator for quantum theory, i.e. for the symmetric monoidal category $\Mix$ of finite-dimensional Hilbert spaces and CPTP maps. The same is true of P-CTCs and the category $\Mix_{sym}$.
\end{theorem}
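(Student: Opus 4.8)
The plan is to exhibit, for each model, a super-category $\textbf{D}$ with a faithful monoidal embedding of $\Mix$, and then to verify the four axioms of Definition~\ref{def:time-travel}. For the D-CTC model I would take $\textbf{D} = \DMix$ with the embedding $F\colon \Mix \hookrightarrow \DMix$ of Lemma~\ref{lemma:MixDMix_label}, and define $\Xi_{\mathcal{H};\mathcal{C}}^{\mathcal{K};\mathcal{C}}(\Phi)$ to be the elementary morphism obtained by closing the $\mathcal{C}$-wire of $\Phi$ into a CTC. By the definition of $\DMix$ this is automatically a morphism $\mathcal{H}\to\mathcal{K}$, so the super-operator is well-typed with nothing to check; every subsequent equality lives in $\DMix$ and is therefore tested against $\sim$, i.e.\ by feeding an arbitrary subsystem of an arbitrary entangled state and comparing the outputs of Deutsch's fixed-point prescription. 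The structural fact I would use throughout is that the maximal-entropy fixed point $\tau$ depends only on the CV-reduced channel $A_\rho(\xi):=\operatorname{Tr}_{\mathcal{K}}[\Phi(\rho\otimes\xi)]$ of $\Phi$ and on the reduced CR input $\rho$, after which the CR output is $\operatorname{Tr}_{\mathcal{C}}[\Phi(\rho\otimes\tau)]$.

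Three of the four axioms follow quickly from this observation. For \emph{vanishing} (equation~\ref{equation:vanishingproperty}), when $\mathcal{C}$ is the tensor unit the fixed-point equation becomes trivial, exactly as in the proof of Lemma~\ref{lemma:MixDMix_label}, so $\Xi$ collapses onto the embedding $F$. For \emph{naturality in the CR region}, a CR-side CPTP map acts only on discarded or CR wires and hence leaves $A_\rho$ untouched; the fixed point is unchanged and both sides act identically on every entangled input. For \emph{strength}, a bystander CR wire is likewise discarded by $A_\rho$, so the fixed point is again unchanged and the bystander passes straight through, which together with the monoidal structure on $\DMix$ yields the required equality.

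The hard part will be the \emph{sliding} axiom (equation~\ref{equation:slidingproperty}). Writing a CV-side CPTP map $g\colon \mathcal{C}\to\mathcal{C}$, sliding compares $\Xi\big((\mathrm{id}_{\mathcal{K}}\otimes g)\circ\Phi\big)$ with $\Xi\big(\Phi\circ(\mathrm{id}_{\mathcal{H}}\otimes g)\big)$: the first closes a loop whose CV channel is $g\circ A_\rho$ with fixed point $\tau$, the second a loop whose CV channel is $A_\rho\circ g$ with fixed point $\sigma$. A direct computation shows the two output states coincide precisely when $\tau = g(\sigma)$, and at the level of arbitrary fixed points the assignments $\sigma \mapsto g(\sigma)$ and $\tau \mapsto A_\rho(\tau)$ furnish mutually inverse bijections between the fixed-point sets of $A_\rho\circ g$ and $g\circ A_\rho$. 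The genuine obstacle is that $g$ is an arbitrary (possibly entropy-decreasing) CPTP map, so it is not obvious that this bijection carries the \emph{maximal-entropy} fixed point to the maximal-entropy fixed point; I would close this gap using the symmetry-group characterization of the maximal-entropy fixed point from Appendix~\ref{appendix:fixedPoint}, showing that $g$ intertwines the two symmetry groups and therefore sends the unique group-invariant state of one channel to that of the other.

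For the P-CTC model the argument is comparatively immediate. I would take $\textbf{D} = \Mix_{sym}$; every CPTP map is trace-preserving and hence satisfies condition~\ref{eqn:condition_mixsym}, while on such maps the $\Mix_{sym}$-composition~\ref{eqn:rule_composition} has normalization factor $1$ and so reduces to ordinary composition, giving a faithful monoidal embedding of $\Mix$. Since $\Mix_{sym}$ is compact closed it is traced, and the induced (normalized partial) trace supplies $\Xi$. The four properties of Definition~\ref{def:time-travel} are exactly the trace axioms with yanking omitted, so they hold automatically, completing the proof.
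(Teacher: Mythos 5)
Your overall architecture coincides with the paper's: the same super-operator (closing the $\mathcal{C}$-wire into a CTC inside \DMix, with the embedding of Lemma \ref{lemma:MixDMix_label}), the same dispatch of vanishing via triviality of the fixed-point equation on the tensor unit, the same treatment of naturality and strength via invariance of the CV-reduced channel and the reduced CR input, the same handling of P-CTCs through the trace structure of $\Mix_{sym}$, and the same identification of sliding as the crux, including the mutually inverse bijections $\sigma \mapsto g(\sigma)$ and $\tau \mapsto A_\rho(\tau)$ between $\mathrm{Fix}(A_\rho \circ g)$ and $\mathrm{Fix}(g \circ A_\rho)$, which appear verbatim in the paper's sliding lemma as the restrictions of its maps $f$ and $g$ to the sets $P$ and $Q$.

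The genuine gap is in how you propose to finish sliding. You correctly isolate the difficulty---the bijection must carry the maximal-entropy fixed point to the maximal-entropy fixed point---but the fix you sketch (``show $g$ intertwines the two symmetry groups of Appendix \ref{appendix:fixedPoint}'') is asserted rather than argued, and it is not clear it can be made to work. The symmetry group in Appendix \ref{appendix:fixedPoint} consists of \emph{unitary} conjugations preserving the fixed-point set; to intertwine, you would need, for each unitary $v$ preserving $\mathrm{Fix}(g \circ A_\rho)$, that the affine automorphism $A_\rho \circ \mathrm{Ad}_v \circ g$ of $\mathrm{Fix}(A_\rho \circ g)$ again be implemented by a unitary in that set's symmetry group. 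But by the structure theory of fixed-point sets \cite{blumekohout2008}, an affine automorphism of such a set is in general only a Jordan-type map---it may involve transposes and permutations of blocks---and affine automorphisms of this kind need not fix the ambient maximal-entropy state (a swap of two blocks whose frozen tensor factors have different entropies moves it). So establishing the intertwining into the unitary group is at least as hard as the statement you are trying to prove, and you offer no route to it. What actually closes this gap in the paper is an ingredient you never invoke: CPTP maps are contractive in trace distance, so the two mutually inverse restrictions are forced to be \emph{isometries} between the fixed-point sets; an isometric affine bijection pushes the canonical measure of one set to that of the other, hence sends the barycenter $\int_P \rho \, d\rho$---which by Appendix \ref{appendix:fixedPoint} is the maximal-entropy fixed point---to the corresponding barycenter. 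Substituting this contractivity--isometry--barycenter argument for your intertwining step would complete your proof; as written, the sliding axiom (Equation \ref{equation:slidingproperty}) remains unproven for D-CTCs. (A minor additional point: the two output states coincide \emph{if} $\tau = g(\sigma)$; your ``precisely when'' claims an unneeded and unjustified converse.)
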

\begin{proof}
	It is straightforward to check that traced monoidal categories \cite{selinger2011} satisfy the properties required for P-CTCs and $\Mix_{sym}$ to define a time-travel super-operator for quantum theory. We have now to show that the super-operator defined by the D-CTC model satisfies the four properties required of time-travel super-operators. This is shown by Lemmas \ref{lemma:deutsch_naturality}, \ref{lemma:deutsch_strength}, \ref{lemma:deutsch_sliding} and \ref{lemma:deutsch_vanishing} below.
\end{proof}

\setcounter{theorem_c}{\value{theorem:save}}

\begin{lemma}
\label{lemma:deutsch_naturality}
	The super-operator defined in the D-CTC model satisfies the property of naturality in the CR region:
	\[
		\tikzfig{naturalityCR} = \tikzfig{naturalityCR2}
	\]
\end{lemma}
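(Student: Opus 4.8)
The plan is to reduce the diagrammatic identity to a concrete computation with Deutsch's fixed-point prescription. By the definition of $\DMix$, the two elementary morphisms appearing on the left and right of the claimed equation are equal precisely when they are $\sim$-equivalent, so it suffices to apply both sides to an arbitrary bipartite input state $\rho$ on $\mathcal{E} \otimes \mathcal{H}'$, with $\mathcal{E}$ an arbitrary spectator system, and to verify that the resulting output states on $\mathcal{E} \otimes \mathcal{K}'$ coincide. This is the step that upgrades the single-system prescription to a genuine statement in $\DMix$, and it is forced on us by the failure of local process tomography recorded in Inequality \ref{equation:failureOfTomography}.

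Writing $h \colon \mathcal{H}' \to \mathcal{H}$ and $g \colon \mathcal{K} \to \mathcal{K}'$ for the CR-region CPTP maps sitting on the input and output wires, I would first unfold the left-hand side, where these maps have been absorbed into the interaction, so that the relevant CPTP map is $\Phi' := (g \otimes \id{\mathcal{C}}) \circ \Phi \circ (h \otimes \id{\mathcal{C}})$. Its Deutsch fixed-point equation on the CV system reads $\tau = \operatorname{Tr}_{\mathcal{E} \otimes \mathcal{K}'}\big[(\id{\mathcal{E}} \otimes \Phi')(\rho \otimes \tau)\big]$. The key observation is that this collapses to the fixed-point equation for $\Phi$ alone: tracing out the spectator commutes past $\Phi$ and $h$ and replaces $\rho$ by its reduced state $\operatorname{Tr}_{\mathcal{E}}[\rho]$ on $\mathcal{H}'$; the input map $h$ merely transports this to $h(\operatorname{Tr}_{\mathcal{E}}[\rho])$ on $\mathcal{H}$; and the output map $g$ disappears entirely, because $g$ is trace-preserving and the output CR system is traced out, giving $\operatorname{Tr}_{\mathcal{K}'}[(g \otimes \id{\mathcal{C}})(X)] = \operatorname{Tr}_{\mathcal{K}}[X]$. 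Thus the fixed-point equation reduces to $\tau = \operatorname{Tr}_{\mathcal{K}}[\Phi(h(\operatorname{Tr}_{\mathcal{E}}[\rho]) \otimes \tau)]$, which is exactly Deutsch's equation for $\Phi$ on the input $h(\operatorname{Tr}_{\mathcal{E}}[\rho])$.

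I would then conclude that the fixed-point equation, and hence its maximal-entropy solution $\tau$, is literally identical on the two sides of the claimed identity. Feeding this common $\tau$ back in, the output of the left-hand side is $(\id{\mathcal{E}} \otimes g)$ applied to $\operatorname{Tr}_{\mathcal{C}}\big[(\id{\mathcal{E}} \otimes \Phi)((\id{\mathcal{E}} \otimes h)(\rho) \otimes \tau)\big]$, which is precisely $g$ post-composed with $\Xi(\Phi)$ acting on the input $\rho$ pre-transformed by $h$ — that is, the output of the right-hand side. Since $\rho$ and $\mathcal{E}$ were arbitrary, this establishes the $\sim$-equivalence and therefore the desired equality in $\DMix$.

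The routine bookkeeping, namely the naturality of partial traces with respect to the spectator $\mathcal{E}$ and the interchange of $g$ with the trace over $\mathcal{C}$, is standard string-diagram manipulation. The engine of the argument is trace-preservation of the output CR map, which is what lets $g$ drop out of the fixed-point equation and is ultimately the same ingredient as terminality of $\Mix$. The one genuinely delicate point is the claim that the \emph{maximal-entropy} fixed point agrees on both sides: this is not merely a matter of the two equations sharing a solution set, but of the canonical maximal-entropy selection rule picking out the same state. Here it is automatic, since once $g$ is eliminated the two fixed-point equations are the very same equation; I expect this to be the place demanding the most care, as it is precisely the nonlinear maximal-entropy selection that distinguishes the D-CTC model from an ordinary trace.
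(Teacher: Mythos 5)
Your proposal is correct and follows essentially the same route as the paper's own proof: both reduce the claimed equality to testing $\sim$-equivalence against an arbitrary bipartite state with a spectator system, observe that the two Deutsch fixed-point equations coincide (the post-composed CR map dropping out by trace-preservation, the pre-composed one simply transporting the reduced input), and conclude that the common maximal-entropy fixed point $\tau = \tau'$ forces both outputs to agree. The paper presents this diagrammatically while you render it algebraically, but the logical structure --- including the observation that the maximal-entropy selection is unproblematic because the two equations are literally the same --- is identical.
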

\begin{proof}
	For the LHS we get:
	\begin{equation*}
		\tikzfigscale{0.8}{naturalityCR_proof}
	\end{equation*}
	where $\tau$ satisfies:
	\begin{equation*}
		\tikzfigscale{0.8}{tau} = \tikzfigscale{0.8}{naturalityCR_proof2}
	\end{equation*}
	For the RHS we get:
	\begin{equation*}
		\tikzfigscale{0.8}{naturalityCR_proofRHS}
	\end{equation*}
	where $\tau'$ satisfies the same condition as $\tau$:
	\begin{equation*}
		\tikzfigscale{0.8}{tauprime} = \tikzfigscale{0.8}{naturalityCR_fixedpoint} = \tikzfigscale{0.8}{naturalityCR_fixedpoint2}
	\end{equation*}
	Hence we have that $\tau = \tau'$, and hence both sides are equal for all bipartite states $\psi$.
\end{proof}

\begin{lemma}
\label{lemma:deutsch_strength}
	The super-operator defined in the D-CTC model satisfies the strength property:
	\[
		\tikzfigscale{0.8}{superposing} \hspace{2mm}=\hspace{2mm}\tikzfigscale{0.8}{superposing2}
	\]
\end{lemma}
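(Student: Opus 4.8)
The plan is to mirror the proof of naturality in Lemma~\ref{lemma:deutsch_naturality}: expand both sides of the claimed identity according to the D-CTC prescription against an arbitrary bipartite test state, and then show that the two fixed-point states produced coincide. Write $\Phi: \mathcal{H} \otimes \mathcal{C} \to \mathcal{K} \otimes \mathcal{C}$ for the interaction with the CTC, and let $g$ denote the CPTP map placed alongside it in the chronology-respecting region (the ``superposed'' system, which never touches the CV system $\mathcal{C}$). First I would unfold the left-hand side $\Xi(\Phi \otimes g)$, obtaining a diagram controlled by a fixed-point state $\tau$ on $\mathcal{C}$; and separately unfold the right-hand side, where the super-operator acts on $\Phi$ alone and produces its own fixed-point state $\tau'$, with $g$ tensored on afterwards.

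The crux is that the superposed system cannot influence the fixed-point equation on $\mathcal{C}$. The equation defining $\tau$ is obtained by feeding the test state together with $\tau$ through the combined map and discarding every chronology-respecting output---in particular, discarding the output carried by the superposed system. Because that system carries only a trace-preserving map $g$, discarding its output equals discarding its input (this is precisely the terminality of the graphical calculus for \Mix), so $g$ drops out entirely and the equation collapses to $\operatorname{Tr}_{\mathcal{K}}\big[\Phi(\,\cdot \otimes \tau)\big] = \tau$, which is exactly the fixed-point equation defining $\tau'$ on the right-hand side.

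Since $\tau$ and $\tau'$ are governed by identical fixed-point equations, they have the same set of fixed points and hence the same unique fixed point of maximal entropy, so $\tau = \tau'$. Substituting this common state back in, the map $g$ now sits on a wire disjoint from the loop and factors cleanly out of the remaining diagram, so the left-hand side equals $\Xi(\Phi) \otimes g$, which is the right-hand side. As this equality holds against an arbitrary bipartite test state, the two elementary morphisms lie in the same $\sim$-class, and the strength property holds in \DMix.

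The only genuinely load-bearing step is the claim that the maximal-entropy fixed points coincide, and I expect this to be the main obstacle to get exactly right. It rests entirely on $g$ being trace-preserving, which is what lets the discarding in the fixed-point equation annihilate the superposed system \emph{before} the fixed point is computed. It is worth double-checking that the maximal-entropy selection rule depends only on the $\mathcal{C}$-marginal equation and not on any ambient data, so that equal fixed-point sets really do force equal selected states; given the characterization of the maximal-entropy fixed point in Appendix~\ref{appendix:fixedPoint}, this should go through directly.
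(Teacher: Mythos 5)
Your proof is correct, but it takes a different route from the paper's. The paper's proof of this lemma is a two-step derivation from previously established results: by the definition of the tensor product in \DMix, the right-hand side $\Xi(\Phi)\otimes g$ unfolds as a sequential composition of the elementary morphism $\Xi(\Phi)\otimes\id{}$ with the embedded CPTP map $g$, and then naturality in the CR region (Lemma~\ref{lemma:deutsch_naturality}) pulls $g$ inside the super-operator, giving $\Xi(\Phi\otimes g)$. You instead prove the statement from first principles: expand both sides against an arbitrary bipartite test state, observe that trace-preservation of $g$ lets the discarding in the fixed-point equation absorb $g$, conclude that both sides are governed by the same fixed-point equation on $\mathcal{C}$ (hence the same maximal-entropy fixed point, which by the Appendix~\ref{appendix:fixedPoint} characterization depends only on the fixed-point set), and then factor $g$ out of the output diagram. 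This is essentially an inlining of the same mechanism the paper relies on---the paper's naturality lemma and tensor-product well-definedness are themselves proved by exactly this kind of fixed-point-equation coincidence, with trace-preservation playing the same role there---so the substance agrees; the difference is architectural. Your version buys self-containedness and makes explicit \emph{why} the superposed system cannot disturb the CTC (because discarding annihilates a trace-preserving map before the fixed point is computed), which is physically illuminating; the paper's version buys brevity and modularity, reusing the fixed-point analysis already done rather than repeating it. The one caveat you flag---that the maximal-entropy selection rule depends only on the $\mathcal{C}$-marginal fixed-point equation---is indeed the load-bearing step, and it holds: Deutsch's prescription selects the unique maximal-entropy element of the fixed-point set, so equal equations force equal selected states.
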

\begin{proof}

	Interestingly, the strength property is a consequence of the definition of the tensor product in $\DMix$ and of naturality in the CR region:
	\[
		\tikzfigscale{0.7}{superposing_proof} =  \tikzfigscale{0.7}{superposing_proof2} = \tikzfigscale{0.7}{superposing_proof3}
	\]
\end{proof}

\begin{lemma}
\label{lemma:deutsch_sliding}
	The super-operator defined in the D-CTC model satisfies the sliding property:
	\[
		\tikzfigscale{0.8}{sliding1} \hspace{2mm}=\hspace{2mm} \tikzfigscale{0.8}{sliding2}
	\]
\end{lemma}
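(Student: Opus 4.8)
The plan is to unfold both diagrams with Deutsch's prescription, exactly as in the proof of Lemma~\ref{lemma:deutsch_naturality}, and then reduce the equality to a statement about fixed points. Write the interaction as a CPTP map $\Phi\colon\mathcal{H}\otimes\mathcal{C}\to\mathcal{K}\otimes\mathcal{D}$ and the slid map as $g\colon\mathcal{D}\to\mathcal{C}$, so that the left-hand side is the super-operator applied to $(\id{\mathcal{K}}\otimes g)\circ\Phi$ and the right-hand side is the super-operator applied to $\Phi\circ(\id{\mathcal{H}}\otimes g)$. Fixing an input state $\rho$ (carried together with any passive ancilla needed to test equality in $\DMix$), I introduce the single-system CPTP map $F(\xi):=\operatorname{Tr}_{\mathcal{K}}[\Phi(\rho\otimes\xi)]$ sending states on $\mathcal{C}$ to states on $\mathcal{D}$. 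The left-hand fixed-point equation then reads $\tau=g(F(\tau))$, i.e.\ $\tau$ is a fixed point of $g\circ F$ on $\mathcal{C}$, while the right-hand one reads $\sigma=F(g(\sigma))$, i.e.\ $\sigma$ is a fixed point of $F\circ g$ on $\mathcal{D}$.

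The first, routine, step is to check that the two outputs agree whenever the fixed points are matched by $\tau=g(\sigma)$. Here I use that $g$ is trace-preserving to slide the partial trace across it, $\operatorname{Tr}_{\mathcal{C}}[(\id{\mathcal{K}}\otimes g)(X)]=\operatorname{Tr}_{\mathcal{D}}[X]$, so that both outputs collapse to the single expression $\operatorname{Tr}_{\mathcal{D}}[\Phi(\rho\otimes\tau)]$. Moreover $g$ and $F$ send fixed points to fixed points ($g\colon\mathrm{Fix}(F\circ g)\to\mathrm{Fix}(g\circ F)$ and $F$ the other way) and are mutually inverse there, so they restrict to mutually inverse affine isomorphisms between the two compact convex fixed-point sets. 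Thus, for corresponding fixed points, the two sides literally coincide.

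The real content — and the step I expect to be the main obstacle — is that Deutsch's prescription selects not an arbitrary fixed point but the one of \emph{maximal von Neumann entropy}, so I must show that this distinguished choice is respected by the correspondence, i.e.\ that the maximal-entropy fixed point $\tau^{*}$ of $g\circ F$ is exactly $g(\sigma^{*})$, where $\sigma^{*}$ is the maximal-entropy fixed point of $F\circ g$. I would handle this through the characterization of Appendix~\ref{appendix:fixedPoint}: the maximal-entropy fixed point is the unique state of the fixed-point set invariant under the symmetry group of that set, obtained by averaging any fixed point over the group. The key fact I would invoke is that, by the structure of CPTP fixed-point sets \cite{blumekohout2008}, every affine symmetry of such a set preserves von Neumann entropy; hence the entropy maximizer is the unique point fixed by the whole affine symmetry group (average a maximizer over the group and use strict concavity together with uniqueness). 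Since the affine isomorphism $g\colon\mathrm{Fix}(F\circ g)\to\mathrm{Fix}(g\circ F)$ conjugates the symmetry group of one set onto that of the other, it must carry the unique invariant state to the unique invariant state, giving $g(\sigma^{*})=\tau^{*}$. Combining this with the output computation of the previous paragraph, both sides evaluate to $\operatorname{Tr}_{\mathcal{D}}[\Phi(\rho\otimes\tau^{*})]$ for every $\rho$, which is exactly the claimed equality of morphisms in $\DMix$.
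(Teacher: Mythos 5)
Your first two paragraphs are sound and match the paper's own strategy exactly: unfold Deutsch's prescription, observe that $F$ and $g$ restrict to mutually inverse affine bijections between the compact convex sets $P=\mathrm{Fix}(g\circ F)$ and $Q=\mathrm{Fix}(F\circ g)$, and reduce sliding to the claim that these bijections match the maximal-entropy fixed points. The gap is the key fact you invoke in your third paragraph: it is \emph{not} true that every affine symmetry of a CPTP fixed-point set preserves von Neumann entropy, nor that the entropy maximizer is the unique point invariant under the affine symmetry group. By the Blume--Kohout structure theorem, a fixed-point set consists of states of the form $\bigoplus_k p_k\,\rho_k\otimes\omega_k$, with entropy $H(p)+\sum_k p_k\bigl(S(\rho_k)+S(\omega_k)\bigr)$; the term $\sum_k p_k S(\omega_k)$ depends on how the set is embedded in the ambient state space and is invisible to its affine structure. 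Concretely, on $\mathbb{C}^3$ let $\Pi=\ket{1}\bra{1}+\ket{2}\bra{2}$ and let $E(X)=\bra{0}X\ket{0}\,\ket{0}\bra{0}+\operatorname{Tr}[\Pi X]\,\Pi/2$, a measure-and-prepare channel. Its fixed set is the segment $K=\bigl\{\,p\ket{0}\bra{0}+(1-p)\,\Pi/2\,\bigr\}$, whose only nontrivial affine symmetry is the flip $p\mapsto 1-p$. The entropy along $K$ is $H(p,1-p)+(1-p)\log 2$, which is not flip-invariant: the unique flip-invariant point is $p=1/2$, while the entropy maximizer is $p=1/3$ (the maximally mixed state $\mathbb{1}/3$, which lies in $K$). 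So the averaging-plus-uniqueness argument collapses at precisely the step you identified as the crux.

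Worse, the gap cannot be patched while keeping this strategy, because the statement you reduced the lemma to is itself false. On $\mathbb{C}^4$ with $\Pi_1=\ket{0}\bra{0}+\ket{1}\bra{1}$ and $\Pi_2=\ket{2}\bra{2}+\ket{3}\bra{3}$, take the CPTP maps $F(X)=\operatorname{Tr}[\Pi_1X]\,\Pi_1/2+\operatorname{Tr}[\Pi_2X]\,\ket{2}\bra{2}$ and $g(Y)=\operatorname{Tr}[\Pi_1Y]\,\ket{0}\bra{0}+\operatorname{Tr}[\Pi_2Y]\,\Pi_2/2$. Then $P=\bigl\{\,p\ket{0}\bra{0}+(1-p)\,\Pi_2/2\,\bigr\}$ and $Q=\bigl\{\,q\,\Pi_1/2+(1-q)\ket{2}\bra{2}\,\bigr\}$, and $F,g$ restrict to the mutually inverse trace-distance isometries $q=p$; yet the maximal-entropy points are $p^{\ast}=1/3$ and $q^{\ast}=2/3$, so $F(\tau_P)\neq\tau_Q$ and $g(\tau_Q)\neq\tau_P$. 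Feeding this pair into the sliding scenario even produces an outright violation: taking $\Phi(X)=\operatorname{Tr}[\Pi_1X]\,\ket{1}\bra{1}\otimes\Pi_1/2+\operatorname{Tr}[\Pi_2X]\,\ket{2}\bra{2}\otimes\ket{2}\bra{2}$ (a CPTP map with $\operatorname{Tr}_{\mathcal{K}}\circ\Phi=F$, the $\mathcal{K}$ factor recording which block was measured), the CR outputs of the two sides are the block distributions of $\tau_P$ and of $g(\tau_Q)$, namely $\tfrac13\ket{1}\bra{1}+\tfrac23\ket{2}\bra{2}$ versus $\tfrac23\ket{1}\bra{1}+\tfrac13\ket{2}\bra{2}$. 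You should also note that this is the same point at which the paper's own proof is unsound: there the maximal-entropy fixed point is identified with the barycenter $\int_P\rho\,d\rho$, which \emph{is} transported by affine isometries but again is not the entropy maximizer (in the segment examples the barycenter sits at $p=1/2$). Any argument that transports the maximal-entropy point using only data invariant under affine isomorphism---symmetry groups, barycenters, trace-distance isometry---cannot succeed, because maximal entropy is not such an invariant.
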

\begin{proof}
	Out of the four properties defining time-travel super-operators, this is certainly the hardest (and most interesting) one to prove for the D-CTC model.

	Let $\mathcal{T}(\rho,\sigma)$ be the trace distance between two normalized state $\rho,\sigma$ on a $d$-dimensional Hilbert space $\mathcal{H}$:
	\[
		\mathcal{T}(\rho,\sigma)
		:=
		\dfrac{1}{2}\operatorname{Tr}[\sqrt{(\rho-\sigma)^\dagger(\rho-\sigma)}]
	\]
	The traced distance can also be expressed in terms of the family of eigenvalues $(\lambda_i)_{i=1}^d$ of the difference $\rho - \sigma$:
	\[
		\mathcal{T}(\rho,\sigma) = \dfrac{1}{2}\sum_{i}|\lambda_i|
	\]
	The trace distance endows the set of normalized states on $\mathcal{H}$ with a metric, and it is possible to show that CPTP maps are contractive with respect to this metric (see e.g. \cite{nielsen_chuang_book}, p. 406--407):
	\[
		\mathcal{T}(f(\rho),f(\sigma)) \leq \mathcal{T}(\rho,\sigma)
	\]
	Now let $f: \mathcal{A} \rightarrow \mathcal{B}$ and $g: \mathcal{B} \rightarrow \mathcal{A}$ be arbitrary CPTP maps, and write $P:=\suchthat{\rho}{(g \circ f)(\rho) = \rho}$ for the set of normalized states on $\mathcal{A}$ which are left fixed by $g \circ f$, and $Q:=\suchthat{\sigma}{(f \circ g)(\sigma) = \sigma} $ for the set of normalized states of $\mathcal{B}$ which are left fixed by $f \circ g$. 
	We wish to show that the $f$ and $g$ restrict to well-defined functions $f: P \rightarrow Q$ and $g: Q \rightarrow P$, and that these restrictions are mutual inverses, i.e. that $g \circ f = \id{P}$ and that $f \circ g = \id{Q}$. As a consequence, $P$ and $Q$ will be shown to be isometric via $f$ and $g$.
	Indeed, let $x \in P$ be a normalized state fixed by $g \circ f$:
	 \[
	 	\tikzfigscale{0.8}{proof_fixedpoint}
	 \]
	 Applying $f$ to the both side of the equation we obtain:
	 \[
	 	\tikzfigscale{0.8}{proof_fixedpoint2}
	 \]
	 Similarly, $g$ applied to a normalized state $y \in Q$ left fixed by $f \circ g$ yields:
	 \[
	 	\tikzfigscale{0.8}{proof_fixedpoint3}
	 \]
	 Taking the last two equations together yields the desired result.

	Now we consider the geometric definition of the state of maximum entropy given in Appendix \ref{appendix:fixedPoint}, so that the fixed points of maximum entropy for $g \circ f$ and $f \circ g$ respectively can be written as:
	\[
		\tau_P = \int_P \rho \ d\rho
	\]
	and
	\[
		\tau_Q = \int_Q \sigma \ d\sigma
	\]
	By linearity and continuity of $f$ we have:
	\[
		f(\tau_P) = f\left(\int_P \rho \  d\rho \right) = \int_P f(\rho) \  d\rho
	\]
	By the fact that $f$ is an isometry when restricted to $P$ we have:
	\[
		\int_P f(\rho) \ d\rho = \int_Q \rho' \ d\rho' = \tau_Q
	\]
	We can conclude that:
	\[
		f(\tau_P) = \tau_Q
	\]
	Analogously, we can conclude that:
	\[
		g(\tau_Q) = \tau_P
	\]
	Hence $f$ maps the point of maximal entropy in $P$ to the point of maximal entropy in $Q$, and conversely $g$ maps the point of maximal entropy in $Q$ to the point of maximal entropy in $P$.

	We can now proceed with the final part of this proof. Consider the two CPTP maps $f$ and $g$ given by:
	\[
		\tikzfigscale{0.8}{slidingmaps}
	\]
	Let $\tau_P$ be the fixed point of maximal entropy for $g \circ f$, and $\tau_Q$ be the fixed point of maximal entropy for $f \circ g$. Then we have:
	\[
		\tikzfigscale{0.8}{slidingmaps_conclusion}
	\]
	This concludes our proof.
\end{proof}

\begin{lemma}
\label{lemma:deutsch_vanishing}
	The super-operator defined in the D-CTC model satisfies the vanishing property:
	\[
		\tikzfigscale{0.8}{vanishing1} \hspace{2mm}=\hspace{2mm} \tikzfigscale{0.8}{vanishing2}
	\]
\end{lemma}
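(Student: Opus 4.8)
The plan is to unwind Deutsch's prescription in the special case where the CV system $\mathcal{C}$ is the tensor unit $\mathbb{C}$ and observe that everything trivialises. The vanishing property asserts precisely that applying the time-travel super-operator $\Xi^{\mathcal{K};\mathbb{C}}_{\mathcal{H};\mathbb{C}}$ to a morphism $\Phi : \mathcal{H} \otimes \mathbb{C} \to \mathcal{K} \otimes \mathbb{C}$ --- which, since $\mathbb{C}$ is the tensor unit, is nothing but a CPTP map $\Phi : \mathcal{H} \to \mathcal{K}$ --- returns $\Phi$ itself.

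First I would recall that computing $\Xi^{\mathcal{K};\mathbb{C}}_{\mathcal{H};\mathbb{C}}(\Phi)$ requires solving the D-CTC fixed-point equation on the CV system $\mathbb{C}$ and selecting the solution $\tau$ of maximal entropy. But the only normalized state on the tensor unit $\mathbb{C}$ is the scalar $1$; hence the set of normalized states is a single point, the fixed-point equation is satisfied trivially by $\tau = 1$, and this is vacuously the unique state of maximal entropy. This is exactly the observation already exploited in the proof of Lemma \ref{lemma:MixDMix_label}, where the triviality of fixed-point equations over $\mathbb{C}$ was used to show that $F$ preserves composition and identities.

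Plugging $\tau = 1$ back into the elementary morphism then collapses the CV wire to the identity on $\mathbb{C}$, which by the notational convention fixed in the proof of Lemma \ref{lemma:MixDMix_label} is precisely the scalar $1$. What remains is exactly $\Phi$, establishing the claimed equation diagrammatically.

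I expect no real obstacle here: this is by far the simplest of the four defining properties of a time-travel super-operator, and in fact it is a direct corollary of the triviality of fixed-point equations over the tensor unit that was already needed in the embedding lemma. The only point requiring minor care is the bookkeeping of scalars --- checking that closing the (now trivial) time loop genuinely reduces to the scalar $1$ rather than leaving a stray normalization factor --- but this is guaranteed by the normalization conventions adopted for the diagrammatic calculus.
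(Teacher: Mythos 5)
Your proposal is correct and matches the paper's argument: the paper proves this lemma simply by observing that it is a special case of Lemma \ref{lemma:MixDMix_label}, whose proof rests on exactly the triviality you identify---the only normalized state on the tensor unit $\mathbb{C}$ is the scalar $1$, so the Deutsch fixed-point equation is vacuous and the closed loop reduces to the scalar $1$. You have merely unpacked that same observation explicitly, which is fine.
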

\begin{proof}
	This is actually a special case of Lemma \ref{lemma:MixDMix_label}
\end{proof}

\setcounter{theorem:save}{\value{theorem_c}}

\setcounter{theorem_c}{\value{theorem:CVlocalSemantics}}
\begin{theorem}
	In the presence of a time-travel super-operator, morphisms defined by diagrams over CV-local framed causal graphs are well-defined. Conversely, any super-operator which yields well-defined diagrams over CV-local framed causal graphs (and respects the embedding of the original process theory) must satisfy the properties of a time-traveling super-operator.
\end{theorem}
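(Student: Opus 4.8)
The plan is to prove the two directions separately: the forward (soundness) direction amounts to establishing independence of the cut choices, while the converse reverse-engineers the four axioms from well-definedness together with functoriality of the semantics.

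For the forward direction, I would first observe that the only genuinely arbitrary choice in Definition~\ref{definition:morphismsOverCVlocal} is the edge at which each CTC is cut open; the underlying CR-graph semantics of Definition~\ref{definition:morphismsOverCR} is already known to be well-defined. So it suffices to show that moving a single cut point from one edge of a cycle to an adjacent edge leaves the resulting morphism unchanged, and then to iterate. Concretely, for a lone CTC whose interaction node carries $\Phi: \mathcal{H} \otimes \mathcal{C} \to \mathcal{K} \otimes \mathcal{C}$ and whose remaining degree-$2$ nodes carry CV-maps $g_1, \ldots, g_k$, cutting at different edges produces precisely the cyclic rearrangements of the composite of $\Phi$ with $g_k \circ \cdots \circ g_1$ distributed between the $\mathcal{C}$-input and $\mathcal{C}$-output legs; the sliding property~\eqref{equation:slidingproperty} is exactly the statement that $\Xi$ is invariant under this rearrangement. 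I would then lift this to the general case: when the interaction node is embedded in a larger CR graph, naturality in the CR region lets the sliding move commute past the CR-maps adjacent to the node, and when several cycles meet at one interaction node or a system passes by in parallel, strength lets the move be performed inside the ambient monoidal context. The vanishing property disposes of the degenerate case of a cycle carrying the tensor unit.

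For the converse, I would exhibit, for each of the four axioms, a small CV-local graph whose well-defined semantics forces that axiom. Sliding is the most direct: a single CTC with one extra CV-node $g$ admits two cuts---before and after $g$---and equality of the two computed morphisms is sliding verbatim. Vanishing follows from the requirement that the semantics respect the embedding $\textbf{C} \hookrightarrow \textbf{D}$: a cycle carrying the tensor unit, when cut, produces unit-typed boundary nodes, and respecting the embedding forces the super-operator to reduce to the embedded map. Naturality and strength, by contrast, I would extract from functoriality of the semantics: the graph consisting of a CR-node $h$ feeding the CR-input of an interaction node equals, as a composite of framed causal graphs, the composition of a pure CR-graph with a pure-CTC graph, so functoriality yields $\Xi(\Phi \circ (h \otimes \id_{\mathcal{C}})) = \Xi(\Phi) \circ h$; placing an ambient wire $\mathcal{E}$ in parallel and using monoidality yields strength in the same way. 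Since the CR-graph semantics can realize an arbitrary morphism of $\textbf{C}$ at any node, these test graphs range over all $\Phi$, $h$, $g$ and $\mathcal{E}$, so the axioms hold in full generality.

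The hard part will be the forward direction's passage from the single-cycle case to the general one. Reducing an arbitrary pair of cut-choices to a finite sequence of elementary adjacent-edge moves is a clean combinatorial argument, but verifying that each such move is licensed by sliding when the interaction node is genuinely entangled with a surrounding CR graph and with the other cycles sharing it is delicate: one must check that naturality and strength combine to let a sliding move, which is stated locally for a single CV leg, be carried out inside the full diagrammatic context without disturbing the rest. The remaining bookkeeping---that independent cuts of distinct cycles do not interfere, and that the joint super-operator application at a multi-cycle interaction node is insensitive to the order in which the cuts are reasoned about---should be routine once strength is in hand.
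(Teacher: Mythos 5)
Your proposal is correct in substance and invokes the same four axioms for the same purposes as the paper, but it organizes the forward direction genuinely differently. You prove cut-independence by induction on elementary adjacent-edge moves of the cut point, each licensed by sliding, with naturality and strength invoked to carry the local move out inside the ambient diagram---and you rightly flag that contextual verification as the delicate step. The paper sidesteps that step entirely with a one-shot normal-form argument: for each interaction node (treating $N=1$ without loss of generality, the general case being a composite), strength and naturality are applied once to gather \emph{all} CR-region morphisms, including the interaction node itself, into a single sub-diagram $\chi$, and sliding then pushes every CV-node morphism into a chain sitting at the output of $\chi$ inside the super-operator; any choice of cut produces this normal form up to a cyclic rotation of the CV chain, and sliding identifies all such rotations. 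Your chain-of-moves argument buys a more elementary, purely local picture; the normal form buys precisely the bookkeeping you were worried about, including the multi-cycle case, which the paper handles by gathering all cycles through the node into one joint super-operator application rather than reasoning about cuts one cycle at a time. On the converse the two proofs essentially coincide: sliding is forced by the two cuts of a cycle carrying one extra CV node, and vanishing by respecting the embedding, in both treatments. For naturality and strength, your explicit appeal to functoriality and monoidality of the semantics is the same requirement the paper phrases as two presentations of a graph ``needing to represent the same morphism''; making the assumption explicit is if anything a clarification, since cut-independence alone, without compositional consistency, would not let you move $h$ outside the super-operator, and this reliance on a compositional reading of ``well-defined'' should be stated as part of the hypothesis in either write-up.
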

\begin{proof}
Consider a time-travel super-operator $\Xi$. An arbitrary CV-local graph will have a finite number of vertices acting as interactions nodes $\{v_0, \ldots v_{N-1}\}$ and a finite number of cycles passing trough each interaction node. Without loss of generality, we will consider the case $N=1$; the general case is just a composition of individual interaction node cases. Regardless of how we decide to cut the cycles through $v_0$ open, strength and naturality allow us to gather all morphisms assigned to the vertices in the CR region--including the interaction nodes--in the same sub-diagram $\chi$. By the sliding property we can then move all the morphisms associated to CV region nodes (except the interaction node) to the another sub-diagram $CV$, at the output of sub-diagram $\chi$. This yields the following normal form:
\begin{equation}
	\tikzfigscale{0.8}{normal_form}
\end{equation}
This is clearly independent of the way in which we cut the cycles, as the relative position between the boxes in the CV region is preserved. As a consequence, a time-travel super-operator defines the same morphism for every possible way of cutting the same CV-local graph open into a CR graph. \\

Conversely, if $\Xi$ yields well-defined diagrams over causal graphs, it can be easily shown that $\Xi$ is a time-travel super-operator. To begin with, we have already seen that Sliding is necessary to ensure well-definition, because there is no canonical way to cut cycles open in arbitrary CV-local graphs. Furthermore, Strength and Naturality can both be derived by looking at the same CV-local graph:
\begin{equation}
	\tikzfigscale{0.8}{well_definedness}
\end{equation}
Strength follows from the fact that the two following graphs need to represent the same morphism, for any arbitrary set of morphisms assigned to vertices:
\begin{equation}
	\tikzfigscale{0.8}{well_definedness_strength}
\end{equation}
Naturality follows from the two possible ways one can cut open the cycles:
\begin{equation}
	\tikzfigscale{0.8}{well_definedness_naturality}
\end{equation}
Finally, Vanishing follows from the requirement that the super-operator respects the embedding of the original process theory.
\end{proof}

\setcounter{theorem_c}{\value{theorem:DeutschCannotDoAllCVgraphs}}
\begin{theorem}
	Consider a process theory with discarding maps and time travel. If the time-travel super-operator yields well-defined morphisms for diagrams over arbitrary framed causal graphs, then it must satisfy the yanking property.
\end{theorem}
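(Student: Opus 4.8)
Recall that the yanking property of equation~\ref{equation:yanking} is the statement that closing the symmetry into a loop returns a bare wire, i.e. that $\Xi_{\mathcal{H};\mathcal{H}}^{\mathcal{H};\mathcal{H}}(\sigma_{\mathcal{H},\mathcal{H}}) = \id{\mathcal{H}}$, where $\sigma$ is the swap that routes the chronology-respecting input into the time-loop and the time-loop back out to the chronology-respecting output. The plan is to mirror the \emph{necessity} half of the preceding theorem on CV-local semantics: just as Strength and Naturality were forced there by demanding that small CV-local graphs be interpreted consistently, I will exhibit a single framed causal graph that is \emph{not} CV-local and whose consistent interpretation forces exactly yanking. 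Because the graph is non-CV-local, its interpretation is only licensed by the hypothesis that the super-operator yields well-defined morphisms over \emph{arbitrary} framed causal graphs, so the hypothesis is genuinely used.

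First I would build the graph $\Gamma$. Take a single simple cycle threading two interaction nodes $v_1$ and $v_2$, with a chronology-respecting input wire entering $v_1$ and a chronology-respecting output wire leaving $v_2$; this is precisely the prototypical configuration ruled out by CV-locality, since the cycle now passes through two nodes of degree greater than $2$. To each of $v_1,v_2$ I assign a symmetry, arranged so that the incoming CR strand is swapped onto the time-loop at $v_1$, carried around, and swapped back onto the CR output at $v_2$. The discarding maps of the ambient terminal theory (together with their unit-to-object duals) are attached to the spare legs of the two symmetries so that the valences match and so that, after the loop is closed, the auxiliary CV factor they create can be eliminated through the vanishing property of equation~\ref{equation:vanishingproperty}.

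Next I would read off the two presentations of the morphism defined by $\Gamma$, which the well-definedness hypothesis forces to coincide. On one hand, following Definition~\ref{definition:morphismsOverCVlocal} I cut the cycle at the edge adjacent to the interaction and gather all CR boxes into a single sub-diagram using Strength and Naturality, exactly as in the normal-form argument of the CV-local theorem; what remains is a symmetry whose loop is closed by a single application of $\Xi$, namely $\Xi(\sigma)$. On the other hand, $\Gamma$ presents a morphism in which the CR strand merely takes a detour through the time-loop and returns unchanged: sliding the second symmetry all the way around the cycle past the first (using the sliding property of equation~\ref{equation:slidingproperty}) and discarding the decoupled auxiliary factor via vanishing leaves the CR input wired straight to the CR output, i.e. $\id{\mathcal{H}}$. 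Equating the two presentations yields $\Xi(\sigma)=\id{\mathcal{H}}$, which is yanking; this is the general, process-theoretic counterpart of the concrete failure recorded for the D-CTC model in Inequality~\ref{equation:failureOfTomography}.

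The hard part will be the design of $\Gamma$ and its decoration, and in particular arguing that the two presentations differ by \emph{yanking} rather than by the weaker Sliding that a two-node cycle naively supplies. The subtlety is that the normal-form reduction of the CV-local theorem is only available when each cycle carries a single interaction node; with two interaction nodes on one cycle, sliding one past the other around the full loop is precisely the move that the other three axioms cannot perform, and it is this obstruction that the non-CV-local graph exposes. A secondary obstacle is to verify that the reduction of the second presentation to $\id{\mathcal{H}}$ invokes only Sliding, Vanishing and discarding --- never yanking itself --- so that the argument is not circular, and to check that the framing of $\Gamma$ is consistent under the cut so that the two presentations genuinely refer to the same framed causal graph.
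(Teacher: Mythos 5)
Your proposal is correct and follows essentially the same route as the paper's own proof: the paper also builds a non-CV-local two-node cycle in which one node discards the incoming loop strand while routing the CR input onto the loop, and the other routes the loop onto the CR output while injecting a normalised state $\rho$, so that one way of cutting yields $\Xi(\sigma)$ while the other reduces to the identity using only strength, sliding, vanishing and normalisation of $\rho$. Equating the two cuts, as forced by well-definedness over arbitrary framed causal graphs, is precisely how the paper derives the yanking property, including your (correct) observation that the argument never invokes yanking itself and genuinely needs the non-CV-local hypothesis.
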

\begin{proof}
Let $\rho$ be a normalised state. Strength and normalisation of $\rho$ imply that:
		\begin{equation}
		\label{eqn:equation_yanking}
		\tikzfigscale{0.8}{proof_yanking}
	\end{equation}
To show that the yanking property holds we can consider the following causal graph:
	\begin{equation}
		\tikzfigscale{0.8}{causal_graph}
	\end{equation}
Cutting the cycle open leads to the following two graphs:
	\begin{equation}
	\label{eqn:equivalence_graphs}
		\tikzfigscale{0.8}{simple_time_travel}
	\end{equation}
Assigning to each vertex the following morphisms:
	\begin{equation}
		f \colon = \tikzfigscale{0.8}{proof_unitality_f} \ \
		g \colon = \tikzfigscale{0.8}{proof_unitality_g}
	\end{equation}
Following from Equation (\ref{eqn:equation_yanking}), by the equivalence of graphs described in Equation (\ref{eqn:equivalence_graphs}) and strength:
			\begin{equation}
		\tikzfigscale{0.8}{proof_yanking2}
	\end{equation}
By sliding and vanishing we can conclude that:
			\begin{equation}
		\tikzfigscale{0.8}{proof_yanking3}
	\end{equation}
\end{proof}

\end{document}